\newtheorem{theorem}{Theorem}[section]
\newtheorem{lemma}[theorem]{Lemma}
\newtheorem{corollary}[theorem]{Corollary}
\newtheorem{definition}[theorem]{Definition}
\newtheorem{example}[theorem]{Example}
\definecolor{black}{RGB}{0,0,0}
\definecolor{orange}{RGB}{230,159,0}
\definecolor{skyblue}{RGB}{86,180,233}
\definecolor{bluishgreen}{RGB}{0,158,115}
\definecolor{yellow}{RGB}{240,228,66}
\definecolor{blue}{RGB}{0,114,178}
\definecolor{vermillion}{RGB}{213,94,0}
\definecolor{reddishpurple}{RGB}{204,121,167}
\definecolor{cugold}{RGB}{207,184,124}
\newcommand{\projection}[2]{\mathrm{proj}_{#1}\left(#2\right)}
\newcommand{\interior}[1]{\mathrm{int}
{\left(#1\right)}}
\newcommand{\conv}[1]{\mathrm{conv}{\left(#1\right)}}
\newcommand{\Q}{\mathbb{Q}}
\newcommand{\R}{\mathbb{R}}
\DeclareMathOperator*{\argmax}{arg\,max}
\title{Linear Production Games with Non-transferable Utilities}
\author[1]{J. Carlos Mart\'inez Mori}
\affil[1]{Department of Mathematical and Statistical Sciences, University of Colorado Denver, carlos.martinezmori@ucdenver.edu}
\author[2]{Alejandro Toriello}
\affil[2]{H. Milton Stewart School of Industrial and Systems Engineering, Georgia Institute of Technology, atoriello@isye.gatech.edu}
\begin{document}
\maketitle

\begin{abstract}
We introduce non-transferable utility linear production (NTU LP) games, a non-transferable utility analogue of classical linear production games, as a framework for the study of cooperative behavior in the production or establishment of public goods with pooled resources.
NTU LP games combine the game-theoretic tensions inherent in public decision-making with the modeling flexibility of linear programming.
We derive structural properties regarding the non-emptiness, representability and complexity of the core, a solution concept that models the viability of cooperation.
In particular, we provide fairly general sufficient conditions under which the core of an NTU LP game is guaranteed to be non-empty, prove that determining membership in the core is co-NP-complete, and develop a cutting plane algorithm to optimize various social welfare objectives subject to core membership.
We apply these results in a data-driven case study on service plan optimization for the Chicago bus system.
As our study illustrates, cooperation is necessary for the successful deployment of transportation service plans and similar public goods, but it may also have adverse or counterintuitive distributive implications.
\end{abstract}

\section{Introduction}
\label{sec: introduction}

In this paper, we introduce \emph{non-transferable utility linear production} (NTU LP) games, a non-transferable utility analogue of \citet{owen1975core}'s linear production games.
In public decision-making environments, it can be challenging to reconcile strategic goals (e.g., system efficiency, equity, sustainability) with individual stakeholder interests.
This is exacerbated by social inequality and the ensuing power dynamics, which can sway public decision-making in favor of particular groups.
Our goal in this work is to understand these power dynamics, and to navigate them effectively in the design of public goods that advance strategic goals for policy and infrastructure.

We are particularly concerned with transportation planning, which involves the design and implementation of systems for the movement of goods and people.
In the public sector, it is a collaborative process that involves a wide variety of stakeholders: government, businesses, advocates, and users.
As such, the success of a transportation planning process is critically dependent on its ability to garner sufficient support from each of these groups.
However, as we illustrate in Section~\ref{sec: motivating example}, disparate stakeholder interests on the use of a common resource pool can bring about game-theoretic tensions that complicate decision-making.

To address this kind of difficulty, we combine NTU with LP.
Unlike private goods, which are excludable and whose distribution is typically determined by the principle of free exchange, public goods are non-excludable.
Moreover, certain public goods have a distinct social meaning, as does the transportation good defined as accessibility, which justifies their allocation through a distributive principle different from free exchange~\citep{walzer1983spheres,martens2012justice}.
This makes NTU especially relevant for studying the strategic elements that impact the distribution of a public good.
And in analogy to \citet{owen1975core}, combining it with LP brings in the modeling flexibility and algorithmic techniques of mathematical programming.

\subsection{Motivating Example: A Ridership versus Coverage Dilemma}
\label{sec: motivating example}

Consider a public transit agency with three riders, $1$, $2$, and $3$, and two bus lines, \textcolor{orange}{A} and \textcolor{blue}{B}, as depicted in Figure~\ref{fig: dilemma}.
Riders $2$ and $3$ live and work downtown, and can be served by either line.
On the other hand, rider $1$ lives and works in suburbs far from downtown (in opposite ends of the city), and can only be served by the \textcolor{orange}{A} line.
Because of its coverage-oriented design, the \textcolor{orange}{A} line is significantly longer than the \textcolor{blue}{B} line.
Therefore, for any fixed target frequency, it is much costlier to operate the \textcolor{orange}{A} line than the \textcolor{blue}{B} line.

\begin{figure}[ht]
    \centering
    \caption{
        The \textcolor{orange}{A} line is long and shown in orange, whereas the \textcolor{blue}{B} line is short and shown in blue.
    }
    \label{fig: dilemma}
    \includegraphics[trim={5cm 12.5cm 5cm 12.5cm},clip,width=0.6\linewidth]{./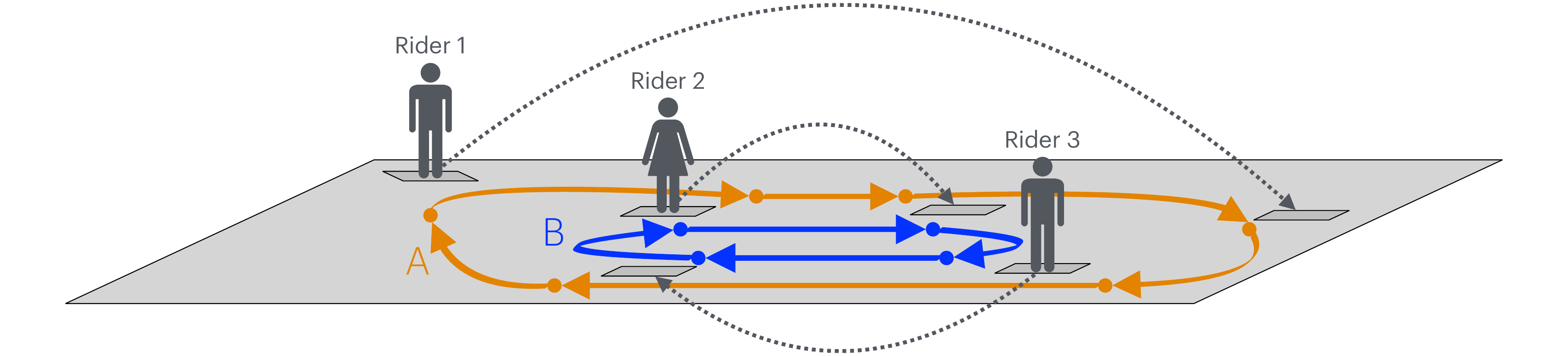}
\end{figure}

The agency funds its operations strictly through fare-box recovery\textemdash no external funding is available.
Each rider is charged a flat fare of \$1.00.
Thus, the agency must decide how to distribute its \$3.00 budget among its two lines.
One option is to not operate the \textcolor{blue}{B} line at all, and to dedicate its entire budget to the operation of the \textcolor{orange}{A} line.
All riders would be served in this scenario, albeit possibly at an unsatisfactory frequency.
This is nevertheless the best-case scenario for rider $1$, whose service would be partly cross-subsidized by the participation of riders $2$ and $3$.

However, this service plan might not be a viable alternative after all.
In particular, both riders $2$ and $3$ could be served by the \textcolor{blue}{B} line alone, and with their joint \$2.00 budget they could potentially operate it independently at a high frequency, which is crucial in transit adoption \citep{walker2024human}.
Therefore, together they could form a credible threat against any service plan that goes too far in neglecting the operation of the \textcolor{blue}{B} line.

Why would the transit agency be interested in operating the A line alone when its main ridership base, i.e., riders like 2 and 3, might perceive this as inefficient, unpopular, and unfair?
The answer to this question depends on whether we evaluate transit by its ``productivity'' or by its role as a social service \citep{walker2024human}.
An agency that focuses on compatible riders in high-density areas unlocks operational economies of scale and boosts the economic activity of its service region.
Thus, one could justify poor service for riders in low-density areas, like rider 1 in the example, based on the imbalance between what it costs to serve them and how much they contribute to the agency's budget.
Similarly, one could justify increasing the fare for these riders through a distance-based calculation.
These arguments are much less compelling if the explicit role of a transit agency is that of a social service; one that covers the basic transportation needs of all individuals but especially of those who are disadvantaged, for reasons including but not limited to their car-ownership status, ability to drive, poverty, and disabilities.
In this case, a transit agency is judged by how good a service it provides to those who most need it.
In practice, transit agencies typically develop their service plan by apportioning different percentages of their total budget toward these two, often conflicting design goals.

The example highlights tensions present in many real-world transit systems.
In many college towns across the United States, local transit agencies struggle to balance the operation of community-oriented routes alongside service agreements with university partners, as in Ithaca, NY \citep{dougherty2024tcat}.
In other college towns, this has resulted in the establishment of independent, university-operated shuttle systems, e.g.\ Tuscaloosa, AL \citep{caddell2005plans}, which benefit university-affiliated riders but represent a missed funding opportunity for local transit authorities.
In large cities, this tension has been reflected in failed transit-funding referenda, e.g., Nashville, TN \citep{transit2019derailed} or Atlanta, GA \citep{robinson2024cobb}, with lasting consequences for policy making due to the difficulty of revisiting failed proposals.
In certain extreme cases, broader tensions over municipal funding have led to entire areas threatening to or successfully splitting off, e.g., the suburb of Buckhead, which came close to seceding from the city of Atlanta \citep{mock2023how}, and the city of St.\ George, LA, which separated from Baton Rouge \citep{rojas2024louisiana}.
The social implications of this problem have only grown after decades of increasingly suburbanized poverty \citep{howell2014racial} and its negative effects on transportation equity \citep{kramer2018unaffordable}, especially following the COVID-19 pandemic \citep{kneebone2023post}.

\subsection{Contributions}
\label{sec: contributions}

The tension illustrated in Section \ref{sec: motivating example} speaks to a fundamental difficulty with designing public goods, including but not limited to transit, that take on a social service mission while trying to maintain broad popular support.
Therefore, in this paper, we consider the design of such goods from a general mathematical programming perspective.
We introduce \emph{non-transferable utility linear production} (NTU LP) games, a non-transferable utility analogue of \citet{owen1975core}'s linear production games, as a framework for the study of cooperative behavior in the production or establishment of public goods with pooled resources.
Our contributions are as follows:
\begin{enumerate}
    \item 
    We derive structural properties regarding the non-emptiness, representability, and complexity of the core, one of the main solution concepts used to determine the viability of cooperation.
    
    \item
    We develop and implement a cutting plane algorithm that accommodates the optimization of various notions of social welfare subject to core membership, which represents cooperative stability constraints.
    
    \item 
    We illustrate the potentially adverse and counterintuitive distributive implications of cooperation through a data-driven case study motivated by the example in Section~\ref{sec: motivating example}.
\end{enumerate}
Mitigating such consequences further motivates the outlook of this research.

The remainder of this paper is organized as follows.
In Section~\ref{sec: related work} we position our contributions within the existing literature.
In Section~\ref{sec: planning and cooperation} we introduce our framework and derive our analytical and algorithmic results.
In Section~\ref{sec: frequency setting in the chicago bus system} we present our data-driven implementation.
Lastly, we conclude in Section~\ref{sec: conclusion} with a discussion of future work.

\section{Related Work}
\label{sec: related work}

Cooperative game theory studies cooperative behavior among disparate agents who can nevertheless negotiate binding agreements.
A common assumption in much (but not all) of this theory is that of transferable utilities (TU): there is some divisible, tradeable commodity\textemdash typically money\textemdash that all players value linearly.
In this way, the outcomes of a TU game can be measured with respect to the total amount of utility they produce.
Formally, a (coalitional) TU game on a set of players $N$ is specified by a characteristic function $v: 2^N \rightarrow \mathbb{R}$ with $v(\emptyset) = 0$.
For any \emph{coalition} $\emptyset \neq S \subseteq N$, $v(S)$ represents the total amount of utility available to the members of $S$ through their exclusive and coordinated action;
\citet{vonneumann1944theory} derived TU games in coalitional form from TU games in strategic form.

The main objects of study in TU games are the payoff vectors.
A solution concept associates to each TU game a subset of payoff vectors that satisfy a given set of axioms; these underpin the credibility of a payoff vector as a potential outcome of the game.
Different combinations of axioms constitute different solution concepts, many of which are thoroughly detailed in \citet{peleg2007introduction}.
In this work we adopt the \emph{core} \citep{gillies59} as our solution concept, arguably the most widely used solution concept in operations research (OR).
A payoff vector $u \in \mathbb{R}^N$ is feasible if $\sum_{i \in N} u_i \leq v(N)$.
A coalition $\emptyset \neq S \subseteq N$ can improve on $u$ if and only if $v(S) > \sum_{i \in S} u_i$.
Finally, $u$ is said to be in the core if it is feasible and stable, that is $\sum_{i \in S} u_i \geq v(S)$ for all $\emptyset \neq S \subseteq N$.
If the core is non-empty, this suggests that cooperation among all players in $N$ is plausible.
\citet{bondareva1963some} and \citet{shapley1967balanced} characterized TU games with non-empty cores:
a collection of coalitions $\mathcal{S} \subseteq 2^N \setminus \{\emptyset\}$ is balanced if there exist non-negative weights $\lambda^S \geq 0$ for $S \in \mathcal{S}$ such that $\sum_{S \in \mathcal{S}: i \in S} \lambda^S = 1$ for all $i \in N$.
They termed a TU game $v$ as balanced if $\sum_{S \in \mathcal{S}} \lambda^S v(S) \leq v(N)$ for every balanced collection $\mathcal{S}$ with balancing weights $\{\lambda^S\}_{S \in \mathcal{S}}$, and showed that a TU game has a non-empty core if and only if it is balanced.

\citet{owen1975core} introduced linear production (LP) games, a special class of TU games derived from linear programs, and therefore one with a broad range of applications in OR.
In these games, players additively pool their individually endowed resources to produce goods that can be sold at given prices.
Therefore, it is in the interest of any coalition to produce an assortment of goods that maximizes its sales revenue.
Owen showed that these games are balanced, which by the results of \citet{bondareva1963some} and \citet{shapley1967balanced} implies that their core is non-empty.
Crucially, Owen also recovered this result algorithmically through linear programming duality\textemdash thus unlocking its broad practical impact.
However, he also showed that not all points in the core of a TU LP game can be obtained in this way.
In fact, \citet{fang2001membership} showed that, in general, testing membership in the core of a TU LP game (more specifically, of a flow game) is co-NP-complete.

TU LP games have been studied in a variety of OR applications, including network flows \citep{gui2008dual,markakis2003core} and inventory management \citep{chen2009stoch,chen2016duality,toriello2014dynamic,toriello2017dynamic}.
They are also applicable when the LP game approximates a more complex game with discrete or non-convex elements, such as in facility location \citep{goemans2004cooperative} or vehicle routing \citep{toriello2013uhan}.

\citet{aumann1960von} introduced non-transferable utility (NTU) games as a generalization of TU games in which players may differ in their outcome valuations.
NTU theory is greatly complicated by the fact that the outcomes of a game are no longer measured by real numbers, but by subsets of $\mathbb{R}^N$.
Nevertheless, \citet{scarf1967core}, \citet{billera1970some}, and \citet{shapley1973balanced} showed that balancedness conditions analogous to those of \citet{bondareva1963some} and \citet{shapley1967balanced} for the case of TU games are sufficient, though not necessary \citep{billera1970some}, for the non-emptiness of the core of an NTU game.
Unfortunately, unlike the duality-based results for TU games, these results are not algorithmic and rely on topological arguments.

\section{Non-Transferable Utility Linear Production Games}
\label{sec: planning and cooperation}

In Section~\ref{sec: mathematical framework} we develop the framework of NTU LP games.
In Section~\ref{sec: structural results} we derive structural results regarding the existence, representability, and complexity of the main concept of equilibrium.
In Section~\ref{sec: algorithmic techniques} we develop algorithmic techniques based on mathematical programming.
We refer the reader to Table~\ref{table: notation} for a summary of the notation used throughout this section.

\begin{table}
\centering
\begin{tabular}{@{}l@{\quad}l@{}}
\hline
Symbol & Description                           \\ \hline
$N$    & Set of players, indexed by $i$ \\ 
$J$    & Set of public goods, indexed by $j$ \\ 
$K$    & Set of resources, indexed by $k$ \\ 
$A \in \Q^{K \times J}$    & Production matrix, where $a_{kj}$ is the amount of $k$ needed per unit of $j$ \\ 
$b^i \in \Q^K$  & Resource endowment of $i$ \\ 
$v^i \in \Q^J$  & Good valuation of $i$ \\ 
$X(S) \subseteq \R^J$  & Design space of $\emptyset \neq S \subseteq N$, refer to \eqref{eq: X(S)} \\ 
$Z(S) \subseteq \R^J \times \R^N$  & Design-utility space of $\emptyset \neq S \subseteq N$, refer to \eqref{eq: Z(S)} \\ 
$U(S) \subseteq \R^N$  & Utility space of $\emptyset \neq S \subseteq N$, refer to \eqref{eq: U(S)}  \\ 
$C(N) \subseteq \R^N$  & Core of the game, refer to \eqref{eq: C(N)}  \\ \hline
\end{tabular}
\caption{Summary of notation.}
\label{table: notation}
\end{table}

\subsection{Mathematical Framework}
\label{sec: mathematical framework}

Let $N$, $K$ and $J$ respectively be a finite set of players, resources, and public goods.
Let $A \in \Q^{K \times J}$ be a \emph{production matrix}, where $a_{kj}$ is the amount of resource $k \in K$ needed per unit of good $j \in J$.
Each player $i \in N$ has a fixed resource endowment vector $b^i \in \Q^K$.

The role of the public decision-maker is to administer the collective societal resources $b(N) = \sum_{i \in N} b^i$ to advance their strategic planning goals.
Formally, their design space is given by 
\begin{equation}
\label{eq: X(N)}
    X(N) \coloneqq \{x \in \R_{\geq 0}^J : Ax \leq b(N) \}.
\end{equation}
In turn, they are interested in implementing some design $x^* \in \argmax_{x \in X(N)} g(x)$, where $g: \R^J \rightarrow \R$ is some objective function that reflects their strategic planning goals.
We further assume the players have linear valuations for the goods produced.
That is, each player $i \in N$ has a fixed good valuation vector $v^i \in \Q^J$, so that their utility for a design $x \in \Q^J$ is given by
\begin{equation}
\label{eq: u_i}
    u_i: x \mapsto \left(v^i\right)^T x.
\end{equation}
In light of \eqref{eq: u_i}, the objective function $g$ might capture one of the various conceptualizations of fairness and social welfare~\citep{chen2023guide}.

However, the prospective implementation of $x^*$ need not be well-received by all players.
Suppose there exist some \emph{coalition} $\emptyset \neq S \subseteq N$ with collective resources $b(S) = \sum_{i \in S} b^i$ and some design
\begin{equation}
\label{eq: X(S)}
    x^S 
    \in X(S) 
    \coloneqq \{x \in \R_{\geq 0}^J : Ax \leq b(S) \},
\end{equation}
available to the members of $S$ through their unilateral cooperation, such that
\begin{equation}
\label{eq: blocking}
    u_i^S = (v^i)^T x^S > (v^i)^T x^* = u_i^*, \quad \forall i \in S.
\end{equation}
Then, the members of $S$ would have an incentive to \emph{block} the implementation of $x^*$ in favor of their  implementation of $x^S$.
An effective public decision-maker must optimize for their strategic planning goals while averting this kind of implementation failure.

The \emph{core} of an NTU LP game is the set of utility allocations for which blocking in the sense of \eqref{eq: blocking} is not possible.
Formally, for any coalition $\emptyset \neq S \subseteq N$, the utility maps \eqref{eq: u_i} for players $i \in S$ transform their design space \eqref{eq: X(S)} into their design-utility space
\begin{equation}
\label{eq: Z(S)}
    Z(S) 
    \coloneqq 
    \left\{ (x, u) \in \R^J \times \R^N : 
    \begin{array}{rlc}
        x &\in X(S) &  \\
        u_i &\leq {\left(v^i\right)}^Tx, & \forall i \in S
    \end{array}
    \right\},
\end{equation}
where the inequalities ensure the set of utilities is downward closed; a standard assumption in the NTU literature \citep[Definition 11.3.1]{peleg2007introduction}.
In this way, players are assumed to be free to dispose of their utility. 
We recover their utility space as 
\begin{equation}
\label{eq: U(S)}
    U(S) 
    \coloneqq \textrm{proj}_u \left( Z(S) \right),
\end{equation}
and their design space as $X(S) = \textrm{proj}_x \left( Z(S) \right)$.
Note that \eqref{eq: Z(S)} and \eqref{eq: U(S)} are cylindrical along the coordinates $i \in N \setminus S$: if $u \in U(S)$ and $u' \in \mathbb{R}^N$ satisfies $u_i = u_i'$ for all $i \in S$, then $u' \in U(S)$.
The collection $\{U(S)\}_{\emptyset \neq S \subseteq N}$ of utility spaces given by \eqref{eq: U(S)}, which are compactly represented by $A$, $\{b^i\}_{i \in N}$, and $\{v^i\}_{i \in N}$, define the NTU LP game.
Its core is defined as follows, where $\interior{\cdot}$ is the interior operator.
\begin{definition}
\label{def: core}
The core of the NTU LP game $\{U(S)\}_{\emptyset \neq S \subseteq N}$ given by \eqref{eq: U(S)} is 
\begin{equation}
\label{eq: C(N)}
    C(N) \coloneqq U(N) \setminus \bigcup_{\emptyset \neq S \subseteq N} \interior{U(S)}.
\end{equation}
\end{definition}
In other words, since $\interior{U(S)}$ is downward closed, it is interpreted as the set of utility allocations coalition $\emptyset \neq S \subseteq N$ could object to based on the availability of a strongly dominating alternative.
Therefore, $C(N)$ is the set of utility allocations available to the grand coalition $N$ that no coalition $\emptyset \neq S \subseteq N$ could object to.
An effective public decision-maker proposes the implementation of some design $x^* \in X(N)$ that optimizes their strategic planning goals, embedded in the choice of objective function $g$, \emph{subject to the requirement that its corresponding utility allocation satisfies $u^* \in C(N)$.}
Note that $g$ does not itself shape the set $C(N)$ of core-stable solutions; it instead shapes the subset of these that align most closely with the decision-maker's goals. 
This motivates two fundamental questions: is this possible; and if so, can it be achieved efficiently?

The remainder of this work deals with these questions relative to Definition~\ref{def: core}.
The results in Section~\ref{sec: structural results} are concerned with the structure of $C(N)$ itself and are therefore agnostic to the choice of objective function $g$.
Conversely, in Section~\ref{sec: algorithmic techniques}, we develop a cutting plane algorithm to optimize linear objective functions $g$ over extended formulations of the core; this in particular includes the optimization of various non-linear social welfare functions subject to core membership.

Before continuing, we note that while defining $C(N)$ based on the absence of strongly dominating objections (i.e., as in \eqref{eq: C(N)}) is standard in the NTU literature \citep[Chapter 12]{peleg2007introduction}, other variants of core-like stability are possible.
For example, one may impose cooperation structures that, due to exogenous factors (e.g., sociological, geographical, regulatory), restrict the set of coalitions that can negotiate \citep{myerson1977graphs,myerson1980conference}; this amounts to restricting the coalitions $\emptyset \neq S \subseteq N$ appearing in the union term in \eqref{eq: C(N)}.
Similarly, one may capture additional difficulties or penalties associated with seceding from the grand coalition by requiring coalitions attain a certain level of objection before becoming blocking coalitions~\citep{shapley1966quasi, schulz2010sharing}; in \eqref{eq: C(N)}, this amounts to replacing $\interior{U(S)}$ with the smaller set obtained by subtracting a given $\epsilon > 0$ coordinate-wise from the utility outcomes available to each coalition $\emptyset \neq S \subsetneq N$.
Conversely, one may capture additional benefits associated with remaining in the grand coalition, and in particular exogenous subsidies that encourage cooperation, by augmenting the grand coalition's budget $b(N)$ in \eqref{eq: X(N)} through a multiplicative factor or additive term representing proportional or fixed subsidies, respectively.
In short, the wealth of core-like variants prevalent in the TU literature may be similarly adopted in our setting by adjusting \eqref{eq: Z(S)} or \eqref{eq: C(N)} as necessary.

\subsection{Structural Results}
\label{sec: structural results}

We assume throughout that the sets $X(\{i\})$ are non-empty and bounded for all players $i \in N$.
In turn, this implies that the sets $X(S)$ are non-empty and bounded for all coalitions $\emptyset \neq S \subseteq N$.
For example, the non-emptiness assumptions hold if $b^i \in \Q_{\geq 0}^{K}$ for all $i \in N$.

\subsubsection{Non-Emptiness}
\label{sec: non-emptiness}

Our first result is a set of sufficient conditions for $C(N) \neq \emptyset$ in the special case of NTU LP games.
\citet{scarf1967core} established sufficient conditions (though not necessary, see \citet{billera1970some}) for the core of a (general) NTU game to be non-empty.
Scarf's conditions build on the notion of \emph{balanced} games, which were originally studied by \citet{bondareva1963some} and \citet{shapley1967balanced} in the context of TU games.
A collection of coalitions $\mathcal{S} \subseteq 2^N \setminus \{\emptyset\}$ is balanced if there exist non-negative weights $\lambda^S \geq 0$ for $S \in \mathcal{S}$ such that $\sum_{S \in \mathcal{S}: i \in S} \lambda^S = 1$ for all $i \in N$.
Note that the set of balanced collections generalizes the set of set partitions.
Scarf termed a (general) NTU game $\{U(S)\}_{\emptyset \neq S \subseteq N}$ as balanced if
\begin{equation}
\label{eq: balanced}
    \bigcap_{S \in \mathcal{S}} U(S) \subseteq U(N)
\end{equation}
for any balanced collection $\mathcal{S} \subseteq 2^N \setminus \{\emptyset\}$.
Since the sets $\{U(S)\}_{\emptyset \neq S \subseteq N}$ are cylindrical, if $\mathcal{S}$ is a set partition, then \eqref{eq: balanced} requires that any utility outcome independently available to a coalition $S \in \mathcal{S}$ is also available to $S$ if they join the grand coalition $N$, i.e., $S$ can only benefit from joining $N$.
The following holds if \eqref{eq: balanced} is more generally required for all balanced collections.
\begin{theorem}[\citeauthor{scarf1967core},\citeyear{scarf1967core}]
\label{theorem: scarf}
If $\{U(S)\}_{\emptyset \neq S \subseteq N}$ is a balanced NTU game, then $C(N) \neq \emptyset$.
\end{theorem}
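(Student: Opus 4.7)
The plan is to follow Scarf's own approach and deduce the theorem from a combinatorial fixed-point-style result (Scarf's lemma), which plays in the ordinal/NTU setting the role that LP duality plays for TU games in the Bondareva--Shapley argument. First, I would perform standard normalizations. Since each $X(S)$ is nonempty and bounded, each $U(S)$ is bounded above in its active coordinates; translating if necessary, one may take $U(\{i\}) = \{u \in \R^N : u_i \le 0\}$ for every singleton. Restrict attention to a large compact box $B \subset \R^N$ containing all relevant allocations, and replace each $U(S)$ by its comprehensive closure in the $S$-coordinates---this does not alter $C(N)$, because the $U(S)$ are already cylindrical along $N \setminus S$, so the downward closure only adds allocations that are already dominated.

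The main step is to label vertices of a fine simplicial subdivision of the Pareto boundary of $U(N) \cap B$ (or of a suitable enclosing simplex). To each vertex $u$ and each player $i \in N$, I would assign a coalition $S \ni i$ chosen so that $u$ lies on, or is tightest against, the boundary of $U(S)$ at coordinate $i$. Scarf's lemma, applied to the matrix encoding this labeling, produces a ``fully labeled'' primitive simplex whose vertex labels form a balanced collection $\mathcal{S}$ of coalitions. Taking the mesh of the subdivision to zero and extracting a convergent subsequence, compactness yields a limit point $u^* \in B$ and a balanced family $\mathcal{S}^* \subseteq 2^N \setminus \{\emptyset\}$ such that $u^* \in U(S)$ for every $S \in \mathcal{S}^*$ and $u^*$ is tight at coordinate $i$ for at least one $S \in \mathcal{S}^*$ containing $i$, for every $i \in N$.

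The balancedness hypothesis \eqref{eq: balanced} then closes the argument: $u^* \in \bigcap_{S \in \mathcal{S}^*} U(S) \subseteq U(N)$, so $u^*$ is attainable by the grand coalition; and the coordinatewise tightness rules out strict domination by any coalition, so $u^* \in C(N)$. The main obstacle, and the place where non-transferability of utility bites, is the labeling/combinatorial step: one has to set up labels so that fully labeled primitive simplices correspond precisely to balanced collections, while simultaneously ensuring that the closedness and comprehensiveness of the $U(S)$ interact correctly with the pivoting underlying Scarf's lemma. Once that scaffolding is in place, the rest is a compactness-and-continuity exercise, with hypothesis \eqref{eq: balanced} invoked only at the very last step to guarantee grand-coalition feasibility of the candidate allocation.
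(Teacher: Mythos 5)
The paper does not actually prove this statement; it is imported verbatim from \citet{scarf1967core}, and the paper only remarks (Section~\ref{sec: complexity}) that the known proofs are Scarf's Lemke-type path-following argument and Shapley's Sperner-type argument, both involving limit arguments. So your proposal has to be judged against those classical proofs, of which it is a sketch rather than an execution: the combinatorial engine (Scarf's lemma, or Shapley's balanced generalization of Sperner's lemma) is invoked as a black box, and you yourself flag its setup as ``the main obstacle.'' That by itself would make this a plan rather than a proof, but there is also a concrete logical gap in how the black box is used.

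The output you extract from the combinatorial step is too weak to yield core membership. You record only that the labels of the fully labeled primitive simplex form a balanced collection $\mathcal{S}^*$, that the limit point satisfies $u^* \in U(S)$ for all $S \in \mathcal{S}^*$, and that $u^*$ is ``tight'' at coordinate $i$ for some $S \in \mathcal{S}^*$ containing $i$. From these facts, balancedness indeed gives $u^* \in U(N)$, but nothing rules out $u^* \in \interior{U(T)}$ for a coalition $T \notin \mathcal{S}^*$; tightness against one coalition per player is not an obstruction to strict domination by a different coalition, so the concluding claim ``coordinatewise tightness rules out strict domination by any coalition'' is a non sequitur. What Scarf's lemma actually delivers\textemdash and what the proof genuinely needs\textemdash is an \emph{undominatedness} property quantified over all columns: for every coalition $S$ and every grid utility vector $w$ attainable by $S$, some $i \in S$ has $u_i \geq w_i$; only this survives the mesh-to-zero limit to give $u^* \notin \interior{U(S)}$ for \emph{all} $S$. (In the Shapley/KKMS route the same issue is handled differently: the candidate point is chosen as a supremum along a ray, i.e.\ a boundary point of $\bigcup_S U(S)$, so non-interiority in every $U(S)$ holds by construction; labeling a subdivision of the Pareto boundary of $U(N)$ alone, as you propose, does not provide this.) Your peripheral normalizations (translation of the $U(\{i\})$, compact truncation, comprehensive closure) and the observation that hypothesis \eqref{eq: balanced} is used only for grand-coalition feasibility are fine, but the heart of the theorem is precisely the dominance property you omit.
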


In light of Theorem~\ref{theorem: scarf}, we establish sufficient conditions for an NTU LP game to be balanced.
In fact, our result generalizes a class of NTU games that \citet[Section 2]{scarf1967core} showed to be balanced, namely those arising from exchange economies.

For any coalition $\emptyset \neq S \subseteq N$, the dual cone of their design space $X(S)$ is given by
\begin{equation*}
    X^*(S) \coloneqq \{v \in \R^J: v^T x \geq 0, \forall x \in X(S)\}.
\end{equation*}
We now use these cones to derive conditions under which NTU LP games are balanced.
We first use the weights of a balanced collection to show that any designs independently available to its coalitions can be combined into a design available to the grand coalition.
We then show that if all valuation vectors lie in these cones, then the players can only benefit from the combined design.
\begin{theorem}
\label{theorem: balanced}
Let $\{U(S)\}_{\emptyset \neq S \subseteq N}$ be the NTU LP game given by $A$, $\{b^i\}_{i \in N}$, and $\{v^i\}_{i \in N}$.
If 
\begin{equation*}
    v^i \in \bigcap_{S \in 2^N \setminus \{\emptyset\}} X^*(S) 
\end{equation*}
for all $i \in N$, the game is balanced and satisfies $C(N) \neq \emptyset$.
\end{theorem}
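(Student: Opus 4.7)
The plan is to reduce to Theorem~\ref{theorem: scarf} by verifying Scarf's balancedness condition \eqref{eq: balanced} directly. Fix an arbitrary balanced collection $\mathcal{S} \subseteq 2^N \setminus \{\emptyset\}$ with weights $\{\lambda^S\}_{S \in \mathcal{S}}$, and take any $u \in \bigcap_{S \in \mathcal{S}} U(S)$. By the definition of $U(S)$ in \eqref{eq: U(S)} together with $Z(S)$ in \eqref{eq: Z(S)}, for every $S \in \mathcal{S}$ there exists $x^S \in X(S)$ such that $u_i \leq (v^i)^T x^S$ for all $i \in S$. The natural certificate that $u \in U(N)$ is the weighted combination
\[
\bar{x} \coloneqq \sum_{S \in \mathcal{S}} \lambda^S x^S,
\]
mirroring the convex-combination argument used by \citet{owen1975core} for the TU case.

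First I would check that $\bar{x} \in X(N)$. Non-negativity is immediate from $\lambda^S \geq 0$ and $x^S \geq 0$. For the resource constraint, the key is exchanging the order of summation and invoking the balancedness identity $\sum_{S \in \mathcal{S}:\, i \in S} \lambda^S = 1$:
\[
A \bar{x} = \sum_{S \in \mathcal{S}} \lambda^S A x^S \leq \sum_{S \in \mathcal{S}} \lambda^S b(S) = \sum_{i \in N} \Bigl( \sum_{S \in \mathcal{S}:\, i \in S} \lambda^S \Bigr) b^i = b(N).
\]
This step uses only balancedness, not the dual-cone hypothesis. Next I would verify $u_i \leq (v^i)^T \bar{x}$ for each $i \in N$ by splitting the sum according to membership:
\[
(v^i)^T \bar{x} = \sum_{S \in \mathcal{S}:\, i \in S} \lambda^S (v^i)^T x^S \;+\; \sum_{S \in \mathcal{S}:\, i \notin S} \lambda^S (v^i)^T x^S.
\]
The first sum is at least $u_i \sum_{S \ni i} \lambda^S = u_i$ by the definition of $x^S$ together with balancedness. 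The hypothesis $v^i \in X^*(S)$ for every $S$ then guarantees $(v^i)^T x^S \geq 0$ for all $x^S \in X(S)$, so, combined with $\lambda^S \geq 0$, the second sum is non-negative. Hence $u_i \leq (v^i)^T \bar{x}$ for all $i \in N$, so $(\bar{x}, u) \in Z(N)$, which yields $u \in U(N)$. This establishes \eqref{eq: balanced} for the chosen $\mathcal{S}$, and since $\mathcal{S}$ was arbitrary the game is balanced, so Theorem~\ref{theorem: scarf} gives $C(N) \neq \emptyset$.

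The main conceptual obstacle is recognizing the precise role of the dual-cone hypothesis. The splitting above isolates ``on-diagonal'' terms $i \in S$, which by construction already encode the inequalities $(v^i)^T x^S \geq u_i$, from ``off-diagonal'' terms $i \notin S$ that carry no a priori information. Without the hypothesis, the off-diagonal terms could be negative and defeat the inequality; the assumption $v^i \in \bigcap_S X^*(S)$ is exactly calibrated to force them to contribute non-negatively, which is the only additional ingredient needed beyond the standard LP-game balancing argument. Consequently, any convenient sufficient condition for the dual-cone containment, e.g., $v^i \in \R_{\geq 0}^J$ (since $X(S) \subseteq \R_{\geq 0}^J$ always gives $\R_{\geq 0}^J \subseteq X^*(S)$), immediately yields non-emptiness of the core.
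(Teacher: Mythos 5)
Your proposal is correct and follows essentially the same argument as the paper: form $\bar{x} = \sum_{S \in \mathcal{S}} \lambda^S x^S$, verify $A\bar{x} \leq b(N)$ via the balancing weights, and use the dual-cone hypothesis to make the off-diagonal terms $(v^i)^T x^S$ for $i \notin S$ non-negative. The only cosmetic difference is that the paper explicitly notes the vacuous case $\bigcap_{S \in \mathcal{S}} U(S) = \emptyset$, which your argument handles implicitly.
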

\begin{proof}{Proof}
Given Theorem~\ref{theorem: scarf}, it remains to show that \eqref{eq: balanced} holds for an arbitrary balanced collection $\mathcal{S} \subseteq 2^N \setminus \emptyset$ with balancing weights $\lambda^S \geq 0$ for $S \in \mathcal{S}$.
If $\bigcap_{S \in \mathcal{S}} U(S) = \emptyset$, then the claim follows immediately.
Therefore, suppose $\bigcap_{S \in \mathcal{S}} U(S) \neq \emptyset$ and consider any $u \in \bigcap_{S \in \mathcal{S}} U(S)$.
We need to show that $u \in U(N)$.
By assumption, for each $S \in \mathcal{S}$ there exists $x^S \in \R^{J}$ with (\textit{i}) $x^S \in \R_{\geq 0}^{J}$, (\textit{ii}) $A x^S \leq b(S)$, and (\textit{iii}) $u_i \leq \left(v^i\right)^T x^S$ for each $i \in S$.

Now, let $x = \sum_{S \in \mathcal{S}} \lambda^S x^S$.
First, note that (\textit{i}) implies (\textit{iv}) $x \in \R_{\geq 0}^J$.
Moreover, 
\begin{align*}
    Ax 
    & = A\left(\sum_{S \in \mathcal{S}} \lambda^Sx^S\right) 
    = \sum_{S \in \mathcal{S}} \lambda^SAx^S 
    \stackrel{(1)}{\leq} \sum_{S \in \mathcal{S}} \lambda^Sb(S) \\
    & = \sum_{S \in \mathcal{S}} \lambda^S \left(\sum_{i \in S} b^i \right)
    = \sum_{i \in N} b^i \sum_{S \in \mathcal{S}: i \in S} \lambda^S 
    \stackrel{(2)}{=} \sum_{i \in N} b^i \\
    & = b(N),
\end{align*}
where $(1)$ follows from (\textit{ii}) and $(2)$ holds since $\mathcal{S}$ is a balanced collection of $N$ with balancing weights $\lambda^S \geq 0$ for $S \in \mathcal{S}$.
This shows that (\textit{v}) $A x \leq b(N)$.
Next, consider any $i \in N$.
Note that
\begin{align*}
    \left(v^i\right)^T x 
    &= \left(v^i\right)^T \left( \sum_{S \in \mathcal{S}} \lambda^S x^S \right)
    = \sum_{S \in \mathcal{S}} \lambda^S \left(v^i\right)^T x^S \\
    &= \sum_{S \in \mathcal{S}: i \in S} \lambda^S \left(v^i\right)^T x^S + \sum_{S \in \mathcal{S}: i \notin S} \lambda^S \left(v^i\right)^T x^S \\
    &\stackrel{(1)}{\geq} \sum_{S \in \mathcal{S}: i \in S} \lambda^S \left(v^i\right)^T x^S 
    \stackrel{(2)}{\geq} \sum_{S \in \mathcal{S}: i \in S} \lambda^S u_i \\
    &= u_i \sum_{S \in \mathcal{S}: i \in S} \lambda^S 
    \stackrel{(3)}{=} u_i,
\end{align*}
where $(1)$ holds since $\lambda^S \geq 0$ and $v^i \in X^*(S)$ for all $S \in \mathcal{S}$, $(2)$ follows from (\textit{iii}), and $(3)$ holds since $\mathcal{S}$ is a balanced collection of $N$ with balancing weights $\lambda^S \geq 0$ for $S \in \mathcal{S}$.
This shows that (\textit{vi}) $u_i \leq \left(v^i\right)^T x$ for each $i \in N$.
The choice of $x$ together with (\textit{iv})-(\textit{vi}) imply that $u \in U(N)$.
\end{proof}
The same arguments show that an NTU LP game satisfying the conditions of Theorem~\ref{theorem: balanced} is totally balanced (i.e., all of its sub-games are balanced).
Also, the proof extends to the case in which the design spaces are non-empty after intersecting them with a shared convex set that does not depend on the resource pooling.
As a more interpretable corollary, we find that if the players are at worst indifferent about the various goods, then cooperation is possible.
\begin{corollary}
\label{corollary: balanced}
Let $\{U(S)\}_{\emptyset \neq S \subseteq N}$ be the NTU LP game given by $A$, $\{b^i\}_{i \in N}$, and $\{v^i\}_{i \in N}$.
If $v^i \in \Q_{\geq 0}^J$ for all $i \in N$, then the game is balanced and satisfies $C(N) \neq \emptyset$.
\end{corollary}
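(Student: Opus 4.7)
The plan is to derive the corollary as an immediate application of Theorem~\ref{theorem: balanced}. It suffices to verify that the hypothesis $v^i \in \bigcap_{S \in 2^N \setminus \{\emptyset\}} X^*(S)$ holds whenever $v^i \in \Q_{\geq 0}^J$, since then Theorem~\ref{theorem: balanced} directly yields balancedness and $C(N) \neq \emptyset$.

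First I would recall the definition of the dual cone: $X^*(S) = \{v \in \R^J : v^T x \geq 0, \forall x \in X(S)\}$. Next I would note that by the definition of $X(S)$ in~\eqref{eq: X(S)}, every $x \in X(S)$ satisfies $x \in \R_{\geq 0}^J$. Therefore, if $v^i \in \Q_{\geq 0}^J$, then for any coalition $\emptyset \neq S \subseteq N$ and any $x \in X(S)$, the inner product $(v^i)^T x$ is a sum of products of nonnegative numbers and is hence nonnegative. This shows $v^i \in X^*(S)$ for every $\emptyset \neq S \subseteq N$, so the containment $v^i \in \bigcap_{S \in 2^N \setminus \{\emptyset\}} X^*(S)$ holds for every $i \in N$.

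The conclusion then follows by invoking Theorem~\ref{theorem: balanced}. There is no real obstacle here: the only substantive observation is that the nonnegative orthant $\R_{\geq 0}^J$ is self-dual, so nonnegative valuation vectors automatically lie in the dual cone of any subset of $\R_{\geq 0}^J$, in particular of each $X(S)$. The corollary can thus be presented as a short, essentially one-line deduction from the theorem.
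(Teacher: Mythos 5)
Your proposal is correct and follows the same route as the paper: observe that $X(S) \subseteq \R_{\geq 0}^J$ forces every nonnegative valuation vector to lie in $X^*(S)$ for all $\emptyset \neq S \subseteq N$ (the paper phrases this as $\R_{\geq 0}^J \subseteq X^*(S)$), and then invoke Theorem~\ref{theorem: balanced}. The argument is complete as stated.
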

\begin{proof}{Proof}
Note that $\R_{\geq 0}^J \subseteq X^*(S)$ for all $S \in 2^N \setminus \{\emptyset\}$.
Therefore, for any $i \in N$, $v^i \in \Q_{\geq 0}^J$ implies $v^i \in X^*(S)$ for all $S \in 2^N \setminus \{\emptyset\}$.
Then, the claim follows from Theorem~\ref{theorem: balanced}.
\end{proof}
Conversely, if the players have an aversion to certain goods (i.e., negative entries in their valuation vectors), the non-emptiness can no longer be so straightforwardly guaranteed.
The following example illustrates how, without the non-negativity conditions, simpler notions of ``sufficiently well-aligned valuation vectors'', such as a being nearly parallel to each other, may be insufficient.
\begin{example}
\label{ex: insufficient}
Consider the NTU LP game with $N = \{1, 2, 3\}$, $J = \{1, 2\}$, and $K = \{1\}$ given by $A = \begin{pmatrix} 1 & 1 \end{pmatrix}$, $v^1 = v^2 = \begin{pmatrix} 2/3 & 1/3 \end{pmatrix}^T$, $v^3 = \begin{pmatrix} -2/3 & 1/3 \end{pmatrix}^T$, and $b^1 = b^2 = b^3 = 1$.
The coalition $S = \{1, 2\}$ can achieve utilities $u_1 = u_2 = 4/3$ by investing their two units of budget on $x_1$.
Therefore, in a core allocation, at least one of players $1$ or $2$ must meet or exceed this level of utility.
This implies that, in a core allocation, $u_3 \leq 0$ (this can be achieved by investing one unit of budget on $x_1$ and two units of budget on $x_2$).
However, player $3$ can achieve utility $u_3 = 1/3$ on their own by investing their unit of budget on $x_2$.
It follows that $C(N) = \emptyset$.
Although $(v^1)^T v^3 = (v^2)^T v^3 = -3/9$, these inner products can be increased arbitrarily without changing the structure of the game by introducing distinguishable copies of good $2$.
Therefore, on their own, assumptions involving the magnitude of $\left(v^i\right)^{T} v^{i'}$ for all $i, i' \in N$ need not guarantee the non-emptiness of the core.
\end{example}

While Theorem~\ref{theorem: balanced} indicates that NTU LP games are synergistic under mild conditions, attaining this synergy is not always clear. 
As we illustrate, mechanisms that only combine the preferred solutions of different players (as in the proof of Theorem~\ref{theorem: balanced}) need not achieve core stability.
\begin{example}
\label{example: synergy}
Consider the NTU LP game with $N = \{1, 2\}$, $J = \{1, 2, 3\}$, and $K = \{1\}$ given by $A = \begin{pmatrix} 1 & 1 & 1\end{pmatrix}$, $v^1 = \begin{pmatrix} 1 & 0 & 2/3 \end{pmatrix}^T$, $v^2 = \begin{pmatrix} 0 & 1 & 2/3 \end{pmatrix}^T$, and $b^1 = b^2 = 1$.
If acting alone, player $1$ finds the design $x = \begin{pmatrix} 1 & 0 & 0 \end{pmatrix}^T$ to be optimal with utilities $u = \begin{pmatrix} 1 & 0 \end{pmatrix}^T$.
If acting alone, player $2$ finds the design $x = \begin{pmatrix} 0 & 1 & 0 \end{pmatrix}^T$ to be optimal with utilities $u = \begin{pmatrix} 0 & 1 \end{pmatrix}^T$.
However, the combined design $x = \begin{pmatrix} 1 & 1 & 0 \end{pmatrix}^T$ with $u = \begin{pmatrix} 1 & 1 \end{pmatrix}^T$ is not in the core, as it is dominated by the design $x = \begin{pmatrix} 0 & 0 & 2 \end{pmatrix}^T$ with $u = \begin{pmatrix} 4/3 & 4/3 \end{pmatrix}^T$.
\end{example}
The remainder of Section~\ref{sec: planning and cooperation} deals with computational and algorithmic aspects related core stability.

\subsubsection{MIP-representability}
\label{sec: mip-representability}

We now show that, if non-empty, the core of an NTU LP game can be represented using mixed-integer linear programming (MIP).
The following standard technical lemma, which we prove in the Appendix, states that the difference between a polyhedron $P$ and the interior of polyhedron $Q$ can be written as a finite union of polyhedra: the union, over the inequalities defining $Q$, of $P$ intersected with each reverse inequality.
\begin{lemma}
\label{lemma: difference}
Let $P \subseteq \R^n$ and $Q = \{x \in \mathbb{R}^n: Ax \leq b\} \subseteq \R^n$ be polyhedra, where $A \in \Q^{m \times n}$ and $b \in \Q^m$.
Then, $P \setminus \interior{Q} = \bigcup_{i = 1}^m P \cap \{x \in \mathbb{R}^n : a_i^Tx \geq b_i\}$.
\end{lemma}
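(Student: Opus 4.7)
The plan is to establish the set equality by proving both inclusions directly. The underlying geometric fact I will use is that, for the polyhedron $Q$, a point lies in $\interior{Q}$ exactly when it satisfies every row of $Ax \leq b$ strictly, so that the complement of $\interior{Q}$ in $\R^n$ is precisely $\bigcup_{i=1}^m \{x : a_i^T x \geq b_i\}$; intersecting with $P$ then yields the stated identity. Rather than invoking this fact as a black box, I will prove both containments of the displayed equality and let the geometric picture motivate each step.

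For the inclusion $P \setminus \interior{Q} \subseteq \bigcup_{i=1}^m (P \cap \{x : a_i^T x \geq b_i\})$, take $x \in P \setminus \interior{Q}$. If $x \notin Q$, then some row is strictly violated, giving $a_i^T x > b_i$ for that $i$, and we are done. Otherwise $x$ lies on the boundary of $Q$, so every neighborhood of $x$ contains a point outside $Q$; choose a sequence $y_k \to x$ with $y_k \notin Q$, each $y_k$ violating some constraint indexed by $i_k \in \{1, \dots, m\}$. Since there are only finitely many constraints, the pigeonhole principle produces an index $i$ that is violated along a subsequence, and passing to the limit gives $a_i^T x \geq b_i$, as required.

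For the reverse inclusion, take $x \in P$ with $a_i^T x \geq b_i$ for some $i$. If $a_i^T x > b_i$ then $x \notin Q$ and hence $x \notin \interior{Q}$. If $a_i^T x = b_i$ and $a_i \neq 0$, the points $x + \epsilon a_i$ for $\epsilon > 0$ satisfy $a_i^T(x + \epsilon a_i) = b_i + \epsilon \|a_i\|^2 > b_i$ and thus lie outside $Q$; since such points can be taken arbitrarily close to $x$, this shows $x \notin \interior{Q}$. The only nontrivial step in the whole argument is the pigeonhole reduction in the forward direction, where the finiteness of $m$ is essential; the perturbation in the reverse direction is routine, and trivial rows with $a_i = 0$ do not arise in the intended MIP-representability application of the lemma.
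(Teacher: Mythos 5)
Your proof is correct and reaches the same conclusion by a somewhat different route than the paper. The paper splits into cases according to whether $\interior{Q}$ is empty: when it is non-empty it simply asserts the algebraic characterization of interior points (interior iff all rows strict), and when it is empty it argues that $\bigcup_{i=1}^m \{x : a_i^T x \geq b_i\} = \R^n$ because $Q$ is not full-dimensional, so the right-hand side collapses to $P$. You avoid the case split entirely and instead prove the two one-directional facts you actually need: for the forward inclusion you handle boundary points of $Q$ by a sequence-plus-pigeonhole limiting argument, and for the reverse inclusion you give the perturbation $x + \epsilon a_i$ showing a tight constraint with $a_i \neq 0$ prevents interiority. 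This buys a uniform treatment of the degenerate case $\interior{Q} = \emptyset$ (which the paper handles separately) and actually supplies proofs of the facts the paper states without justification; on the other hand, your forward-direction argument is more laborious than necessary, since observing that $\{x : Ax < b\}$ is open and contained in $Q$, hence contained in $\interior{Q}$, yields "not interior implies some $a_i^T x \geq b_i$" in one line. Your explicit caveat about rows with $a_i = 0$ is fair: the identity can genuinely fail for a trivial row $0^T x \leq 0$, and the paper's own proof silently makes the same nondegeneracy assumption (its claim that interior points satisfy every inequality strictly also breaks for such rows), so flagging it does not constitute a gap relative to the paper, and such rows indeed do not occur in the descriptions of the sets $U(S)$ used in Theorem~\ref{theorem: mip-representable}.
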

A repeated application of Lemma~\ref{lemma: difference} shows that the core of an NTU LP game can be written as a finite union of polytopes, and is thus representable as an extended MIP formulation using disjunctions.
\begin{theorem}
\label{theorem: mip-representable}
Let $\{U(S)\}_{\emptyset \neq S \subseteq N}$ be the NTU LP game given by $A$, $\{b^i\}_{i \in N}$, and $\{v^i\}_{i \in N}$, and suppose $C(N) \neq \emptyset$.
Then, $C(N)$ is MIP-representable.
\end{theorem}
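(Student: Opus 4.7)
The plan is to show that $C(N)$ is a bounded finite union of rational polyhedra, at which point MIP-representability follows from standard disjunctive programming. First, for each $\emptyset \neq S \subseteq N$, the set $Z(S)$ in \eqref{eq: Z(S)} is by construction a rational polyhedron, so its projection $U(S) = \projection{u}{Z(S)}$ onto the utility coordinates is also a rational polyhedron by Fourier--Motzkin elimination; I would fix an outer description $U(S) = \{u \in \R^N : M^S u \leq d^S\}$ with rational $M^S$ and $d^S$. Applying Lemma~\ref{lemma: difference} with $P = U(N)$ and $Q = U(S)$ then writes
\begin{equation*}
    U(N) \setminus \interior{U(S)} = \bigcup_{\ell} U(N) \cap \{u \in \R^N : (m^S_\ell)^T u \geq d^S_\ell\}
\end{equation*}
as a finite union of rational polyhedra.

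Second, I would distribute the intersection in the definition \eqref{eq: C(N)} over these unions to obtain
\begin{equation*}
    C(N) = \bigcap_{\emptyset \neq S \subseteq N} \bigl(U(N) \setminus \interior{U(S)}\bigr) = \bigcup_{\sigma} \left( U(N) \cap \bigcap_{\emptyset \neq S \subseteq N} \{u \in \R^N : (m^S_{\sigma(S)})^T u \geq d^S_{\sigma(S)}\} \right),
\end{equation*}
where $\sigma$ ranges over all selections of one defining inequality per coalition; each set in the resulting finite union is a rational polyhedron.

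Third, I would argue that $C(N)$ is bounded. For the upper bound, since $X(N)$ is non-empty and bounded, $u_i \leq \max_{x \in X(N)} (v^i)^T x < \infty$ for each $u \in U(N)$ and each $i \in N$. For the lower bound, note that for a singleton coalition $S = \{i\}$, $U(\{i\}) = \{u \in \R^N : u_i \leq v_i^*\}$ with $v_i^* \coloneqq \max_{x \in X(\{i\})} (v^i)^T x$, which is finite by the same boundedness assumption, so $\interior{U(\{i\})} = \{u \in \R^N : u_i < v_i^*\}$. Hence $u \in C(N)$ forces $u_i \geq v_i^*$ for every $i \in N$, yielding individual rationality and the desired lower bound. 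Combined with the preceding decomposition and the hypothesis $C(N) \neq \emptyset$, $C(N)$ is a non-empty finite union of bounded rational polyhedra, i.e., rational polytopes. Such a set admits the classical Balas--Jeroslow disjunctive MIP formulation (one binary indicator per polytope selecting the active piece, together with disaggregated continuous copies of $u$ constrained by the defining inequalities scaled by the indicator), which gives the claim.

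The main obstacle is pinning down the boundedness step: without it, one would have to verify that the polyhedra appearing in the finite union share a common recession cone to invoke the Jeroslow--Lowe characterization of MIP-representable sets, and the recession cones of the different selectors $\sigma$ in general need not coincide. The observation that $u \notin \interior{U(\{i\})}$ enforces the individual rationality bound $u_i \geq v_i^*$ is precisely what delivers the needed boundedness and bypasses this delicate issue.
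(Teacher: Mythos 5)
Your proof is correct and follows essentially the same route as the paper: both rely on Lemma~\ref{lemma: difference} to write $C(N)$ as a finite union of rational polyhedra, both obtain boundedness from the singleton coalitions (the paper by first forming the polytope $U'(N) = U(N) \setminus \bigcup_{i \in N} \interior{U(\{i\})}$, you via the individual-rationality box $v_i^* \leq u_i \leq \max_{x \in X(N)} (v^i)^T x$), and both conclude by the Jeroslow disjunctive representation of a finite union of polytopes with trivial common recession cone. The only difference is cosmetic: the paper peels off the coalitions with $|S| > 1$ sequentially, whereas you distribute the intersection over the unions in a single step.
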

\begin{proof}{Proof}
Consider the set $U'(N) = U(N) \setminus \bigcup_{i \in N} \interior{U(\{i\})}$.
Note that $U'(N)$ is a polytope and $C(N) = U'(N) \setminus \bigcup_{\emptyset \neq S \subseteq N: |S| > 1} \interior{U(S)}$.
Now, consider any ordering $S_1, S_2, \ldots, S_{2^n - (n + 1)}$ of the sets $\emptyset \neq S \subseteq N$ with $|S| > 1$.
Since $U'(N)$ is a polytope, Lemma~\ref{lemma: difference} implies that $U'(N) \setminus U(S_1)$ is the union of a finite family of polytopes.
In turn, for any polytope $P$ in this family, Lemma~\ref{lemma: difference} implies that $P \setminus \interior{U(S_2)}$ is the union of a finite family of polytopes, so that $\left(U'(N) \setminus \interior{U(S_1)} \right)\setminus \interior{U(S_2)}$ itself is the union of a finite family of polytopes.
Upon repeating the application of this argument, we obtain that $C(N)$ itself is the union of a finite family of polytopes.
A finite family of polytopes has a common (trivial) recession cone, so its union is MIP-representable~\citep{jeroslow1984modelling}; refer also to \citep{vielma2015mixed}.
\end{proof}

\subsubsection{Complexity}
\label{sec: complexity}

To prove Theorem~\ref{theorem: scarf}, \citet{scarf1967core} developed a path-following procedure based on that of \citet{lemke1964equilibrium} for computing exact mixed Nash equilibria of bimatrix games.
In an alternative proof of Theorem~\ref{theorem: scarf}, \citet{shapley1973balanced} developed a path-following argument that generalizes that of \citet{sperner1928neuer} for the existence of colorful triangles in valid colorings of triangulations of the $n$-simplex.
However, while these proofs are constructive in spirit, neither yields an efficient algorithm for finding a point in the core of a balanced NTU game\textemdash they both involve limit arguments.
In other words, for all practical purposes, Theorem~\ref{theorem: scarf} is existential.

\citet{papadimitriou1994complexity} defined the class PPAD (for polynomial parity argument in a digraph) as the set of search problems with a non-empty solution set, a polynomial-time verifier of membership in the solution set, and for which the non-emptiness of the solution set is proved using a directed path-following argument (on a digraph whose size can be exponential in that of the problem input).
For example, the problem of computing approximate mixed Nash equilibria of finite games is in PPAD (this follows from the use of the path-following argument of \citet{sperner1928neuer} in the proof of \citet{nash1950equilibrium}), as is the problem of computing exact mixed Nash equilibria of bimatrix games (this follows from the path-following procedure of \citet{lemke1964equilibrium}); the latter is in fact PPAD-complete~\citep{chen2009settling,daskalakis2009complexity}.

Based on the use of path-following techniques in the proofs of Theorem~\ref{theorem: scarf}, it is natural to ask about the membership and completeness in PPAD of the problem of finding a point in the core of a balanced (general) NTU game.
Indeed, \citet{kintali2013reducibility} showed that the problem is PPAD-complete when the game has an exponential-sized representation (i.e., given by an explicit list of possible coalitions and Pareto-optimal outcomes).
We show a stronger negative result for the case for balanced NTU LP games with a compact representation given by $A$, $\{b^i\}_{i \in N}$, and $\{v^i\}_{i \in N}$.
Specifically, while the non-emptiness of the core is guaranteed by Theorem~\ref{theorem: scarf}, there is no polynomial-time verifier for membership in the core unless $\text{P} = \text{co-NP}$.
As we show, this result holds even for highly-structured instances.

\begin{theorem}
\label{theorem: co-np-complete}
Let $\{U(S)\}_{\emptyset \neq S \subseteq N}$ be a balanced NTU LP game given by $A$, $\{b^i\}_{i \in N}$, and $\{v^i\}_{i \in N}$.
The problem of deciding whether a given $u^* \in U(N)$ satisfies $u^* \in C(N)$ is co-NP-complete, even if $|K| = 1$, $A$ is a row matrix of ones, and $b^i = 1$, $v^i \in \mathbb{Q}_{\geq 0}^J$, and $u_i^* = 1$ for all $i \in N$.
\end{theorem}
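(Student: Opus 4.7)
The plan is to establish co-NP membership first, since it is the simpler direction, and then reduce a canonical NP-hard problem to the complement of core membership under the restricted regime described in the statement.

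For co-NP membership, I would show that the complement problem --- deciding whether $u^* \notin C(N)$ --- lies in NP. A polynomial-size certificate consists of a coalition $\emptyset \neq S \subseteq N$ together with a rational design $x \in \Q^J$; the verifier checks in polynomial time that $x \in X(S)$ (i.e., $x \geq 0$ and $Ax \leq b(S)$) and that $(v^i)^T x > u^*_i$ for each $i \in S$. Whenever $u^* \in \interior{U(S')}$ for some $S'$, a rational $x$ of polynomially bounded encoding length exists by standard LP vertex bounds applied to the linear program $\max \{\epsilon : x \in X(S'),\ (v^i)^T x \geq u^*_i + \epsilon\ \forall i \in S'\}$, whose optimum is strictly positive.

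For co-NP-hardness, the crucial observation is that in the restricted regime (with $|K|=1$, $A = \mathbf{1}^T$, $b^i = 1$, $u^*_i = 1$), the blocking condition for a coalition $S$ simplifies via the rescaling $y = x/|S|$ to finding $y \in \R^J_{\geq 0}$ with $\mathbf{1}^T y \leq 1$ and $(v^i)^T y > 1/|S|$ for every $i \in S$. I would reduce from a classical NP-hard problem in which one simultaneously selects a subset and certifies a combinatorial property of that subset --- natural candidates being INDEPENDENT SET, CLIQUE, or 3-SAT. The construction would index players and goods by the combinatorial elements of the source instance and encode adjacency or clause-incidence information in the valuations $v^i \in \Q^J_{\geq 0}$ using a carefully tuned constant (e.g.\ $v^i_j \in \{0, 1/(k-1)\}$ in a CLIQUE reduction), so that a blocking coalition of size $|S| = k$ exists precisely when the sought combinatorial structure is present in the source instance.

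The main obstacle will be ruling out spurious blocking coalitions: the inequality $(v^i)^T y > 1/|S|$ is a nonconvex constraint coupling both the subset $S$ and the distribution $y$, and auxiliary graph structures (stars, quasi-cliques, fractional covers) can inadvertently enable blocking for reasons unrelated to the target problem. Controlling this will require careful parameter calibration and, likely, the introduction of an auxiliary calibration player or good whose valuation forbids both very small and very large coalitions from blocking trivially. Once the construction is fixed, two side conditions remain to be verified for the constructed instance: (i) that $u^* = \mathbf{1} \in U(N)$, so the core-membership question is nondegenerate, which can be arranged by ensuring that a suitable allocation of the aggregate budget across goods dominates $u^*$; and (ii) that the game is balanced, which in this setting is automatic by Corollary~\ref{corollary: balanced} since $v^i \geq 0$ for every $i \in N$.
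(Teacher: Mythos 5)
Your co-NP membership argument is fine and matches the paper's (a coalition plus a polynomially bounded witness, verified by an LP check), and your rescaling observation that blocking in the restricted regime means finding $y\ge 0$, $\mathbf{1}^Ty\le 1$, $(v^i)^Ty>1/|S|$ for all $i\in S$ is correct. But the hardness direction, which is the substance of the theorem, is not actually proved: you name candidate source problems (CLIQUE, INDEPENDENT SET, 3-SAT) and a generic encoding idea, and then explicitly defer the central difficulty --- ruling out spurious blocking coalitions --- to ``careful parameter calibration'' and an unspecified ``auxiliary calibration player or good.'' That deferred step is precisely where all the work lies. In the paper's reduction (from three-dimensional perfect matching) the construction uses two player types (one per node and one per edge of the hypergraph), a diagonal valuation structure for edge players, a baseline good $m+1$ valued at $\frac{1}{3n+m}$ that realizes $u^*=\mathbf{1}$, and a near-threshold valuation $\frac{1}{4n}\bigl(1-\frac{1}{2(n-1)(4n-1)}\bigr)$ for non-incident node--edge pairs, together with a preprocessing step duplicating an edge so that this value exceeds $\frac{1}{3n+m}$; establishing the correspondence between blocking coalitions and perfect matchings then requires two nontrivial structural lemmas (Lemmas~\ref{lemma: hardness structural lemma 1} and \ref{lemma: hardness structural lemma 2}) pinning down the exact coalition composition ($3n$ node players, $n$ edge players) and a covering property, via a delicate counting argument balancing the node players' gains against the edge players' losses. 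Nothing in your plan supplies an analogue of these arguments, and your suggested CLIQUE-style encoding with valuations in $\{0,1/(k-1)\}$ is exactly the kind of ``same-size threshold'' gadget that tends to be blocked by spread-out allocations over stars or dense-but-not-complete subgraphs, the failure mode you yourself flag. Your side conditions (i) and (ii) are handled correctly in principle (a dedicated baseline good gives $u^*\in U(N)$, and balancedness is automatic from Corollary~\ref{corollary: balanced} since $v^i\ge 0$), but without a concrete gadget and a completeness/soundness proof for the reduction, the co-NP-hardness claim remains unestablished.
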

\begin{proof}{Proof}
To show that the problem is in co-NP, consider a pair $(S, u^S)$ as a ``No'' certificate for the problem instance, where $\emptyset \neq S \subseteq N$ and $u^S \in U(S)$.
In time polynomial in $A$, $\{b^i\}_{i \in N}$, and $\{v^i\}_{i \in N}$, one can decide whether $u_i^S > u_i^*$ for all $i \in S$.
In this case, $S$ is indeed a blocking coalition against $u^*$, and the answer to the problem instance is ``No.''

To show that the problem is co-NP-hard, we equivalently show that its complement (the problem of deciding whether $u^* \notin C(N)$) is NP-hard.
We do this through a reduction from the three-dimensional perfect matching problem (\texttt{3DM}).
In \texttt{3DM}, we are given three disjoint sets $X$, $Y$, and $Z$, each of size $n \in \mathbb{N}$.
We are also given a collection of triples $T \subseteq X \times Y \times Z$, of size $m \in \mathbb{N}$.
The elements of $H = X \cup Y \cup Z$ are nodes and the elements of $T$ are (hyper)edges, so that $(H, T)$ forms a hypergraph.
The problem is to decide whether there exists a subcollection $T' \subseteq T$ of $n$ edges such that each node $h \in H$ is contained in exactly one edge $t \in T'$.
That is, to decide whether $(H,T)$ contains a three-dimensional perfect matching.
\citet{karp1972reducibility} showed that \texttt{3DM} is NP-complete.

Given an arbitrary instance $\left(H, T\right)$ of \texttt{3DM} with $n > 1$ and $m \geq n$, first duplicate an arbitrary edge until $\frac{1}{4n} \left(1 - \frac{1}{2(n-1)(4n-1)} \right) > \frac{1}{3n + m}$.
Note that this does not change its decision property.
Now, consider the following NTU LP game.
Let $K = \{1\}$, $J = \{1, 2, \ldots, m, m + 1\}$, and $A = \begin{pmatrix}1 & \cdots & 1 \end{pmatrix}$ be the row matrix of ones of appropriate dimension.
In this way, the first $m$ goods are ``edge'' goods.
With some abuse of notation, identify the index $j \in J$ of an edge good with its corresponding edge $t \in T$.
For each node $h \in H$, define a ``node'' player $h$ with $b^h = 1$ and $v^h \in \mathbb{Q}_{\geq 0}^J$ given by
\begin{equation*}
    v_j^h 
    = 
    \begin{cases}
        \frac{1}{4n - 1}, & \text{ if } j \in T \land h \in j \\ 
        \frac{1}{4n} \left(1 - \frac{1}{2(n-1)(4n-1)} \right), & \text{ if } j \in T \land h \notin j \\
        \frac{1}{3n + m}, & \text{ if } j = m + 1
    \end{cases}.
\end{equation*}
Similarly, for each edge $t \in T$, define an ``edge'' player $t$ with $b^t = 1$ and $v^t \in \mathbb{Q}_{\geq 0}^J$ given by
\begin{equation*}
    v_j^t 
    = 
    \begin{cases}
        \frac{n}{4n - 1}, & \text{ if } j \in  T \land j = t \\ 
        0, & \text{ if } j \in T \land j \neq t \\
        \frac{1}{3n + m}, & \text{ if } j = m + 1
    \end{cases}.
\end{equation*}
In this way, $N = H \cup T$ and $|N| = 3n + m$.
This specifies the NTU LP game $\{U(S)\}_{\emptyset \neq S \subseteq N}$ given by $A$, $\{b^i\}_{i \in N}$, and $\{v^i\}_{i \in N}$.
By Corollary~\ref{corollary: balanced}, the game is balanced and satisfies $C(N) \neq \emptyset$.
Finally, let $u_i^* = 1$ for all $i \in N$.
Note that $u^* \in U(N)$, as it can be achieved by allocating all of $b(N) = 3n + m$ to (the production of) good $m + 1 \in J$.
This completes the description of the problem instance.

Deciding whether $u^* \notin C(N)$ is equivalent to deciding whether there exists a blocking coalition $\emptyset \neq S \subseteq H \cup T$ against $u^*$.
Therefore, we show that such a blocking coalition exists if and only if $(H,T)$ contains a three-dimensional perfect matching.
We first derive some structural properties of blocking coalitions for this problem instance; their proofs can be found in the Appendix.
\begin{lemma}
\label{lemma: hardness structural lemma 1}
If $S \subseteq H \cup T$ is a blocking coalition against $u^*$, then $|S \cap H| = 3n$ and $|S \cap T| = n$.
\end{lemma}
\begin{lemma}
\label{lemma: hardness structural lemma 2}
If $S \subseteq H \cup T$ is a blocking coalition against $u^*$, then for every $h \in S \cap H$ there exists some $t \in S \cap T$ such that $h \in t$.
\end{lemma}

Now, suppose there exists a blocking coalition $S \subseteq H \cup T$ against $u^*$.
Lemma~\ref{lemma: hardness structural lemma 1} implies that $|S \cap H| = 3n$, so that $H \subseteq S$.
Together with Lemma~\ref{lemma: hardness structural lemma 2} this implies that, for every node $h \in H$, there exists some edge $t \in S \cap T$ such that $h \in t$.
Lastly, since $T \subseteq X \times Y \times Z$ and Lemma~\ref{lemma: hardness structural lemma 1} implies that $|S \cap T| = n$, it must be that $S \cap T$ partitions $H$.
In other words, $T' = S \cap T$ forms a three-dimensional perfect matching of $(H,T)$.

Conversely, suppose $T' \subseteq T$ forms a three-dimensional perfect matching of $(H,T)$ and let $S = T' \cup H$, so that $|S \cap H| = 3n$ and $|S \cap T| = n$.
Note that $b(S) = |S| = 4n$ and consider the uniform allocation of $b(S)$ to the edge goods $S \cap T$, so that each good $t \in S \cap T$ is allocated $4$ units of budget.
Then, the utility of each edge player $t \in S \cap T$ is given by 
\begin{equation*}
    4 \cdot \underbrace{\frac{n}{4n-1}}_{v_t^t} = 1 + \frac{1}{4n-1} > 1.
\end{equation*}
Similarly, the utility of each node player $h \in S \cap H$ is given by
\begin{align*}
    4 \cdot \underbrace{\frac{1}{4n-1}}_{v_t^h \text{ if } h \in t} + & (4n - 4) \cdot \underbrace{\left(\frac{1}{4n} \left(1 - \frac{1}{2(n-1)(4n-1)} \right)\right)}_{v_t^h \text{ if } h \notin t} \\
    &= 1 + \frac{1}{2n(4n-1)} 
    > 1.
\end{align*}
Therefore, $S$ forms a blocking coalition against $u^*$.
This completes the proof.
\end{proof}

Note that the theorem does not preclude the possibility that verifying membership in the core of a balanced NTU LP game can be performed in polynomial time for \emph{some} of its members.
As an analogy, in the case of TU LP games, this is the distinction between the points in the core that can be obtained through the methods of \citet{owen1975core} and those that arise in the $\text{co-NP}$-completeness proof of \citet{fang2001membership} for the case of flow games.

However, we conjecture there is a stronger version of Theorem~\ref{theorem: co-np-complete} that holds for all points in the core of an NTU LP game.
In our proof we consider the property $u^* \in C(N)$, where $u^*$ is an all ones vector obtained through a highly-coordinated strategy in which the entire budget available to the grand coalition is dedicated to the production of a single good that is only mildly preferred by each player.
Alternatively, one could modify our gadget to obtain the same $u^*$ through a completely uncoordinated strategy, by introducing additional goods whose player-valuation structure is an identity matrix.
In this way, each player could guarantee themselves unit utility, with the hope to ultimately have that if $u^* \in C(N)$, then $C(N) = \{u^*\}$.
Unfortunately, this modification complicates the combinatorial arguments needed for structural lemmas analogous to Lemmas~\ref{lemma: hardness structural lemma 1} and \ref{lemma: hardness structural lemma 2}.

Lastly, we note some practical interpretations of Theorem~\ref{theorem: co-np-complete}.
It suggests that, in general, it is difficult for public decision-makers to recognize whether a proposed solution will be supported by \emph{all} population groups.
Equally, in the worst case, it is difficult for constituents to recognize whether they can organize to improve upon a proposed solution.
This does not discredit the plausibility of constituents forming blocking coalitions under most circumstances.
Instead, it points at the fundamental difficulty of \emph{certifying} popular support: it limits the extent to which public decision-making can be expected to be robust against (possibly algorithmically concerted) coalition formation and thus informs how legal and/or regulatory frameworks may align with its mathematical nature.
This parallels the mathematical, political, and regulatory interplay of adjacent problem areas such as political districting \citep{ellenberg2021geometry} and spectrum allocation~\citep{leyton2017economics}.
Finally, the theorem suggests that this difficulty is inherent to the presence of disparate stakeholder interests.

\subsection{Algorithmic Techniques}
\label{sec: algorithmic techniques}

Theorem~\ref{theorem: mip-representable} implies that, in principle, one can write an extended MIP formulation of the core of an NTU LP game using disjunctive constraints, one for each $\emptyset \neq S \subseteq N$.
However, this approach is unlikely to yield efficient algorithms, not only because of the number of disjunctions, but also because of their individual size.
As we illustrate next, the facial structure of $\{U(S)\}_{\emptyset \neq S \subseteq N}$ may require a very large number of linear inequalities, even for highly-structured and balanced instances.
\begin{example}
\label{ex: cyclic}
Let $K = \{1\}$, $J = \{1, 2, \ldots, m\}$, and $N = \{1, 2, \ldots, n\}$ for some $n, m \in \mathbb{N}$.
Let $A = \begin{pmatrix}1 & \cdots & 1 \end{pmatrix}$ be the row matrix of ones of appropriate dimension, and take any real numbers $1 < t_1 < t_2 < \cdots < t_m$.
For each $i \in N$, let $b^i = 1$ and $v_j^i = t_j^i$ for $j \in J$.
Then, for any $\emptyset \neq S \subseteq N$, $X(S)$ is the $|S|$-dilation of a standard simplex.
For each $j \in J$, its extreme point $e_j \cdot |S| \in X(S)$ (where $e_j$ is a standard vector) maps to utility $|S| \cdot t_j^i$ for $i \in S$.
Therefore, $U(S)$ embeds $\conv{\{(|S| \cdot t_j^i)_{i \in S}\}_{j \in J}}$ along its non-cylindrical coordinates, where $\conv{\cdot}$ denotes the convex hull operator.
This is the \emph{cyclic polytope}, which achieves the largest number of facets for any given dimension and number of vertices~\citep{mcmullen1970maximum}.
\end{example}
This difficulty motivates the development of a cutting plane algorithm.

\subsubsection{Testing Membership}
\label{sec: testing membership}
As a first step, we revisit the problem of testing membership in the core of an NTU LP game.
While this problem is co-NP-complete by Theorem~\ref{theorem: co-np-complete}, we can nevertheless write a MIP formulation:
\begin{subequations}
\label{eq: membership}
\begin{align}
    \epsilon(u^*) \coloneqq \max_{\epsilon, y, x^S, u^S} \epsilon \label{obj: epsilon} \\
    \textrm{ s.t. } \sum_{i \in N} y_i &\geq 1, & \label{cnstr: size} \\
    Ax^S &\leq \sum_{i \in N} b^{i} y_i, & \label{cnstr: x^S} \\
    u_i^S &\leq {\left(v^i\right)}^T x^S & \forall i \in N \label{cnstr: u^S} \\
    \epsilon &\leq u_i^S - u_i^* y_i + M(1 - y_i), & \forall i \in N \label{cnstr: epsilon} \\
    x^S &\geq 0, & \label{cnstr: x^S >= 0} \\
    y &\in \{0,1\}^N . &
\end{align}
\end{subequations}
In \eqref{eq: membership}, $A$, $\{b^i\}_{i \in N}$, $\{v^i\}_{i \in N}$, $u^* \in U(N)$, and a sufficiently big $M > 0$ are all given.
The problem is to decide whether $u^* \in C(N)$, i.e., whether there exist $\emptyset \neq S \subseteq N$ and $u^S \in U(S)$ such that $u_i^S > u_i^*$ for all $i \in S$.
The variables $y_i \in \{0,1\}$ indicate whether $i \in N$ belongs to the candidate blocking coalition against $u^*$, so that $S = \{i \in N: y_i = 1\}$.
Constraint~\eqref{cnstr: size} ensures that $S \neq \emptyset$.
Constraints~\eqref{cnstr: x^S}, \eqref{cnstr: u^S}, and \eqref{cnstr: x^S >= 0} ensure that $(x^S, u^S) \in Z(S)$.
Constraints~\eqref{cnstr: epsilon} ensure that the upper bound on $\epsilon$ is adopted for $i \in S$ and discarded for $i \in N \setminus S$ (by the big $M$).
Then, $\epsilon \leq u_i^S - u_i^*$ for all $i \in S$, so that the objective \eqref{obj: epsilon} is to maximize the least additive utility gain across all members of $S$.
Note that $\epsilon(u^*) \geq 0$ since setting $y_i = 1$ and $u_i^S = u_i^*$ for all $i \in N$ induces a feasible solution with $\epsilon = 0$.

In a commercial solver with a dedicated indicator constraint feature, \eqref{cnstr: epsilon} is often best implemented in its non-linear indicator form $y_i^S \left(\epsilon - \left(u_i^S - u_i^* \right) \right) \leq 0$ for all $i \in N$ for numerical stability.
We refer to $\epsilon(u^*)$ as the (additive) least objection.
Given \eqref{eq: membership}, we immediately obtain the following.
\begin{lemma}
\label{lemma: membership}
Let $\{U(S)\}_{\emptyset \neq S \subseteq N}$ be an NTU LP game given by $A$, $\{b^i\}_{i \in N}$, and $\{v^i\}_{i \in N}$, and let $u^* \in U(N)$.
Then, $u^* \in C(N)$ if and only if $\epsilon(u^*) = 0$.
\end{lemma}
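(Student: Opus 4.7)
The plan is to unwind both sides of the stated equivalence through the definition of the core given in Definition~\ref{def: core}, noting first that since $U(S)$ is cylindrical in the coordinates $i \in N \setminus S$ and downward-closed in the coordinates $i \in S$, membership of $u^*$ in $\interior{U(S)}$ is equivalent to the existence of some $x \in X(S)$ with $\left(v^i\right)^T x > u_i^*$ for every $i \in S$. Therefore $u^* \in C(N)$ if and only if no coalition $\emptyset \neq S \subseteq N$ admits such a strictly-improving design $x$; I will refer to any such $(S, x)$ as a blocking witness.

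For the direction $\epsilon(u^*) = 0 \Rightarrow u^* \in C(N)$, I argue the contrapositive: given a blocking witness $(S, x^S)$, construct the MIP solution $y_i = \mathbf{1}[i \in S]$, keep $x^S$, set $u_i^S = (v^i)^T x^S$, and take $\epsilon = \min_{i \in S}\bigl(u_i^S - u_i^*\bigr) > 0$. Constraint~\eqref{cnstr: size} holds because $S \neq \emptyset$; constraint~\eqref{cnstr: x^S} holds because $A x^S \leq b(S) = \sum_{i \in N} b^i y_i$; constraints~\eqref{cnstr: u^S} and \eqref{cnstr: x^S >= 0} are immediate; and \eqref{cnstr: epsilon} holds for $i \in S$ by choice of $\epsilon$ and for $i \notin S$ by taking $M$ larger than $\max_{i \in N} \max_{x \in X(N)} \bigl|(v^i)^T x - u_i^*\bigr|$, which is finite since each $X(S) \subseteq X(N)$ is bounded. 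Feasibility then implies $\epsilon(u^*) \geq \epsilon > 0$.

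For the direction $u^* \in C(N) \Rightarrow \epsilon(u^*) = 0$, I combine the paper's observation $\epsilon(u^*) \geq 0$ with a contrapositive argument: suppose $\epsilon(u^*) > 0$ and pick any feasible $(\epsilon, y, x^S, u^S)$ attaining this value. Let $S = \{i \in N : y_i = 1\}$, which is non-empty by \eqref{cnstr: size}. Then \eqref{cnstr: x^S} gives $A x^S \leq b(S)$ and \eqref{cnstr: x^S >= 0} gives $x^S \geq 0$, so $x^S \in X(S)$. For every $i \in S$, combining \eqref{cnstr: u^S} with \eqref{cnstr: epsilon} yields $\left(v^i\right)^T x^S \geq u_i^S \geq u_i^* + \epsilon > u_i^*$, so $(S, x^S)$ is a blocking witness, contradicting $u^* \in C(N)$.

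I do not anticipate any substantial obstacle; the only subtlety is justifying that a sufficiently large but finite $M$ exists, which follows from the standing boundedness assumption on $X(S)$ and the rationality of $v^i$ and $u^*$. The rest is a direct translation between the combinatorial blocking condition and the MIP constraints.
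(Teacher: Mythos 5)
Your proof is correct and takes essentially the approach the paper intends: the paper states this lemma without proof (``Given \eqref{eq: membership}, we immediately obtain the following''), and your argument is precisely the direct translation between blocking coalitions (equivalently, $u^* \in \interior{U(S)}$) and feasible MIP solutions with $\epsilon > 0$ that is left implicit. One minor caveat: the containment $X(S) \subseteq X(N)$ can fail if some $b^i$ has negative entries, so the finite $M$ should be chosen to bound $\left(v^i\right)^T x$ over every (bounded) set $X(S)$ rather than over $X(N)$ alone, which does not affect the substance of the argument.
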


\subsubsection{Cutting Plane Algorithm}
\label{sec: cutting plane algorithm}
With Lemma~\ref{lemma: membership} in place, our goal is to leverage it as a separation procedure.
For technical reasons that we expand on later, it is convenient to do so in the design-utility space $\R^J \times \R^N$ rather than the utility space $\R^N$.
To this end, for each $\emptyset \neq S \subseteq N$, let 
\begin{equation*}
    U'(S) \coloneqq \{ (x, u) \in \R^J \times \R^N : u \in U(S) \}.
\end{equation*}
Note that $U'(S)$ is cylindrical along the coordinates $j \in J$ and $i \in N \setminus S$.
In other words, $U'(S)$ is an extended formulation of $U(S)$.
This implies the following.
\begin{lemma}
\label{lemma: U - U'}
Let $(x, u) \in \R^J \times \R^N$.
For any $\emptyset \neq S \subseteq N$, $u \in \interior{U(S)}$ if and only if $(x, u) \in \interior{U'(S)}$.
\end{lemma}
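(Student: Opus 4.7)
The plan is to reduce this to a standard product-space fact about interiors. Observe that the defining condition of $U'(S)$ involves only the $u$-coordinates, so $U'(S) = \R^J \times U(S)$ as subsets of $\R^J \times \R^N$; equivalently, $U'(S)$ is the preimage of $U(S)$ under the coordinate projection $\pi_u: \R^J \times \R^N \to \R^N$. This identification is immediate from the definition and is exactly what ``cylindrical along $j \in J$'' encodes.

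Given this, I would invoke the product rule for interiors in the product topology: for sets $A \subseteq \R^J$ and $B \subseteq \R^N$, one has $\interior{A \times B} = \interior{A} \times \interior{B}$. Applying this with $A = \R^J$ (which is open, so $\interior{A} = \R^J$) and $B = U(S)$ gives $\interior{U'(S)} = \R^J \times \interior{U(S)}$. Consequently, $(x, u) \in \interior{U'(S)}$ if and only if $x \in \R^J$ and $u \in \interior{U(S)}$, and since the first condition is automatic for any $(x,u) \in \R^J \times \R^N$, the lemma follows.

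There is essentially no obstacle: the result is a topological triviality once one recognizes that $U'(S)$ is the cylinder over $U(S)$ along the $x$-coordinates. If a more self-contained argument is preferred (avoiding the product-topology lemma), one can argue directly with open balls: any open ball around $(x,u)$ in $\R^J \times \R^N$ projects onto an open ball around $u$ in $\R^N$, and conversely any open ball around $u$ in $\R^N$ lifts to an open cylinder containing $(x,u)$ in $\R^J \times \R^N$ that lies inside $U'(S)$ iff the original ball lies inside $U(S)$. Either phrasing yields a one-line justification, which is why the authors likely invoke this step just to formally transfer the separation problem from the utility space to the design-utility space in the cutting plane development that follows.
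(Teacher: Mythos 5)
Your proof is correct and matches the paper's intent: the paper states this lemma without a separate proof, treating it as an immediate consequence of $U'(S)$ being the cylinder $\R^J \times U(S)$ over $U(S)$, which is exactly the observation you formalize via $\interior{\R^J \times U(S)} = \R^J \times \interior{U(S)}$. No issues.
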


First, we show an equivalence between the core, as defined in \eqref{eq: C(N)} in the utility space $\R^N$, and an analogous set in the design-utility space $\R^J \times \R^N$.
Let
\begin{align}
\label{eq: C'(N)}
    C'(N) 
    &\coloneqq Z(N) \setminus \bigcup_{\emptyset \neq S \subseteq N}  \interior{U'(S)} \nonumber \\
    &= \bigcap_{\emptyset \neq S \subseteq N}  Z(N) \setminus\interior{U'(S)},
\end{align}
where the equality follows from De Morgan's laws.
We prove the following in the Appendix.
\begin{lemma}
\label{lemma: equivalence}
$C(N) = \projection{u}{C'(N)}$.
\end{lemma}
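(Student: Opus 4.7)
The plan is to prove this equality by direct double containment, leveraging Lemma~\ref{lemma: U - U'} (which gives the pointwise equivalence $u \in \interior{U(S)} \iff (x,u) \in \interior{U'(S)}$) together with the fact that $U(N) = \projection{u}{Z(N)}$ by the definition of $U(S)$ in \eqref{eq: U(S)}. Both directions turn out to be essentially bookkeeping once these two facts are in place, so the argument should be short.

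For the forward inclusion $C(N) \subseteq \projection{u}{C'(N)}$, I would take any $u \in C(N)$ and use $u \in U(N) = \projection{u}{Z(N)}$ to produce some $x \in \R^J$ with $(x,u) \in Z(N)$. Then for each $\emptyset \neq S \subseteq N$, membership in $C(N)$ gives $u \notin \interior{U(S)}$, and Lemma~\ref{lemma: U - U'} upgrades this to $(x,u) \notin \interior{U'(S)}$. Combining with the expression for $C'(N)$ in \eqref{eq: C'(N)} as the intersection form (via De Morgan) yields $(x,u) \in C'(N)$, and projecting out the $x$-coordinates gives $u \in \projection{u}{C'(N)}$.

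For the reverse inclusion $\projection{u}{C'(N)} \subseteq C(N)$, I would take $u \in \projection{u}{C'(N)}$ and pick a witness $x$ with $(x,u) \in C'(N)$. Then $(x,u) \in Z(N)$ immediately implies $u \in U(N)$. For each $\emptyset \neq S \subseteq N$, $(x,u) \notin \interior{U'(S)}$, and applying the converse direction of Lemma~\ref{lemma: U - U'} gives $u \notin \interior{U(S)}$. Matching this against the definition of $C(N)$ in \eqref{eq: C(N)} yields $u \in C(N)$.

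There is no serious obstacle here; the only thing to watch is to invoke Lemma~\ref{lemma: U - U'} explicitly on both sides rather than appealing to a generic ``cylindricality'' argument. The value of the lemma lies precisely in moving between a statement about $u$ alone in $\R^N$ and a statement about $(x,u)$ in $\R^J \times \R^N$, and that is what makes the projection commute with the set-difference-with-interiors operation used to define the core.
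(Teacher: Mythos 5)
Your proposal is correct and follows essentially the same double-containment argument as the paper: in each direction it uses $U(N)=\projection{u}{Z(N)}$ to produce or consume a witness $x$, and invokes Lemma~\ref{lemma: U - U'} to transfer non-membership in $\interior{U(S)}$ between the utility space and the design-utility space. No gaps; this matches the paper's proof.
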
 
In other words, $C'(N)$ is an extended formulation of $C(N)$.

For any $\emptyset \neq S \subseteq N$, the set $Z(N) \setminus \interior{U'(S)}$ in the intersection form of \eqref{eq: C'(N)} is a \emph{reverse convex set}: the points in a given polyhedron that lie outside a given open (in this case polyhedral) convex set.
Therefore, $C'(N)$ can be seen as the intersection of $2^n - 1$ reverse convex sets.

Reverse convex sets appear in various areas of mathematical optimization, including concave minimization, integer programming, and mixed-integer nonlinear programming.
Given a polyhedral set $F$, a closed convex set $O$, and a basic solution $f^* \in F \cap O$, an \emph{intersection cut}~\citep{tuy1964concave,balas1971intersection} is a valid inequality for $\conv{F \setminus \interior{O}}$ that cuts $f^*$.
The cut is generated using information from the extreme rays of a translated simplicial cone, derived from a simplex tableau associated with $f^*$, which contains $F$ and has $f^*$ as its apex.
Specifically, each extreme ray $r \in \mathcal{R}$ corresponds to a non-basic variable $f_r$, and one computes 
\begin{equation}
\label{eq: zeta_r}
    \zeta_r \coloneqq \max \{ \zeta \geq 0:  f^* + \zeta r \in O\}.
\end{equation}
Then, with the convention that $1/\infty = 0$, the inequality $\sum_{r \in \mathcal{R}} \frac{1}{\zeta_r}f_r \geq 1$ is valid for $\conv{F \setminus \interior{O}}$; refer to Figure~\ref{fig: intersection cuts} for a schematic diagram and \citet[Chapter 6]{conforti2014integer} for details.
\begin{figure}
    \centering
    \caption{
        Schematic diagram of intersection cuts.
        In Figure~\ref{fig: rays}, there is a polytope $F$, a closed (in this case polyhedral) convex set $O$, and a vertex $f^* \in F \cap \interior{O}$. 
        The extreme rays of the simplicial cone with apex $f^*$, along with their intersections \eqref{eq: zeta_r} with the boundary of $O$, are shown.
        In Figure~\ref{fig: cut}, the intersections define the cut necessary to obtain $\conv{F \setminus \interior{O}}$.
    }
    \label{fig: intersection cuts}
    \begin{subfigure}{0.45\linewidth}
    \centering
    \resizebox{0.65\linewidth}{!}{
        \begin{tikzpicture}
            \fill[cugold,opacity=1/3] (0,0) -- (2,-1) -- (4,1) -- cycle;
            \draw[->,black,line width=0.5pt,opacity=1] (3,-1.5) -- (-1,0.5) -- (-1 + 0.5*1/2.236,0.5 + 0.5*2/2.236);
            \draw[->,black,line width=0.5pt,opacity=1] (5,2) -- (1,-2) -- (1 - 0.5*1/1.414,-2 + 0.5*1/1.414);
            \draw[->,black,line width=0.5pt,opacity=1] (5,1.25) -- (-1,-0.25) -- (-1 + 0.5*1/4.123,-0.25 - 0.5*4/4.123);

            \fill[reddishpurple,opacity=1/5] (2.5,0.25) -- (5, -0.5) -- (4.5,2.5) -- cycle;
            \draw[black,dashed,line width=0.5pt,opacity=1] (4.5,2.5) -- (2.5,0.25) -- (5, -0.5);

            \fill[black] (4,1) circle (3pt);
            \draw[->,black,line width=1.5pt,opacity=1] (4,1) -- (0 + 0.73*4,0 + 0.73*1);
            \draw[->,black,line width=1.5pt,opacity=1] (4,1) -- (2 + 0.54*2,-1 + 0.54*2);

            \node[black, anchor=center] at (2,-0.25) {$F$};
            \node[black, anchor=center] at (4.25,0.25) {$\textrm{int}(O)$};
            \node[black, anchor=center] at (4.125,1.375) {$f^*$};
            
    \end{tikzpicture}
    }
    \caption{
    $F \setminus \interior{O}$.
    }
    \label{fig: rays}
    \end{subfigure}
    \begin{subfigure}{0.45\linewidth}
    \centering
    \resizebox{0.65\linewidth}{!}{
        \begin{tikzpicture}
            
            \fill[cugold,opacity=1/3] (0,0) -- (2,-1) -- (3.08,0.08) -- (2.92,0.73) -- cycle;
            \draw[->,black,line width=0.5pt,opacity=1] (3,-1.5) -- (-1,0.5) -- (-1 + 0.5*1/2.236,0.5 + 0.5*2/2.236);
            \draw[->,black,line width=0.5pt,opacity=1] (5,2) -- (1,-2) -- (1 - 0.5*1/1.414,-2 + 0.5*1/1.414);
            \draw[->,black,line width=0.5pt,opacity=1] (5,1.25) -- (-1,-0.25) -- (-1 + 0.5*1/4.123,-0.25 - 0.5*4/4.123);


            \draw[->,black,line width=0.5pt,opacity=1] (3.33,-0.94) -- (2.67,1.749) -- (2.67 - 0.5*0.9715,1.749 - 0.5*0.2385);     
    \end{tikzpicture}
    }
    \caption{$\conv{F \setminus \interior{O}}$.}
    \label{fig: cut}
    \end{subfigure}
\end{figure}

To leverage this structure algorithmically, note that in the intersection form of \eqref{eq: C'(N)}, we may replace the common term $Z(N)$ with any polyhedron $P \subseteq \R^J \times \R^N$ satisfying $\conv{C'(N)} \subseteq P' \subseteq Z(N)$ without affecting the left-hand side of the expression nor its structure as the intersection of reverse convex sets.
Therefore, and based on Lemma~\ref{lemma: equivalence}, our technique is to maintain a polyhedral relaxation $P'$, starting with $P' = Z(N)$, that iteratively approaches $\conv{C'(N)}$.
In particular, given any extreme point $(x^*, u^*) \in P'$ (e.g., an optimal solution for a linear objective), the existence of a blocking coalition against $u^*$ implies the possibility of generating an intersection cut.
We prove the following (more general) statement in the Appendix.
\begin{lemma}
\label{lemma: conv(P'-int(U'(S))}
Let $P' \subseteq \R^J \times \R^N$ be a polyhedron.
For any $\emptyset \neq S \subseteq N$, if  $(x^*, u^*) \in P'$ is an extreme point satisfying $(x^*, u^*) \in \interior{U'(S)}$, then $(x^*, u^*) \notin \conv{P' \setminus \interior{U'(S)}}$.
\end{lemma}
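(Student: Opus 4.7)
The plan is to argue by contradiction, leveraging only the defining property of an extreme point of a polyhedron: namely, that it cannot be written as a nontrivial convex combination of other points of the polyhedron.

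Specifically, suppose for contradiction that $(x^*, u^*) \in \conv{P' \setminus \interior{U'(S)}}$. By the definition of the convex hull, one can write
\begin{equation*}
    (x^*, u^*) = \sum_{k=1}^{K} \lambda_k (x^k, u^k),
\end{equation*}
for some finite collection of points $(x^k, u^k) \in P' \setminus \interior{U'(S)}$ and convex weights $\lambda_k > 0$ with $\sum_k \lambda_k = 1$. Since $P' \setminus \interior{U'(S)} \subseteq P'$, every $(x^k, u^k)$ lies in $P'$. The extreme-point property of $(x^*, u^*)$ with respect to $P'$ then forces each $(x^k, u^k)$ to equal $(x^*, u^*)$, as otherwise we would have represented $(x^*, u^*)$ as a nontrivial convex combination of points of $P'$ distinct from it.

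It follows that $(x^*, u^*) \in P' \setminus \interior{U'(S)}$, which directly contradicts the hypothesis $(x^*, u^*) \in \interior{U'(S)}$. This completes the argument.

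Regarding obstacles, there are essentially none of substance: the statement is a soft consequence of extreme-point theory, requiring only that $P' \setminus \interior{U'(S)}$ be a subset of $P'$ (which is immediate). The only mild subtlety worth flagging is that one must invoke the extreme-point characterization in its cleanest form, namely that any representation of $(x^*, u^*)$ as a convex combination of points of $P'$ must be trivial; this is why working with the extended polyhedron $P'$ in the design-utility space, rather than with $U(N)$ in the utility space alone, is convenient. Indeed, in the $u$-projection an allocation $u^*$ need not be extreme even when it comes from an extreme point of $P'$, which is precisely the technical reason the algorithmic development in Section~\ref{sec: cutting plane algorithm} is carried out in $\R^J \times \R^N$ rather than in $\R^N$.
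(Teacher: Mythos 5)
Your proof is correct and follows essentially the same route as the paper's: both argue by contradiction, write $(x^*,u^*)$ as a finite convex combination of points of $P' \setminus \interior{U'(S)} \subseteq P'$, and use extremality in $P'$ to force the combination to be trivial, which contradicts $(x^*,u^*) \in \interior{U'(S)}$. The paper merely splits the single-point and multi-point cases explicitly, while you collapse them in one step; the substance is identical.
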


Lemma~\ref{lemma: conv(P'-int(U'(S))} does not apply to projected subspaces, however.
In particular, even if $(x^*, u^*) \in P'$ is an extreme point satisfying $u^* \in \interior{U(S)}$, so that $(x^*, u^*) \notin \conv{P' \setminus \interior{U'(S)}}$ by Lemma~\ref{lemma: conv(P'-int(U'(S))}, it may be that $\projection{u_S}{u^*} \in \interior{\projection{u_S}{U'(S)}}$ while simultaneously $\projection{u_S}{u^*} \in \conv{\projection{u_S}{P'} \setminus \interior{\projection{u_S}{U'(S)}}}$, where $u_S = \{u_i : i \in S\}$.
The reason is that $\projection{u_S}{u^*}$ need not be an extreme point of $\projection{u_S}{P'}$.
In practical terms, this is one reason why our cutting plane algorithm must operate in the design-utility space rather than just in the utility space.

Next, we note that given a polyhedron $P' \subseteq \R^J \times \R^N$ satisfying $\conv{C'(N)} \subseteq P' \subseteq Z(N)$ and an extreme point $(x^*, u^*) \in P'$, any intersection cut derived from Lemma~\ref{lemma: conv(P'-int(U'(S))} will strengthen $P'$ as a relaxation of $\conv{C'(N)}$.
We prove this in the Appendix.
\begin{lemma}
\label{lemma: valid}
Let $P' \subseteq \R^J \times \R^N$ be a polyhedron satisfying $\conv{C'(N)} \subseteq P' \subseteq Z(N)$.
For any $\emptyset \neq S \subseteq N$, a valid inequality for $\conv{P' \setminus\interior{U'(S)}}$ is itself a valid inequality for $\conv{C'(N)}$.
\end{lemma}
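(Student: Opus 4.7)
The plan is to show the set containment $\conv{C'(N)} \subseteq \conv{P' \setminus \interior{U'(S)}}$, from which the lemma follows immediately: any linear inequality satisfied by every point of $\conv{P' \setminus \interior{U'(S)}}$ is automatically satisfied by every point of any subset, and so is valid for $\conv{C'(N)}$.

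First I would unpack $C'(N)$ using the intersection form in \eqref{eq: C'(N)}. For the particular coalition $\emptyset \neq S \subseteq N$ fixed in the statement, this immediately gives $C'(N) \subseteq Z(N) \setminus \interior{U'(S)}$, so in particular $C'(N) \cap \interior{U'(S)} = \emptyset$. Second, the hypothesis $\conv{C'(N)} \subseteq P'$ together with $C'(N) \subseteq \conv{C'(N)}$ gives $C'(N) \subseteq P'$. Combining these two containments yields $C'(N) \subseteq P' \setminus \interior{U'(S)}$.

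Now I would take convex hulls on both sides. Since the convex hull operator is monotone with respect to set inclusion, we obtain $\conv{C'(N)} \subseteq \conv{P' \setminus \interior{U'(S)}}$, which is the containment we sought. A valid inequality for the right-hand side (i.e., a half-space containing it) therefore also contains the left-hand side, and is valid for $\conv{C'(N)}$.

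I do not anticipate any real obstacle here: the argument is essentially a set-theoretic sandwich between the intersection definition of $C'(N)$ and the outer relaxation $P'$, and uses nothing beyond monotonicity of $\conv{\cdot}$. The substantive content of the algorithmic framework lies in Lemma~\ref{lemma: conv(P'-int(U'(S))}, which guarantees that intersection cuts separating an offending extreme point exist; the present lemma is the complementary bookkeeping statement that such cuts never cut off anything in $\conv{C'(N)}$, justifying their iterative addition.
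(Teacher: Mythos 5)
Your proposal is correct and follows essentially the same route as the paper: both reduce the claim to the containment $\conv{C'(N)} \subseteq \conv{P' \setminus \interior{U'(S)}}$, the paper via $\conv{\bigcap_S P' \setminus \interior{U'(S)}} \subseteq \bigcap_S \conv{P' \setminus \interior{U'(S)}}$ and you via the slightly more direct observation $C'(N) \subseteq P' \setminus \interior{U'(S)}$ followed by monotonicity of the convex hull. The difference is cosmetic, and your argument is complete.
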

Finally, we prove that on each iteration, either no blocking coalition exists or progress can be made.
\begin{theorem}
\label{theorem: intersection cuts}
Let $P' \subseteq \R^J \times \R^N$ be a polyhedron satisfying $\conv{C'(N)} \subseteq P' \subseteq Z(N)$ and $(x^*, u^*) \in P'$ be an extreme point.
If $u^* \notin C(N)$, then there exists an intersection cut that strengthens $P'$ into a polyhedron $\hat{P}'$ satisfying $\conv{C'(N)} \subseteq \hat{P}' \subsetneq P' \subseteq Z(N)$.
\end{theorem}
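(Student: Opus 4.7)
The plan is to apply the standard intersection cut construction recalled before the theorem statement, with the closed convex set $O = U'(S)$ chosen so that $(x^*, u^*)$ lies in its interior, and then to verify the three required set inclusions using the structural lemmas already established.

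First, I would identify a suitable coalition $S$. Since $(x^*, u^*) \in P' \subseteq Z(N)$, projection onto $u$ yields $u^* \in U(N)$; together with the hypothesis $u^* \notin C(N)$ and Definition~\ref{def: core}, this guarantees the existence of some $\emptyset \neq S \subseteq N$ with $u^* \in \interior{U(S)}$. Lemma~\ref{lemma: U - U'} then lifts this to $(x^*, u^*) \in \interior{U'(S)}$, so $(x^*, u^*)$ is a legitimate apex for an intersection cut built from $U'(S)$. Lemma~\ref{lemma: conv(P'-int(U'(S))} already guarantees in the abstract that a separating inequality between $(x^*, u^*)$ and $\conv{P' \setminus \interior{U'(S)}}$ exists; the remaining task is to realize one constructively as an intersection cut.

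Next, I would invoke the construction of \citet{tuy1964concave} and \citet{balas1971intersection} with $F = P'$ and $O = U'(S)$. Since $(x^*, u^*)$ is an extreme point of $P'$, it corresponds to a basic feasible solution of a standard-form representation of $P'$, which yields a simplicial cone with apex $(x^*, u^*)$ containing $P'$. Enumerating its extreme rays $\mathcal{R}$ and the associated nonbasic variables $f_r$, I compute $\lambda_r$ as in \eqref{eq: lambda_r}; because $(x^*, u^*) \in \interior{U'(S)}$, each $\lambda_r > 0$, with the convention $1/\infty = 0$ handling rays that never exit $U'(S)$. The inequality $\sum_{r \in \mathcal{R}} (1/\lambda_r) f_r \geq 1$ is then valid for $\conv{P' \setminus \interior{U'(S)}}$ by the standard intersection cut argument.

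Finally, I would verify the three inclusions for $\hat{P}'$, defined as $P'$ intersected with the halfspace of this cut. Lemma~\ref{lemma: valid} gives $\conv{C'(N)} \subseteq \hat{P}'$; at the apex $(x^*, u^*)$ every nonbasic $f_r$ vanishes, so the cut reduces to $0 \geq 1$ and is violated, giving $(x^*, u^*) \in P' \setminus \hat{P}'$ and thus $\hat{P}' \subsetneq P'$; the inclusion $\hat{P}' \subseteq P' \subseteq Z(N)$ is immediate. The step that deserves the most care is the construction of the simplicial cone at $(x^*, u^*)$, since this depends on representing $P'$ in a form where $(x^*, u^*)$ is a basis; the extreme-point hypothesis makes this standard but is the one technical point a fully rigorous proof would need to spell out.
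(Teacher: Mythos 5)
Your proposal is correct and follows essentially the same route as the paper: project to get $u^* \in U(N)$, use $u^* \notin C(N)$ to extract a coalition $S$ with $u^* \in \interior{U(S)}$, lift via Lemma~\ref{lemma: U - U'} to $(x^*,u^*) \in \interior{U'(S)}$, generate an intersection cut at the extreme point, and invoke Lemma~\ref{lemma: valid} for validity with respect to $\conv{C'(N)}$. The only difference is presentational: you spell out the simplex-tableau construction and check violation at the apex directly, whereas the paper concludes the existence of a separating intersection cut from Lemma~\ref{lemma: conv(P'-int(U'(S))} and the cited background on intersection cuts.
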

\begin{proof}{Proof}
Since $P' \subseteq Z(N)$, it follows that $(x^*, u^*) \in Z(N)$.
Therefore, $u^* \in \projection{u}{Z(N)} = U(N)$.
Then, since $u^* \notin C(N)$, there must exist some $\emptyset \neq S \subseteq N$ such that $u^* \in \interior{U(S)}$.
By Lemma~\ref{lemma: U - U'}, this implies that $(x^*, u^*) \in \interior{U'(S)}$.
In turn, Lemma~\ref{lemma: conv(P'-int(U'(S))} implies that $(x^*, u^*) \notin \conv{P' \setminus \interior{U'(S)}}$, so that there exists an intersection cut that cuts $(x^*, u^*)$.
Lastly, Lemma~\ref{lemma: valid} implies the cut is valid for $\conv{C'(N)}$, so that it strengthens $P'$ into a polyhedron $\hat{P}'$ satisfying satisfying $\conv{C'(N)} \subseteq \hat{P}' \subsetneq P' \subseteq Z(N)$.
\end{proof}

To operationalize Theorem~\ref{theorem: intersection cuts}, first note that \eqref{eq: membership} and Lemma~\ref{lemma: membership} can be used to determine whether $u^* \notin C(N)$.
Second, if we find that $(x^*, u^*) \in \interior{U'(S)}$ for some $\emptyset \neq S \subseteq N$, then for each corresponding (non-basic variable) extreme ray $r \in \mathcal{R}$ we can compute \eqref{eq: zeta_r} as
\begin{equation}
\label{eq: zeta_r S}
    \zeta_r = \max \left\{ \zeta \geq 0: (x^*, u^*) + \zeta r \in U'(S) \right\},
\end{equation}
formulated as a linear program.
The full method for a linear objective is summarized in Algorithm~\ref{alg: algorithm}.
\begin{algorithm}[ht]
    \SetAlgoNoEnd
	\KwIn{NTU LP game $\left(A, \{b^i\}_{i \in N}, \{v^i\}_{i \in N}\right)$, linear objective $g: \R^J \times \R^N \rightarrow \R$, objection tolerance $\delta \geq 0$}
	\KwOut{$(x^*, u^*)$}
    $P' \gets Z(N)$ \tcp*{Specified by $A$, $\{b^i\}_{i \in N}$ and $\{v^i\}_{i \in N}$}
    $(x^*, u^*) \gets \argmax_{(x, u) \in P'} g(x,u)$\;
    \While{$\epsilon(u^*) > \delta$}{
        $S \gets \textrm{blockingCoalition}(u^*)$ \tcp*{Generated using \eqref{eq: membership}}
        $P' \gets P' \cap \textrm{intersectionCut}(P', U'(S), (x^*, u^*))$ \tcp*{Generated using \eqref{eq: zeta_r S}}
        $(x^*, u^*) \gets \argmax_{(x, u) \in P'} g(x,u)$\;
    }    
    \textrm{return} $(x^*, u^*)$\;
	\caption{Cutting plane algorithm for a linear objective over $\conv{C'(N)}$.}
	\label{alg: algorithm}
\end{algorithm}

The finite convergence of Algorithm~\ref{alg: algorithm} is not immediate.
For example, \citet{balas1971intersection} proved the finite convergence of an intersection cut algorithm for integer programs based on a cut integerization technique that is however not applicable in our setting.
\citet[Theorem 3.1]{porembski2001finitely} showed that intersection cut algorithms have finite convergence if the distance between a point being cut and the corresponding cutting plane (i.e., the ``depth'' of the cut) is bounded from below.
This is the case for any practical implementation of Algorithm~\ref{alg: algorithm} with objection tolerance $\delta > 0$.
We may assume that $P'$ is bounded after adding the cuts corresponding to singleton coalitions.
Together with the fact that the \texttt{while} loop passes only if $\epsilon(u^*) > \delta$, we find that the depth of a cut cannot be arbitrarily small.
That is, for any fixed $\delta > 0$, after a finite number of iterations, Algorithm~\ref{alg: algorithm} returns a point $(x^*, u^*) \in Z(N)$ that is approximately in $C'(N)$ in the sense that no coalition can improve upon $u^*$ beyond an additive $\delta$ for each of its members.

As we illustrate in the following continuation of the original instance in Example~\ref{ex: insufficient}, in the event the core (or its relaxation by the choice of $\delta$) is empty, Algorithm~\ref{alg: algorithm} certifies it by producing an infeasible system of linear inequalities.
\begin{example}[Example~\ref{ex: insufficient}, continued.]
\label{ex: mis}
The Pareto-optimal extreme points in $Z(\{1,2,3\})$ are $z = (\underbrace{3,0}_{x_1, \ x_2},\underbrace{2,2,-2}_{u_1, \ u_2, \ u_3})$ and $z' = (\underbrace{0,3}_{x_1, \ x_2}, \underbrace{1,1,1}_{u_1, \ u_2, \ u_3})$.
Therefore, $C(\{1,2,3\}) \subseteq \conv{\{z, z'\}} \coloneqq \{(1 - \eta) z + \eta z' : 0 \leq \eta \leq 1\}$.
The singleton coalition $S = \{3\}$ requires $u_3 \geq \frac{1}{3}$, which induces the constraint $\eta \geq \frac{7}{9}$.
The resulting apex at $\eta = \frac{7}{9}$, namely $z^* = (\frac{2}{3}, \frac{7}{3}, \frac{11}{9}, \frac{11}{9}, \frac{1}{3})$, satisfies $z^* \in \interior{U'(\{1,2\}}$ since it is strongly dominated by the point $u^* = (\cdot,\cdot,\frac{4}{3}, \frac{4}{3}, \cdot) \in U'(\{1,2\})$ achieved if the players in $S = \{1, 2\}$ invest their two units of budget on $x_1$.
In turn, the extreme ray along the line from $z^*$ to $z$ induces the incompatible constraint $\eta \leq \frac{2}{3}$.
Equivalently, if represented as a system of linear inequalities, these intersection cuts on $Z(\{1,2,3\})$ result in the following irreducible infeasible sub-system: \textit{i)} $x_1 + x_2 \leq 3$, \textit{ii)} $u_3 \leq -\frac{2}{3}x_1 + \frac{1}{3} x_2$, \textit{iii)} $u_3 \geq \frac{1}{3}$, and \textit{iv)} $x_1 \geq 1$.
\end{example}

We conclude this section by expanding on the reason for working in the design-utility space $\R^J \times \R^N$ rather than the utility space $\R^N$, where the core is originally defined.
Recall the sets $\{U(S)\}_{\emptyset \neq S \subseteq N}$ are represented implicitly, namely as the projections $U(S) = \projection{u}{Z(S)}$ rather than by explicit systems of linear inequalities (and, by Example~\ref{ex: cyclic}, such representations could be sizable).
First, the lack of explicit representations complicates the practicality of generating intersection cuts in the utility space, as access to the simplex tableau is necessary.
Second, the general strategy of working in an extended space has the further advantage that it can accommodate the optimization of objectives that are non-linear in the design-utility space, as long as they can be linearized through an extended polyhedral formulation.
For example, to solve $\max_{(x, u) \in C'(N)} \min_{i \in N} u_i$ one can introduce a new variable $w$ constrained by $w \leq u_i$ for all $i \in N$.
By generating cuts as in Algorithm~\ref{alg: algorithm} but for this higher dimensional relaxation, one ultimately obtains a higher dimensional vertex whose projection into the utility space is the desired optimizer.

\section{Case Study: Frequency Setting in the Chicago Bus System}
\label{sec: frequency setting in the chicago bus system}

In this section we revisit the ridership versus coverage dilemma introduced in Section~\ref{sec: motivating example} through a data-driven case study.
In Section~\ref{sec: model} we describe our model of transit line frequency setting and its corresponding NTU LP game.
In Section~\ref{sec: data} we describe the datasets we use, while in Section~\ref{sec: practical implementation} we describe our practical implementation of Algorithm~\ref{alg: algorithm}.
Finally, in Section~\ref{sec: operational and distributive implications} we examine the operational and distributive implications of cooperation on transit service plans.

\subsection{Model}
\label{sec: model}

Let $G = (V, E)$ be an undirected graph representing a city's road network, where the nodes $V$ correspond to intersections and the edges $E$ correspond to street segments.
The edges are weighted by their length $\ell: E \rightarrow \mathbb{Q}_{\geq 0}$ in meters.
Suppose a transit agency operates a set $J$ of bus lines, such as the \textcolor{orange}{A} and \textcolor{blue}{B} lines from Figure~\ref{fig: dilemma}.
More formally, a line is defined as a simple walk in $G$.
For each line $j \in J$, let $s^j \subseteq V$ denote its fixed set of stops where passengers can board or alight.
Moreover, let $a \in \mathbb{Q}_{\geq 0}^{J}$ be a vector encoding the lengths of the lines in kilometers ($\mathrm{km}$).

The agency serves a set $R$ of riders from a set $N$ of politically organized geographic regions, such as city council districts within a municipality or counties within a metropolitan area; the regions (but not the individual riders) form the set of players in the game.
Each rider $r \in R$ has a fixed origin $o^{r} \in V$ and destination $d^{r} \in V$, and they value line $j$ to the extent to which it is accessible at both ends of their trip; in this model, no line transfers are allowed.
To this effect, let $\ell(o^{r},s^j) \in \mathbb{Q}_{\geq 0}$ denote the distance between $o^{r}$ and the closest stop in $s^j$ in meters.
We model the accessibility of line $j$ with respect to $o^{r}$ through the piece-wise linear function
\begin{equation*}
    v_j^{o^{r}} = 
    \begin{cases}
    1, & \text{if $\ell(o^{r},s^{r}) < 400$} \\
    1 - \frac{\ell(o^{r},s^j) - 400}{1600 - 400}, &  \text{if $400 \leq \ell(o^{r},s^j) \leq 1600$} \\
    0, & \text{if $\ell(o^{r},s^{r}) > 1600$,}
    \end{cases}
\end{equation*}
where the threshold distances of $400$ and $1600$ meters are taken from the transportation literature on the topic of accessibility and willingness to walk~\citep{walker2024human}.
We model accessibility with respect to $d^{r}$ similarly.
Then, the overall value rider $r$ has for line $j$ is its accessibility with respect to $o^{r}$ or $d^{r}$, whichever is worst.
That is,
\begin{equation*}
    v_j^{r} = \min\{v_j^{o^{r}}, v_j^{d^{r}}\}.
\end{equation*}
In this way, we form a vector $v^r \in \mathbb{Q}_{\geq 0}^J$ that encodes the value the rider has for each line.
In turn, each region $i \in N$ is concerned with riders whose origin or destination are within its boundaries.
It aggregates these riders' roundtrip valuation vectors into a single, average valuation vector
\begin{equation}
\label{eq: valuation model}
    v^i 
    = \frac{\sum_{r \in R} \left(\mathbf{1}_{o^r \in \Gamma(i)} + \mathbf{1}_{d^r \in \Gamma(i)} \right)v^r}{\sum_{r \in R} \left(\mathbf{1}_{o^r \in \Gamma(i)} + \mathbf{1}_{d^r \in \Gamma(i)} \right)},
\end{equation}
where $\mathbf{1}_{\{\cdot\}}$ is the indicator function and $\Gamma(i)$ is its geographic area.
Therefore, $N$ can be interpreted as a set of representatives (e.g., city council members) who act on behalf of the collective interests and resources of their constituents, in this case transit riders.

We assume the agency funds its operations through fare-box recovery.
Each rider $r$ contributes a flat rate of $\$1$ to the system on each direction of travel, so that 
\begin{equation}
\label{eq: budget model}
    b^i = \sum_{r \in R} \left(\mathbf{1}_{o^r \in \Gamma(i)} + \mathbf{1}_{d^r \in \Gamma(i)}\right)
\end{equation}
is the total fare amount paid within region $i$ and $b(N) = \sum_{i \in N} b^i = 2|R|$ is the total fare amount paid in the entire service area.
Since ridership correlates with population, this budget model can also be interpreted as a form of population-based budgetary apportionment.

We assume the frequency at which a line operates can be measured in terms of $\mathrm{\$/km}$.
That is, the frequency of a line is proportional to the budget apportioned to it, and inversely proportional to its length.
In this way, the agency's frequency design space \eqref{eq: X(N)} is given by $X(N) = \{x \in \mathbb{R}_{\geq 0}^J : a^T x \leq b(N)\}$.
If a subset of regions $\emptyset \neq S \subseteq N$ cooperate to form their own transit agency (e.g., embodying the real-world examples in Section~\ref{sec: motivating example}), their frequency design space \eqref{eq: X(S)} is given by $X(S) \coloneqq \{x \in \R_{\geq 0}^J : a^Tx \leq b(S) \}$.
Following \eqref{eq: Z(S)} and \eqref{eq: U(S)}, the region-line valuation model \eqref{eq: valuation model} in turn gives rise to their design-utility space $Z(S)$ and their utility space $U(S) = \projection{u}{Z(S)}$.
This completes the specification of the NTU LP game $\{U(S)\}_{\emptyset \neq S \subseteq N}$;
by Theorem~\ref{theorem: balanced}, $C(N) \neq \emptyset$.

In our experiments we distinguish whether the agency pursues a strictly ``maximum coverage'' or ``maximum ridership'' objective (refer to Section~\ref{sec: motivating example} and Figure~\ref{fig: dilemma}).
We approximate the ridership objective through an utilitarian conceptualization of social welfare.
That is, we assume that in this case the agency would operate a service plan $x^* \in X(N)$ such that
\begin{equation}
\label{eq: utilitarian}
    (x^*, u^*) \in \argmax_{(x, u) \in Z(N)} \frac{1}{b(N)} \sum_{i \in N} b^i u_i.
\end{equation}
Similarly, we approximate the coverage objective through a maximin conceptualization of social welfare.
We assume that in this case the agency would operate a service plan $x^* \in X(N)$ such that
\begin{equation}
\label{eq: maximin}
    (x^*, u^*) \in \argmax_{(x, u) \in Z(N)} \min_{i \in N} u_i.
\end{equation}
To implement \eqref{eq: maximin}, we employ an extended formulation that linearizes the maximin objective, as described in Section~\ref{sec: cutting plane algorithm}.
We break ties in \eqref{eq: maximin} in favor of solutions that maximize \eqref{eq: utilitarian} as a secondary objective by adding it to the objective function with a small multiplicative factor.

\subsection{Data}
\label{sec: data}

Our implementation of the model described in Section~\ref{sec: model} is based on publicly available datasets from Cook County, IL, whose county seat is the city of Chicago.
We obtain a road network on $96,413$ nodes and $147,430$ edges using \texttt{osmnx}~\citep{boeing2017osmnx}.
We map the stop sequence of all bus lines in the county, queried from OpenStreetMap~\citep{osm}, to the nearest nodes in the network using their geographical coordinates.
We preprocess the set of lines to obtain their route lengths, and to remove any duplicate or exceedingly short lines (those involving $5$ stops or less).
We are ultimately left with $499$ different lines.

We let $N$ be Chicago's 77 community areas; this is an official geographic division of the city that is commonly used for statistical purposes.
Its shapefiles are available through the \cite{cdp}.
In Figure~\ref{fig: lines}, we plot the different transit lines overlaid with the community areas.
To approximate the rider demand for the bus system, we consider all ride-hailing trips dated July 1, 2024 available through the \citet{cdp}.
We discard any trip whose line valuation model \eqref{eq: valuation model} is identically zero for all lines (i.e., trips that are too far from all lines), whose duration was less than a minute, whose length was less than $800$ meters, or whose endpoints were at the Chicago O'Hare International Airport or the Chicago Midway International Airport (which are primarily served by heavy rail rather than buses).
We label the 70,524 remaining trips with their community areas of origin and destination to instantiate the rider set $R$ and the models \eqref{eq: valuation model} and \eqref{eq: budget model} for $N$.
We represent the distribution of community area budgets in space in Figure~\ref{fig: budgets}.

\begin{figure}[ht]
    \centering
    \caption{
        Illustration of the model data overlaid in space with Chicago's 77 community areas.
        In Figure~\ref{fig: lines}, the $499$ different bus lines are shown in various colors.
        In Figure~\ref{fig: budgets}, each community area's budget is represented by a black circle centered at the area's ridership-weighted centroid and with radius logarithmic in $b^i$.
    }
    \label{fig: data}
    \begin{subfigure}{0.3\linewidth}
        \centering
        \includegraphics[width=0.75\linewidth]{./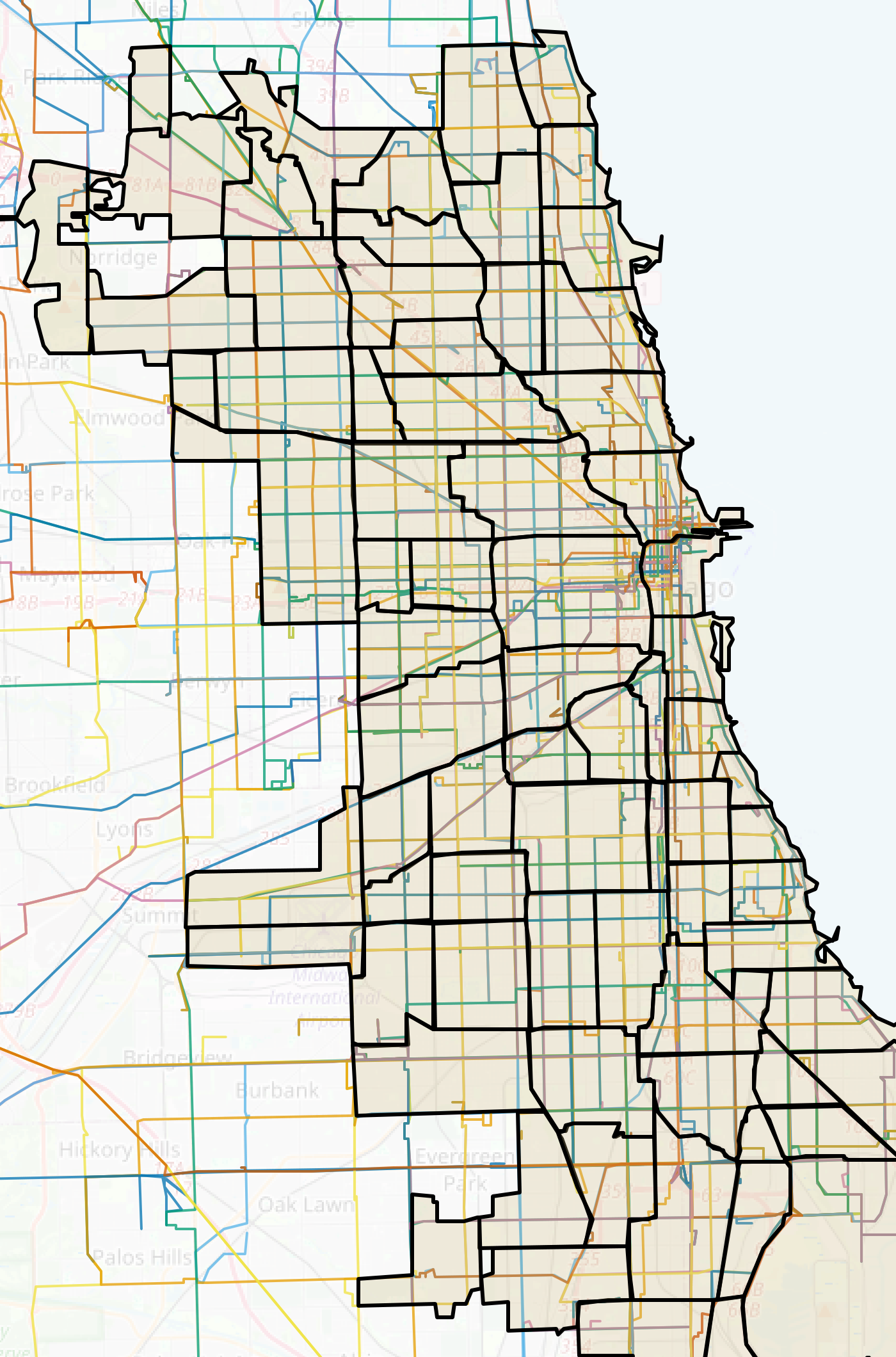}
        \caption{Bus lines $J$.}
        \label{fig: lines}
    \end{subfigure}
    \begin{subfigure}{0.3\linewidth}
        \centering
        \includegraphics[width=0.75\linewidth]{./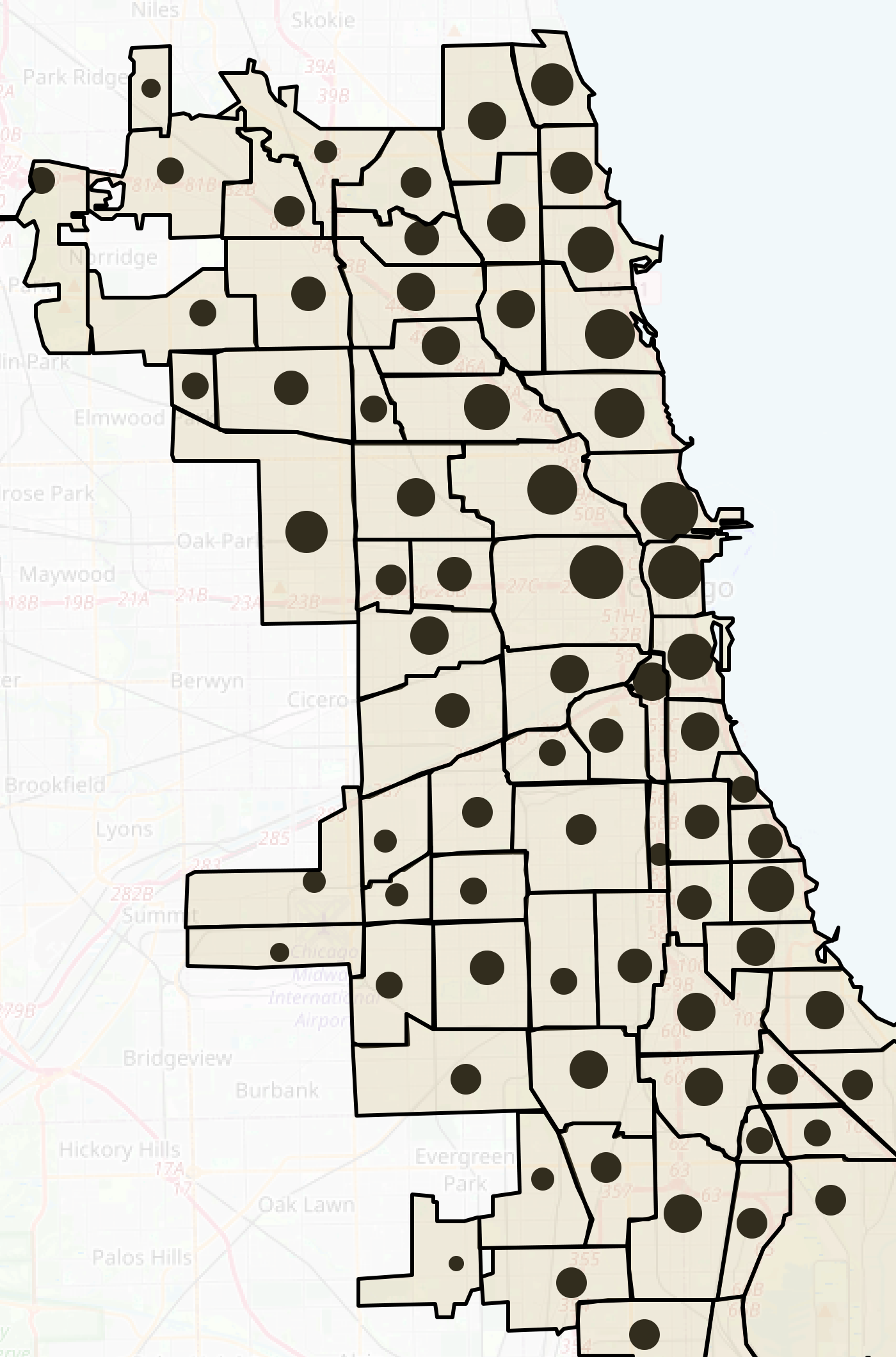}
        \caption{Budgets $b^i$ (on a $\log_2$ scale).}
        \label{fig: budgets}
    \end{subfigure}
\end{figure}
Note the dimensions of the resulting NTU LP game: $|K| = 1$, $|J| = 499$, and $|N| = 77$.

\subsection{Implementation}
\label{sec: practical implementation}

We run Algorithm~\ref{alg: algorithm} for $100$ iterations in conjunction with various practical improvements.
Pure cutting plane algorithms such as Algorithm~\ref{alg: algorithm} are known to exhibit challenges with numerical stability and speed of convergence;
managing these challenges is non-trivial: in integer programming for example, they can be alleviated by implementing cutting planes in combination with branch-and-cut, or by using a lexicographic version of the dual simplex method~\citep{zanette2011lexicography}.
However, these kinds of improvements are not readily available to us because our description of the core does not involve integer variables.
Therefore, we design problem-specific practical improvements that aim to alleviate two qualitatively different kinds of challenges: good cuts are computationally expensive to generate, and they can become increasingly parallel.

As a first improvement, we add the individual rationality (IR) constraints $u_i \geq b^i \max_{j \in J} \{v_j^i / a_j\} > 0$ for all $i \in N$ since these are simple to produce and are part of the definition \eqref{eq: C(N)} of the core.
Second, as a major enhancement, we adapt \eqref{eq: membership} to optimize for a multiplicative least objection rather than an additive least objection.
Specifically, we replace \eqref{cnstr: epsilon} by 
\begin{equation}
\label{eq: epsilon revised}
    y_i^S \left(\epsilon - u_i^S/u_i^*\right) \leq 0, \quad \forall i \in N,
\end{equation}
which are well-defined since the individual rationality constraints ensure $u_i^* > 0$ for all $i \in N$.
In this revised version of \eqref{eq: membership}, $\epsilon(u^*) \geq 1$ since setting $y_i = 1$ and $u_i^S = u_i^*$ for all $i \in N$ induces a feasible solution with $\epsilon = 1$.
Therefore, testing whether $\epsilon(u^*) > 1$ is a valid core membership test akin to Lemma~\ref{lemma: membership}.
Our computational experience is that optimizing for multiplicative least objections produces a richer set of blocking coalitions throughout the execution of the algorithm.
Moreover, multiplicative least objections are more interpretable since $\epsilon(u^*)$ becomes a unit-less quantity: it represents the least utility improvement available to any member of a blocking coalition as a factor of their incumbent utility.
To improve numerical stability compared to using a big-$M$, we implement \eqref{eq: epsilon revised} using the solver's dedicated indicator constraint feature~\citep{gurobi2025}.

As another major enhancement, we maintain the collection $\mathcal{S} = \{S^1, S^2, \ldots\} \subseteq 2^N$ of blocking coalitions generated throughout the execution of the algorithm.
In each new iteration, we use this collection in two different ways:
\begin{enumerate}
    \item 
    We instantiate \eqref{eq: membership} with the incumbent $\mathcal{S}$ as a warm start collection of candidate coalitions.
    For each $S \in \mathcal{S}$, we give the solver the hint $y_i = 1$ for $i \in S$ and $y_i = 0$ for $i \in N \setminus S$.  
    In our implementation, we give the solver a timeout of $5$ minutes, at which point the incumbent solution is adopted and its blocking coalition is added to $\mathcal{S}$.
    Although this blocking coalition need not be optimal if the timeout comes into effect, it suffices to produce an intersection cut.
    \item 
    Cutting plane algorithms tend to perform better if cuts are introduced in rounds rather than sequentially.
    Therefore, in addition to cuts derived from the newly identified blocking coalitions, we derive cuts from coalitions in $\mathcal{S}$ that are also blocking coalitions in the current iteration.
    To alleviate numerical instabilities, we discard any such cut whose smallest-to-largest coefficient ratio is under $10^{-9}$, or whose corresponding coalition intersects the newly identified blocking coalition or has multiplicative least objection under $1 + 10^{-2}$.
\end{enumerate}

As a final improvement, in each iteration we temporarily augment the warm start collection $\mathcal{S}$ with candidate coalitions identified through a combinatorial heuristic that depends on the incumbent $u^* \in U(N)$.
The heuristic considers the lines $j \in J$ one at a time and sorts the players $i \in N$ in decreasing order of valuation $v_j^i$.
In this order, it finds the prefix coalition $\emptyset \neq S \subseteq N$ that achieves the largest multiplicative least objection (with respect to $u^*$) assuming its entire budget is dedicated to the operation of line $j$.
We add said $S$ to the warm start collection of the current iteration if its multiplicative least objection is greater than $1$.

In short, these collective adjustments aim to efficiently produce distinct, deep, and numerically stable cuts.
Following the solver's guidelines for handling numerical issues, we use the dual simplex method as our linear programming algorithm and set solver parameters that shift its focus toward more careful numerical computations~\citep{gurobi2025}.

\subsection{Operational and Distributive Implications of Cooperation}
\label{sec: operational and distributive implications}

Figures~\ref{fig: maximin service plans} and \ref{fig: utilitarian service plans} illustrate the progression of Algorithm~\ref{alg: algorithm} under the maximin \eqref{eq: maximin} and utilitarian \eqref{eq: utilitarian} service goals, respectively.
In these figures, the width of a line is logarithmic in its frequency.

As shown in Figure~\ref{fig: maximin inital service plan}, the initial maximin service plan (without any cooperation constraints) operates an expansive grid of suburban lines.
After IR constraints are introduced in Figure~\ref{fig: maximin ir service plan}, some suburban lines are dropped and frequency is increased downtown.
Then, the algorithm progresses as shown in Figure~\ref{fig: maximin iteration service plan}: as a blocking coalition is identified, a cut is added to break it apart by improving service for (at least some) of its members.
In the particular iteration shown, six contiguous community areas in the North Shore form a blocking coalition (left) that advocates for improved service along this corridor (right).
\begin{figure}[ht]
    \centering
    \caption{
        Progression of maximin service plans toward a cooperative solution.
        The different bus lines are color-coded, and the width of each line $j \in J$ is proportional to $\log_2(1 + x_j^*) \geq 0$ in the incumbent service plan.
    }
    \label{fig: maximin service plans}
    \begin{subfigure}{0.24\linewidth}
        \centering
        \includegraphics[width=\linewidth]{./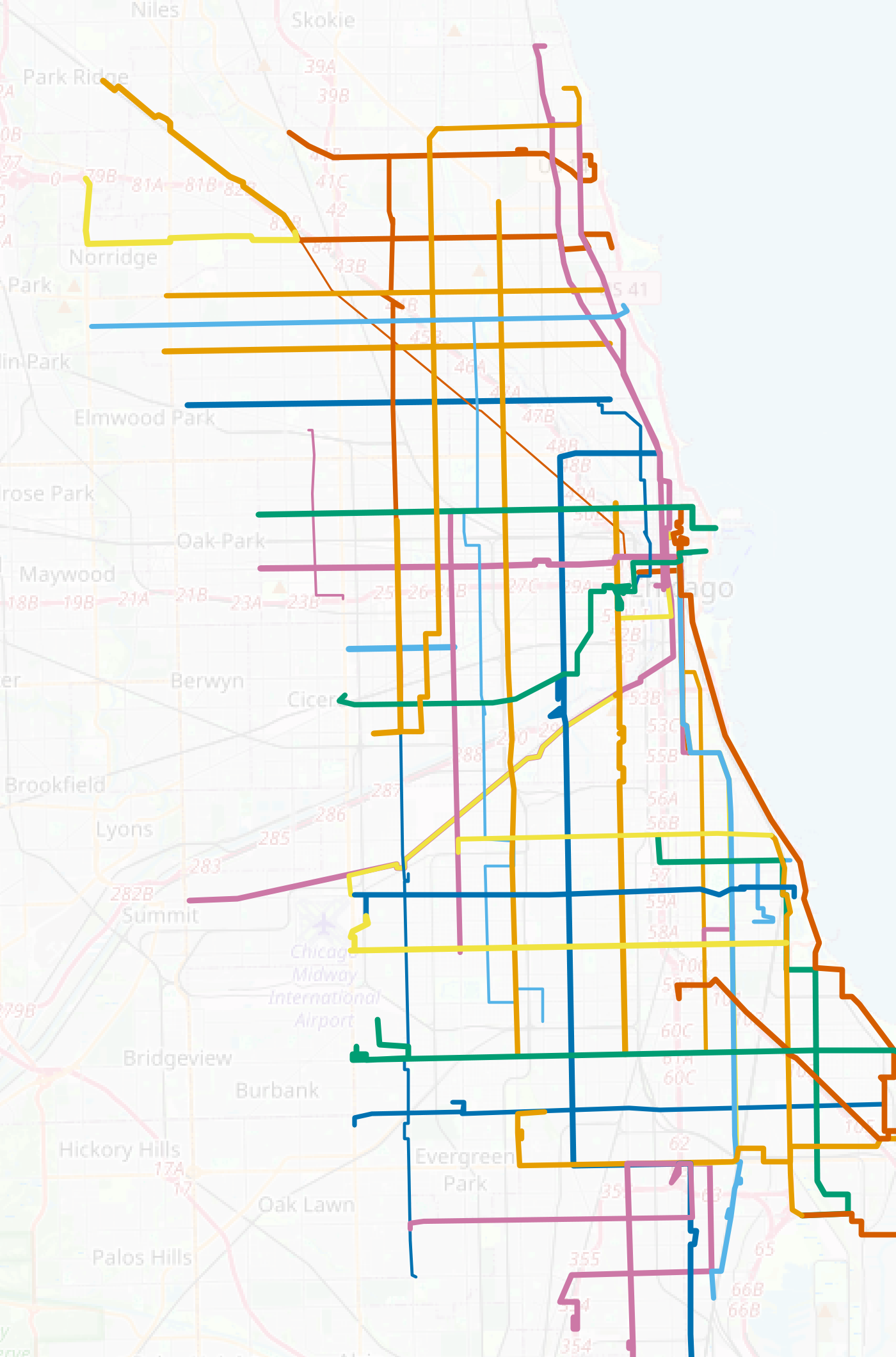}
        \caption{
            Initial service plan.
        }
        \label{fig: maximin inital service plan}
    \end{subfigure}
    \hfill
    \begin{subfigure}{0.24\linewidth}
        \centering
        \includegraphics[width=\linewidth]{./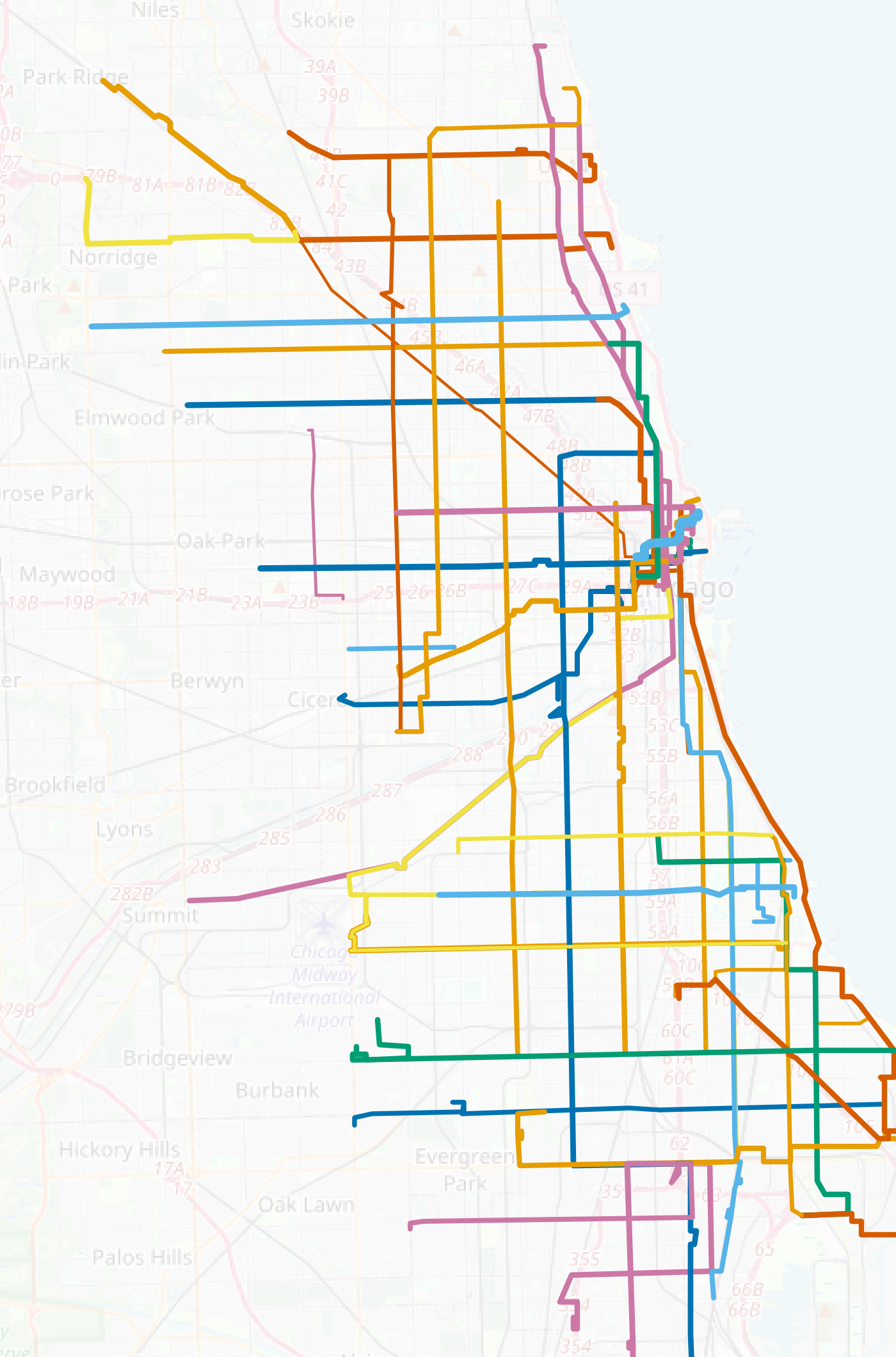}
        \caption{
            After IR constraints.
        }
        \label{fig: maximin ir service plan}
    \end{subfigure}
    \hfill
    \begin{subfigure}{0.5\linewidth}
        \centering
        \includegraphics[width=0.48\linewidth]{./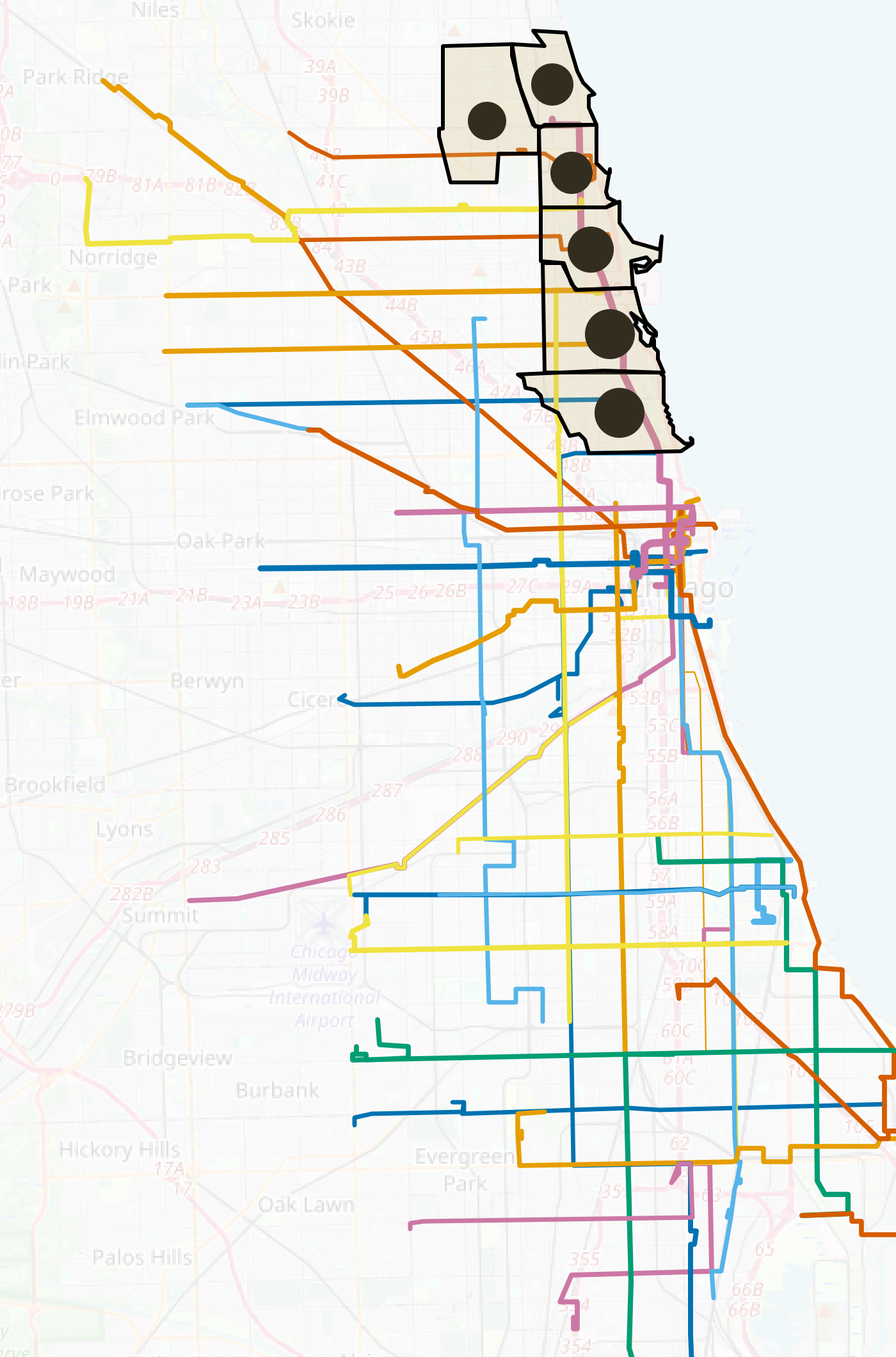}
        \includegraphics[width=0.48\linewidth]{./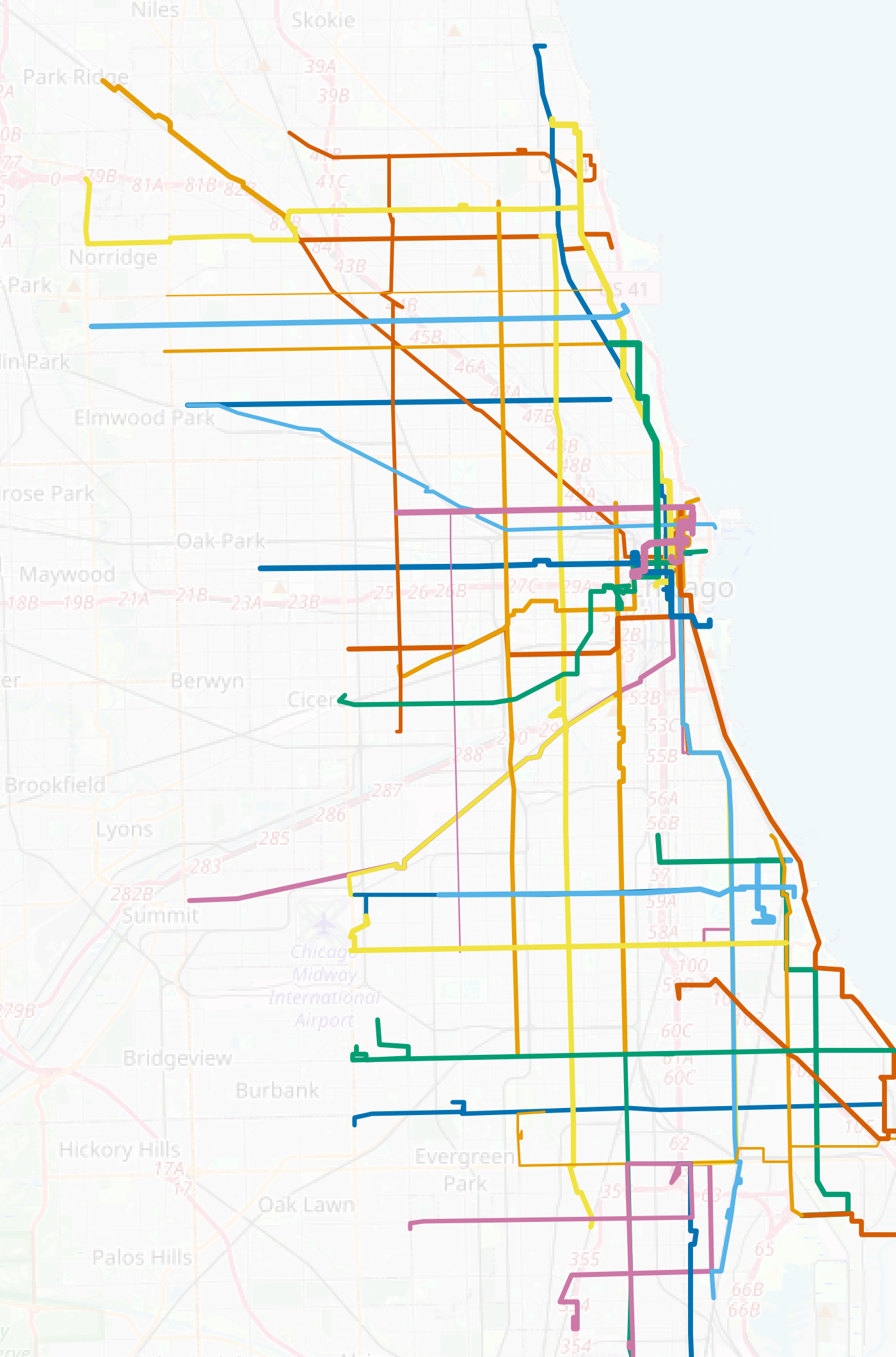}
        \caption{
            An iteration of the cutting plane algorithm.
        }
        \label{fig: maximin iteration service plan}
    \end{subfigure}
\end{figure}

The service plan progression under the utilitarian service goal is vastly different.
As seen in Figure~\ref{fig: utilitarian inital service plan}, the initial utilitarian service plan operates a single downtown line.
This unrealistic phenomenon is a polyhedral artifact: without any cooperation constraints, \eqref{eq: utilitarian} maximizes a linear objective function subject to a single linear constraint, so that $x^*$ is supported on a single line.
Once the IR constraints are introduced in Figure~\ref{fig: utilitarian ir service plan}, this artifact is eliminated and the service plan operates a sparse grid of suburban trunk lines, along with frequent service downtown.
The algorithm then progresses as shown in Figure~\ref{fig: utilitarian iteration service plan}.
In the particular iteration shown, twelve discontiguous community areas in the South Side and Far South Side form a blocking coalition (left) that advocates for a denser service network traversing a larger fraction of their geographical regions (right).

\begin{figure}[ht]
    \centering
    \caption{
        Progression of utilitarian service plans toward a cooperative solution.
        The different bus lines are color-coded, and the width of each line $j \in J$ is proportional to $\log_2(1 + x_j^*) \geq 0$ in the incumbent service plan.
    }
    \label{fig: utilitarian service plans}
    \begin{subfigure}{0.24\linewidth}
        \centering
        \includegraphics[width=\linewidth]{./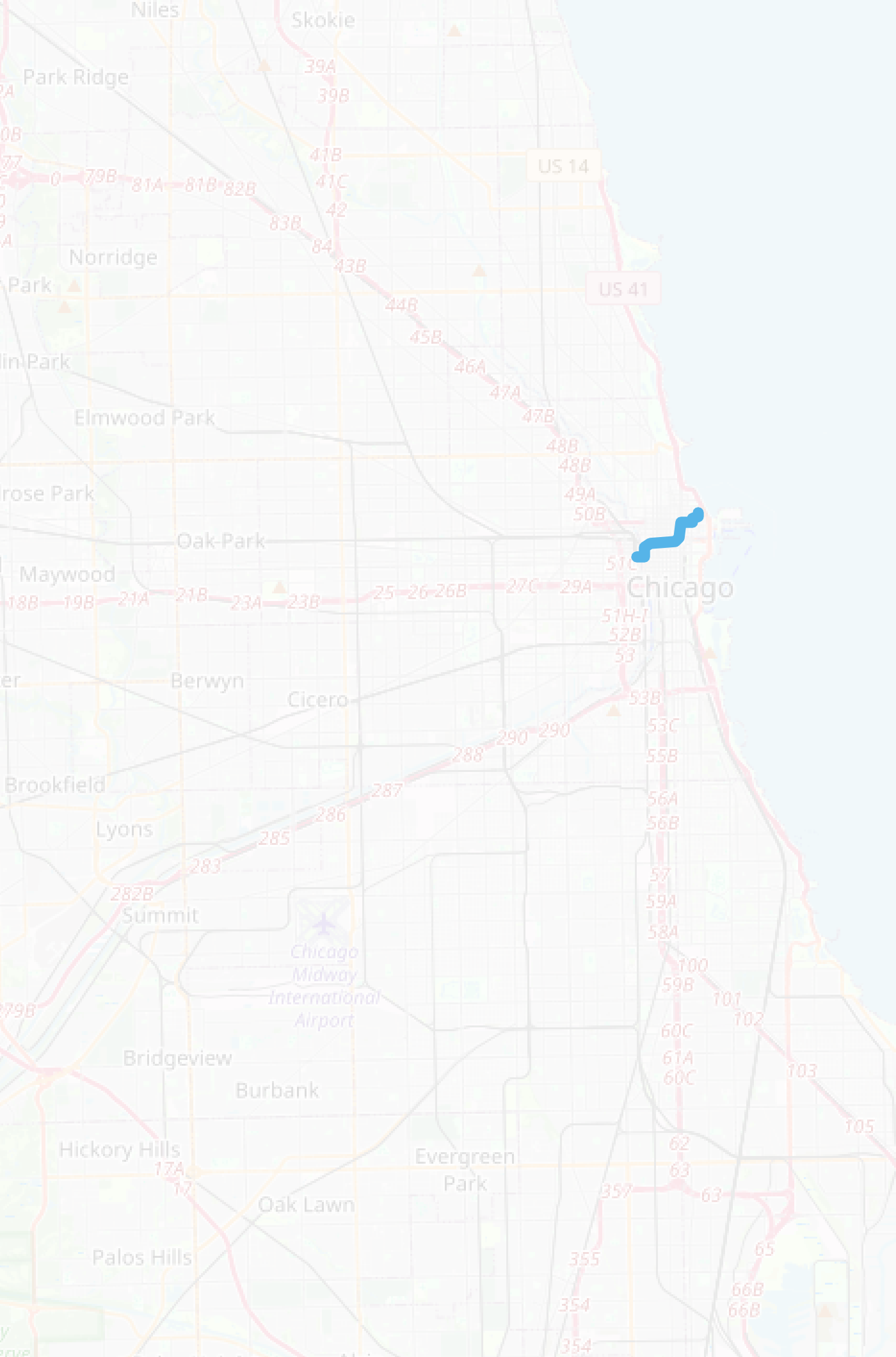}
        \caption{
            Initial service plan.
        }
        \label{fig: utilitarian inital service plan}
    \end{subfigure}
    \hfill
    \begin{subfigure}{0.24\linewidth}
        \centering
        \includegraphics[width=\linewidth]{./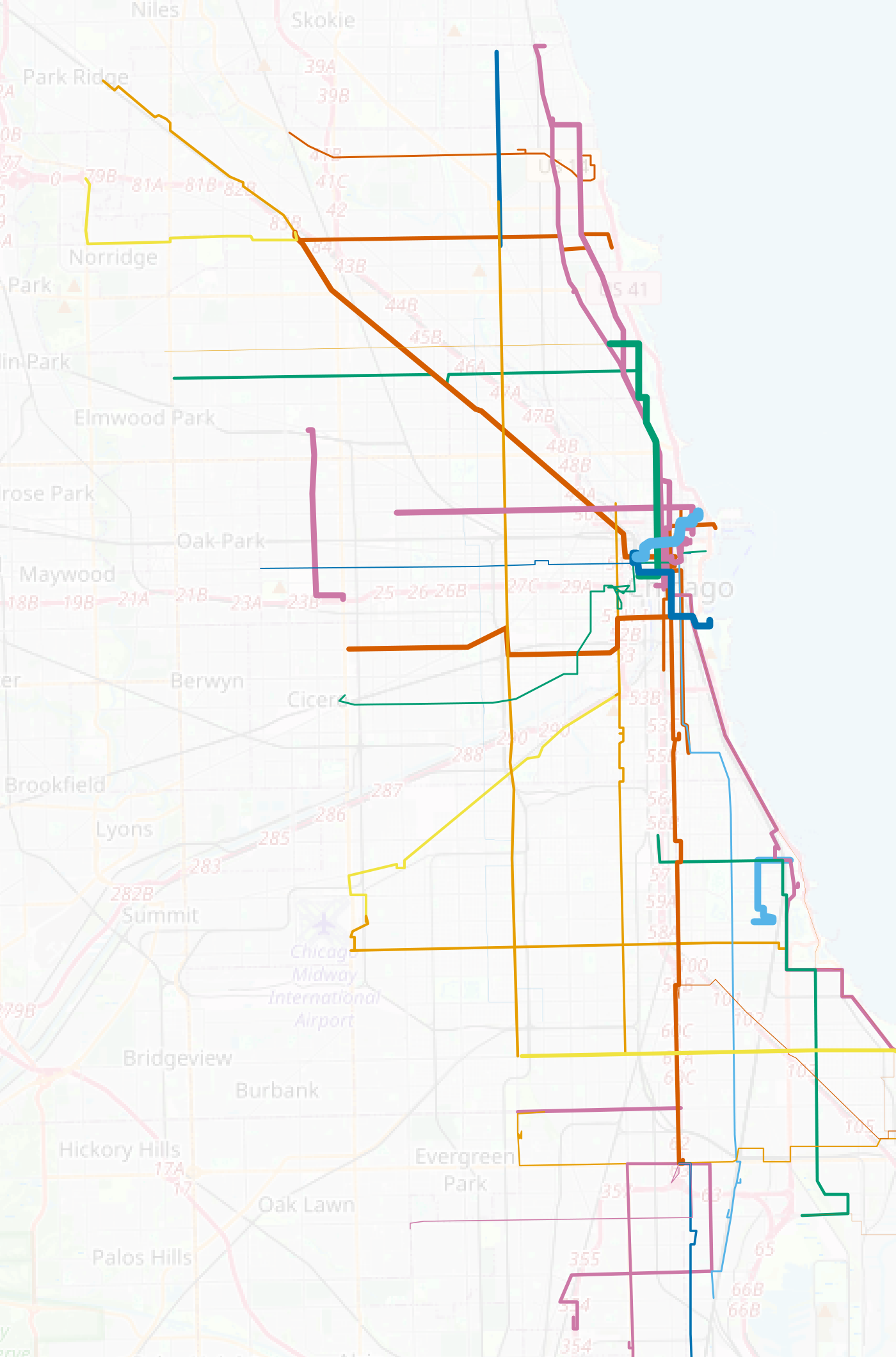}
        \caption{
            After IR constraints.
        }
        \label{fig: utilitarian ir service plan}
    \end{subfigure}
    \hfill
    \begin{subfigure}{0.5\linewidth}
        \centering
        \includegraphics[width=0.48\linewidth]{./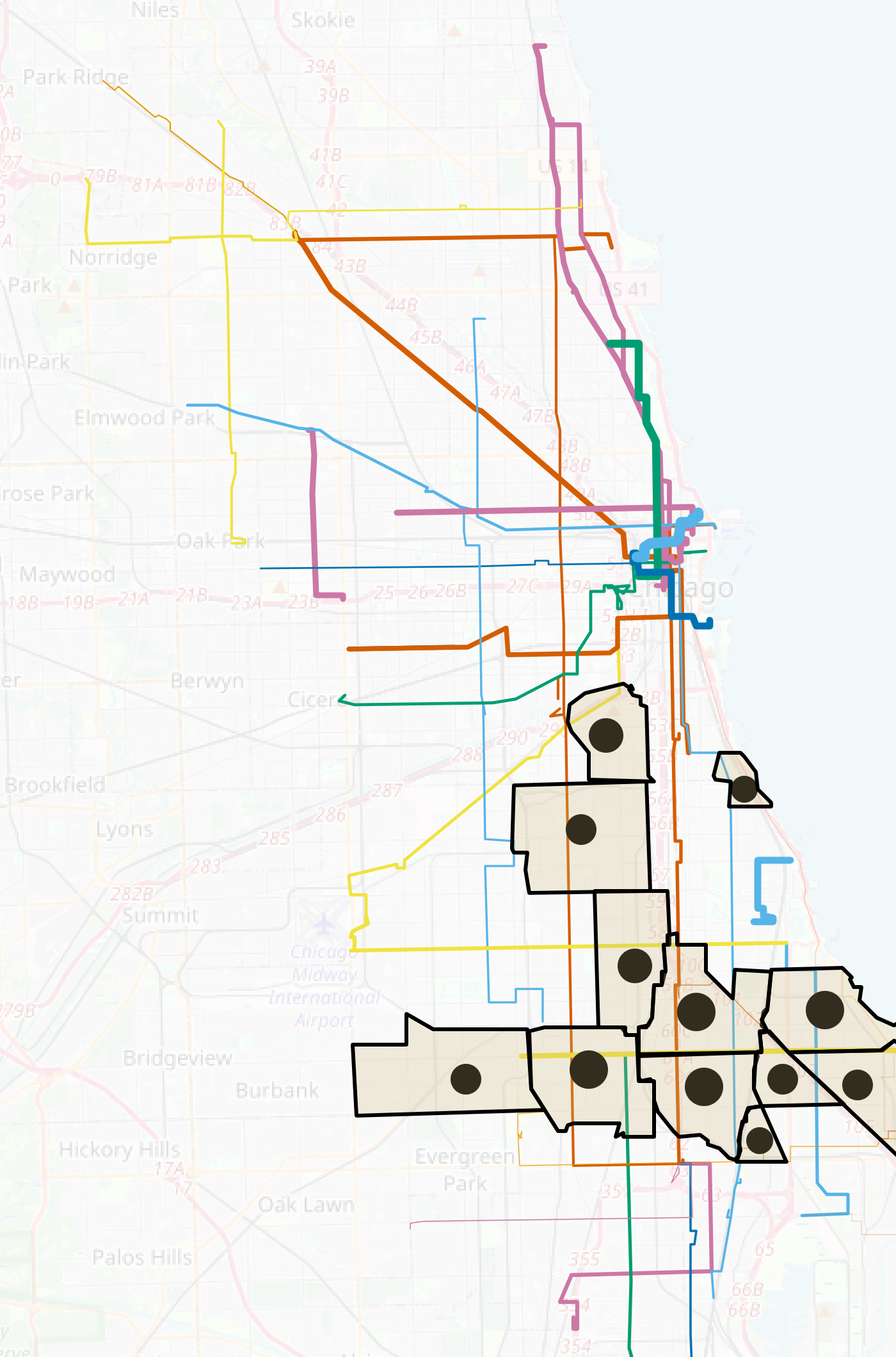}
        \includegraphics[width=0.48\linewidth]{./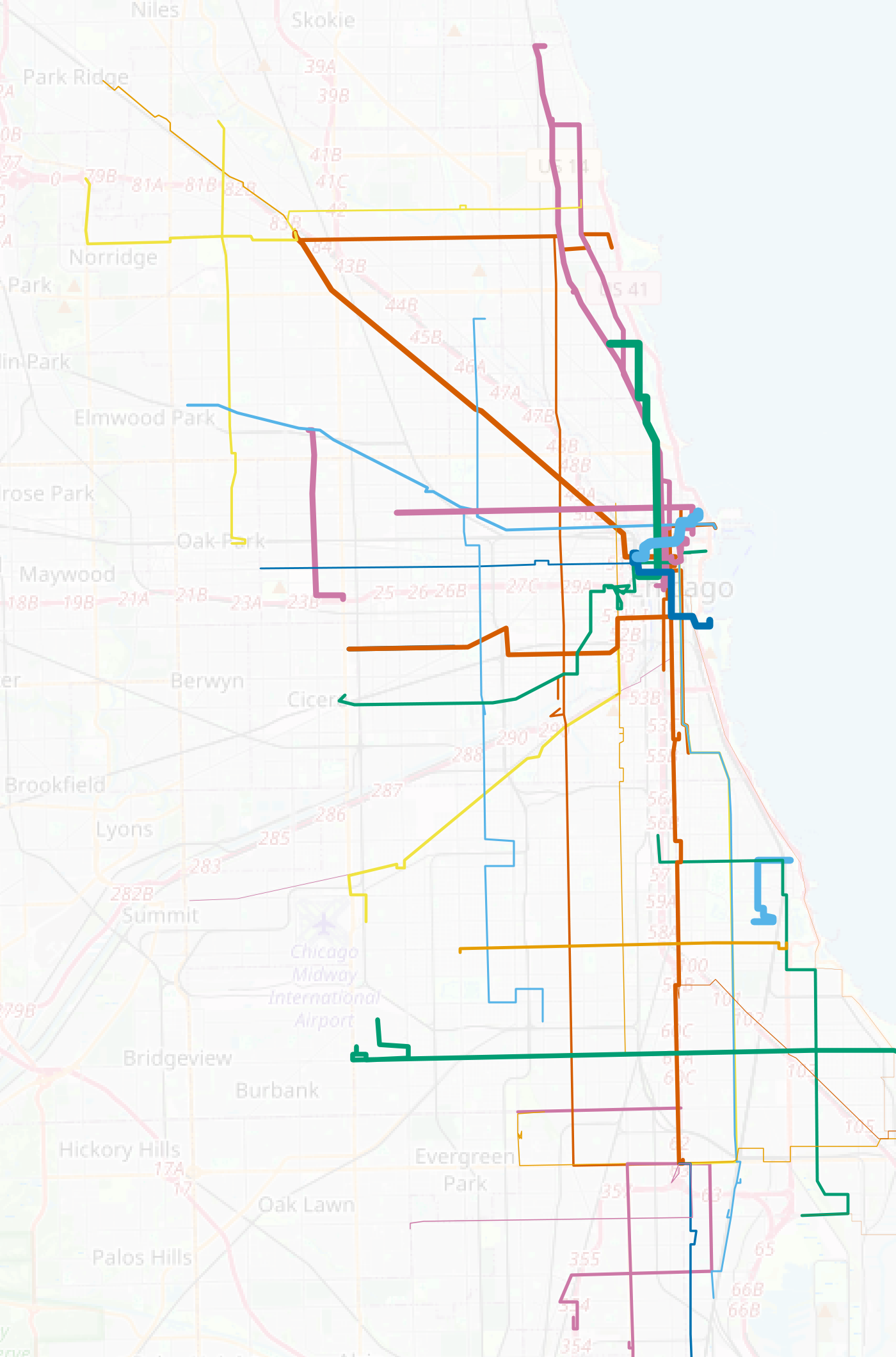}
        \caption{
            An iteration of the cutting plane algorithm.
        }
        \label{fig: utilitarian iteration service plan}
    \end{subfigure}
\end{figure}

We examine the distributive implications of regional cooperation in Figure~\ref{fig: distributive implications}, where we plot the individual rider utilities $\left\{\left(v^r\right)^Tx^*\right\}_{r \in R}$ given $x^* \in X(N)$ optimizing the maximin or utilitarian service goal, with or without cooperation constraints among the community areas.
Figure~\ref{fig: maximin distributive implications} shows that, under the maximin service goal, cooperation leads to a large utility gain for riders in the top utility quartile.
Conversely, Figure~\ref{fig: utilitarian distributive implications} shows that under the utilitarian service goal, cooperation leads to a large utility loss for riders in the top utility quartile. 
\begin{figure}[ht]
    \centering
    \caption{
        Distributive implications of cooperation under the maximin and utilitarian service goals.
        The utility distribution curves for the initial service plans (without cooperation) are shown in yellow, whereas the utility distribution curves for the final service plans (with cooperation) after $100$ iterations of Algorithm~\ref{alg: algorithm} are shown in blue.
    }
    \label{fig: distributive implications}
    \begin{subfigure}{0.495\linewidth}
        \centering
        \includegraphics[trim={0 1cm 0 10cm},clip,width=\linewidth]{./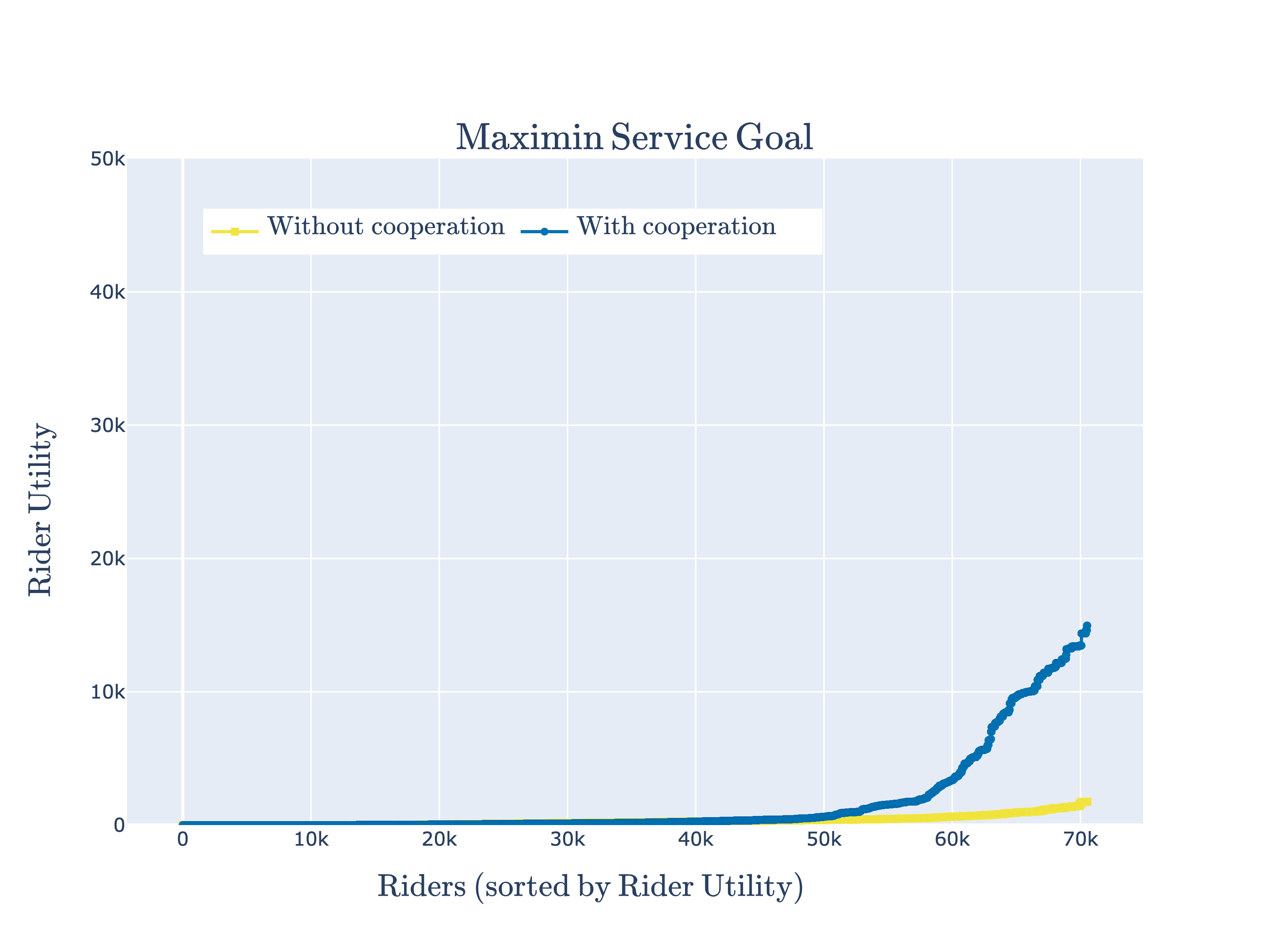}
        \caption{
            Maximin service goal.
        }
        \label{fig: maximin distributive implications}
    \end{subfigure}
    \hfill
    \begin{subfigure}{0.495\linewidth}
        \centering
        \includegraphics[trim={0 1cm 0 10cm},clip,width=\linewidth]{./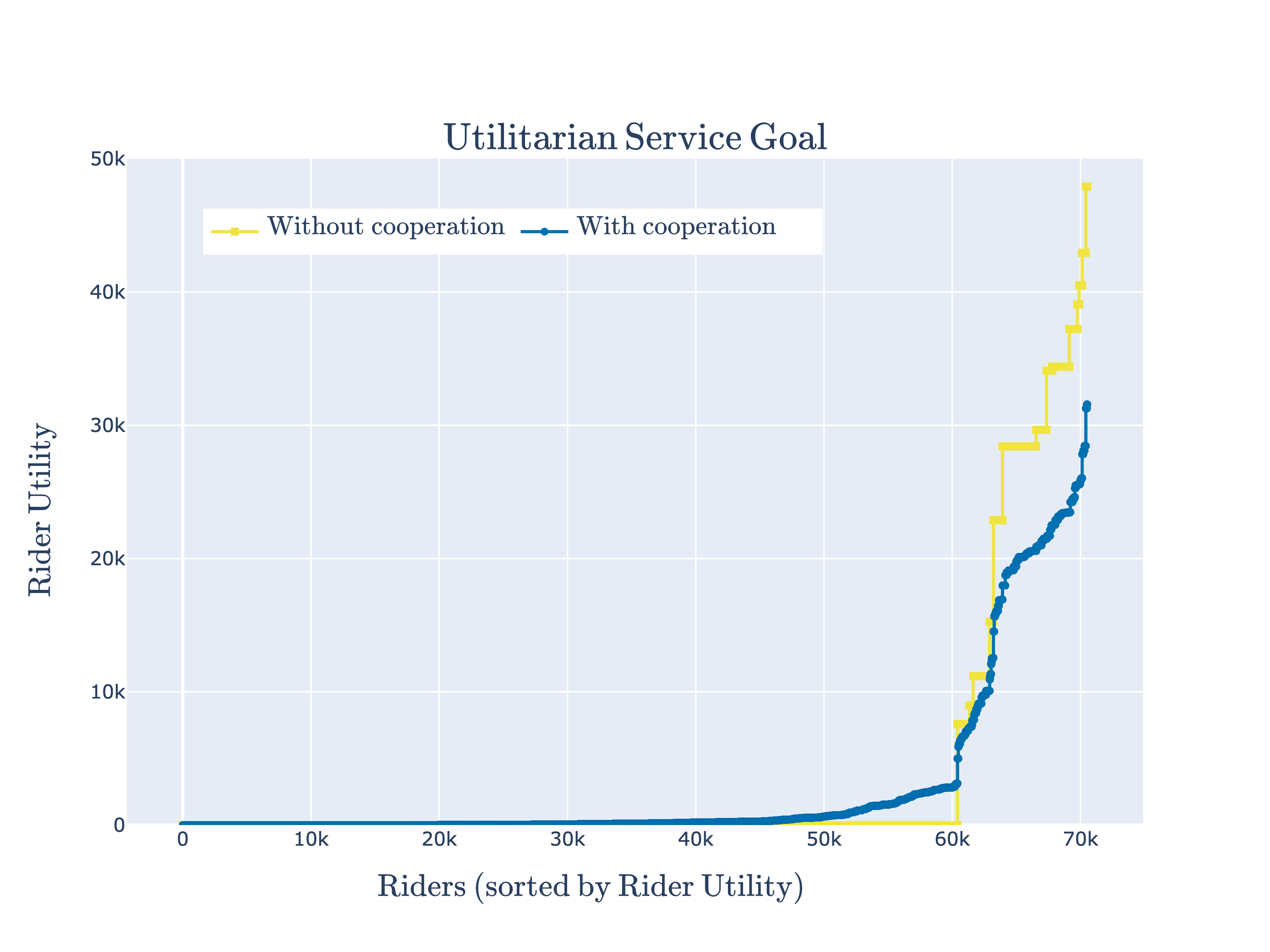}
        \caption{
            Utilitarian service goal.
        }
        \label{fig: utilitarian distributive implications}
    \end{subfigure}
\end{figure}

Figure~\ref{fig: distributive implications} again highlights a main limitation of our assumption of linear utilities \eqref{eq: u_i}: it shows a stark disparity in rider utilities, wherein riders in the top utility quantiles obtain utility that is orders of magnitude greater than the rest.
This is symptomatic of linear utilities' inability to capture diminishing returns that are natural in this and many other applications of public goods.

In Figure~\ref{fig: welfare} we further examine the social welfare implications of cooperation as Algorithm~\ref{alg: algorithm} progresses.
Naturally, the maximin social welfare is higher under the maximin service goal \eqref{eq: maximin} than it is under the utilitarian service goal \eqref{eq: utilitarian}, as shown in Figure~\ref{fig: maximin welfare}.
The counterpart effect on the utilitarian social welfare is shown in Figure~\ref{fig: utilitarian welfare}.
More importantly, the figures show that cooperation has a negative effect on the intended design goal, as shown by the decreasing maximin social welfare in Figure~\ref{fig: maximin welfare} and the decreasing utilitarian social welfare in Figure~\ref{fig: utilitarian welfare}.
However, we stress that this kind of (possibly counterintuitive) tradeoff is necessary for a successful deployment: neglecting cooperation risks coalitional opposition, as in the real-world examples from Section~\ref{sec: motivating example}.
In this sense, \eqref{eq: utilitarian} and \eqref{eq: maximin} without cooperation are not practically attainable benchmarks.
\begin{figure}[ht]
    \centering
    \caption{
        Social welfare implications of approaching a cooperative solution for the maximin and utilitarian service goals.
    }
    \label{fig: welfare}
    \begin{subfigure}{\linewidth}
        \centering
        \includegraphics[trim={0cm 1cm 2cm 2.5cm},clip,width=0.8\linewidth]{./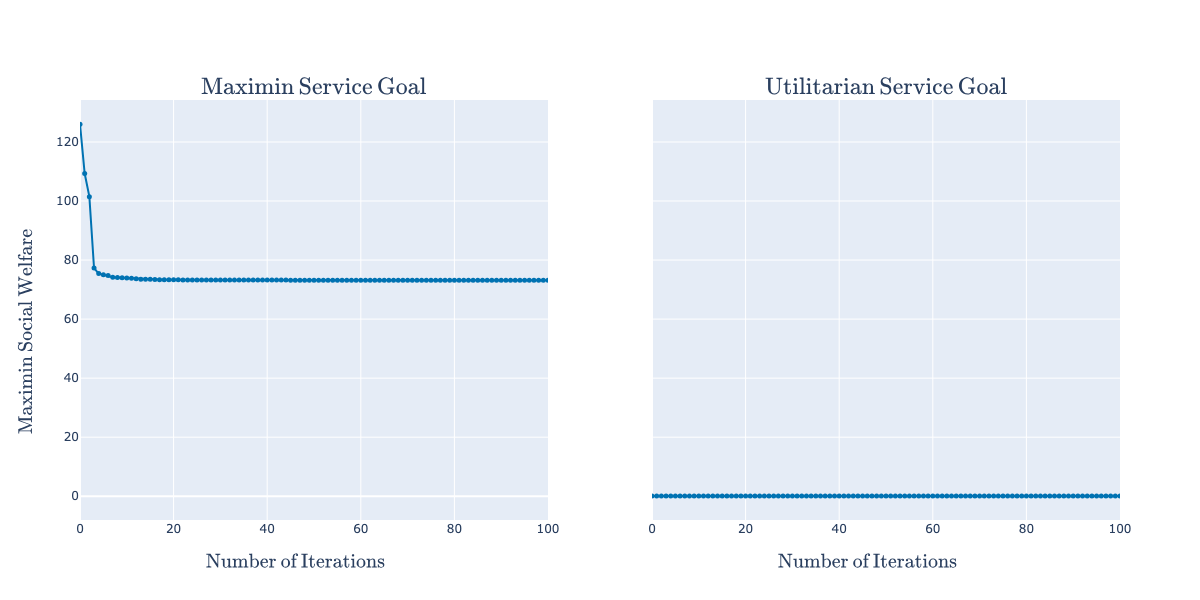}
        \caption{
            Effects on the maximin social welfare.
        }
        \label{fig: maximin welfare}
    \end{subfigure}
    \hfill
    \begin{subfigure}{\linewidth}
        \centering
        \includegraphics[trim={0cm 1cm 2cm 2.5cm},clip,width=0.8\linewidth]{./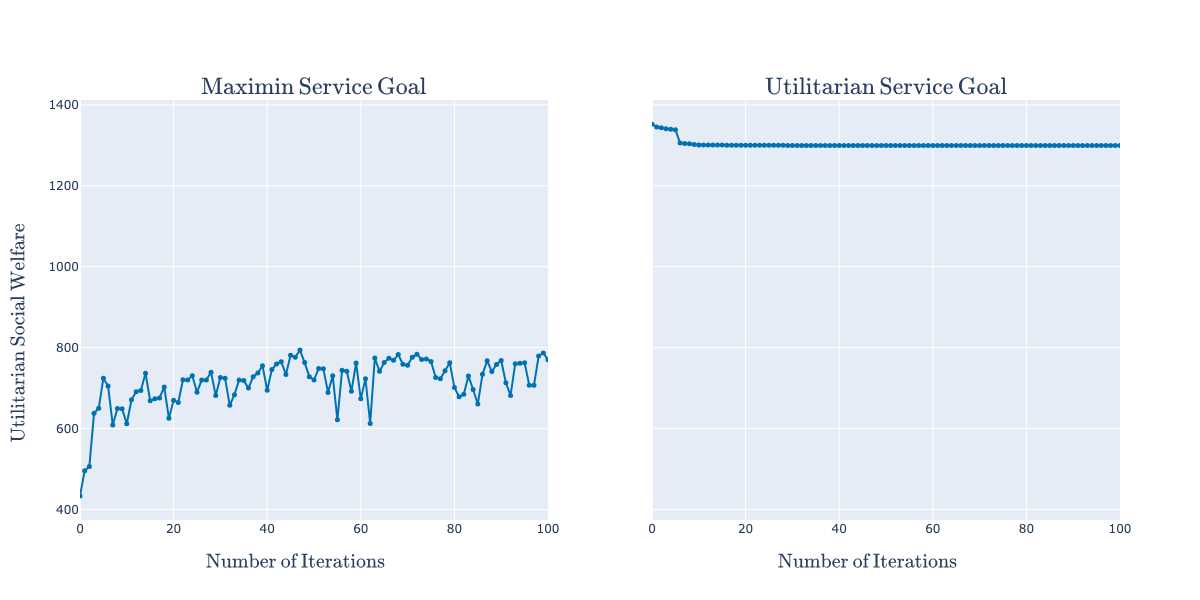}
        \caption{
            Effects on the utilitarian social welfare.
        }
        \label{fig: utilitarian welfare}
    \end{subfigure}
\end{figure}

In Figure~\ref{fig: eps} we consider whether the final service plans after $100$ iterations of our implementation of Algorithm~\ref{alg: algorithm} result in core utilities.
While the objective values in Figures~\ref{fig: maximin welfare} and \ref{fig: utilitarian welfare} appear to become static after this many iterations, a closer look into the multiplicative least objections $\epsilon(u^*)$ observed throughout the execution of the algorithm reveals this to be much more subtle.
For both the maximin and utilitarian service goals, the largest least objections obtained from \eqref{eq: membership} decrease sharply for the first $20$-$30$ iterations of the algorithm.
However, the solutions in subsequent iterations are not formally in the core, as the solver continues to find blocking coalitions with multiplicative objections in the neighborhood of $1.5$ for the maximin service goal and $2$ for the utilitarian service goal.
At this stage, the effectiveness of new cuts appears to stall.

\begin{figure}[ht]
    \centering
    \caption{
        Multiplicative least objections $\epsilon(u^*)$ observed throughout the execution of Algorithm~\ref{alg: algorithm}.
    }
    \label{fig: eps}
    \includegraphics[trim={0cm 1cm 2cm 2.5cm},clip,width=0.8\linewidth]{./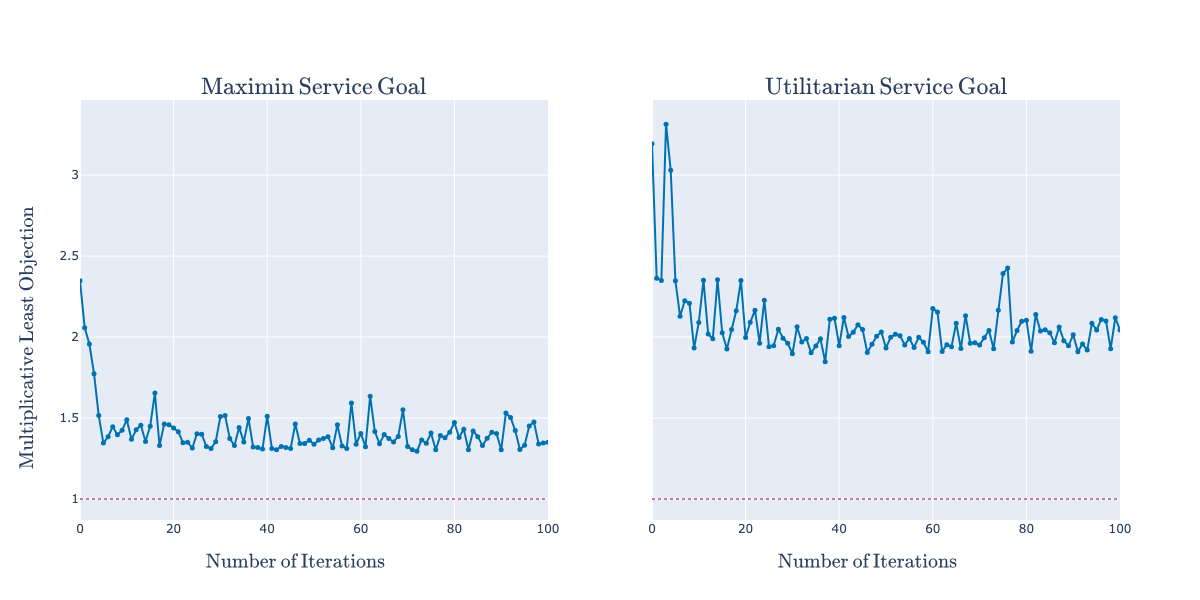}
\end{figure}
Computational convergence difficulties such as those in Figure~\ref{fig: eps} arise in pure cutting plane algorithms as cuts become increasingly parallel and numerically unstable.
In Figure~\ref{fig: kappa} in the Appendix we show that this is indeed the case for our implementation of Algorithm~\ref{alg: algorithm}, as reflected by the condition numbers of the basis matrices observed throughout its execution.
Since these matrices play a fundamental role in the generation of intersection cuts \eqref{eq: zeta_r}, these numerical challenges affect not only the linear programming solver, but crucially our practical ability to generate new cuts.

Lastly, we make some qualitative observations that link the data used to computational performance and the choice of service goal.
In our experiments, and as exemplified in Figures~\ref{fig: maximin iteration service plan} and \ref{fig: utilitarian iteration service plan}, the blocking coalitions under the maximin goal tended to involve northern community areas whereas those under the utilitarian goal tended to involve southern community areas.
This reflects the data used: as shown in Figure~\ref{fig: budgets}, there is a much greater concentration of ride-hailing trips in the north than in the south.
In turn, this reflects pre-existing spatial disparities in Cook County, as ride-hailing users tend to be wealthy and young \citep{young2019and}.
Thus, the utilitarian goal tends to favor the largest concentrations of wealth in the north at the expense of poorer areas in the south; the converse effect holds under the maximin goal.
And as suggested by Figure~\ref{fig: eps}, the areas in the south have relatively more to lose from an utilitarian plan than the areas in the north have to lose from a maximin plan: the former achieve multiplicative least objections around $2$ whereas the latter achieve multiplicative least objections around $1.5$.
Meanwhile, as suggested by Figure~\ref{fig: eps mip gap} in the Appendix, those who benefit from a maximin plan appear to have a harder (computational) time forming blocking coalitions against an utilitarian plan compared to the other way around: the least objections obtained under the maximin goal are optimal, whereas the least objections obtained under the utilitarian goal have an optimality gap that fluctuates between $0-30\%$ at timeout.
These observations point to broader questions about power dynamics in public decision-making environments: which constituent groups tend to benefit more in both absolute and relative terms, how readily can different constituent groups get organized to form blocking coalitions, and how the answers to these questions depend on the public decision-maker's design goal.

\section{Conclusion}
\label{sec: conclusion}

In this work we introduced NTU LP games, which combine the game-theoretic tensions inherent in public decision-making with the modeling flexibility of linear programming.
We derived structural properties regarding the non-emptiness, representability and complexity of the core, a solution concept that models the viability of cooperation.
We also developed and implemented a cutting plane algorithm to optimize linear functions over the core, and illustrated the potentially counterintuitive consequences of cooperation through a data-driven application in public transit.

Cooperation is necessary for any successful real-world implementation of public decision-making: neglecting it risks failures such as those highlighted in Section~\ref{sec: motivating example}.
To bridge this gap, we argue that standard frameworks of mathematical modeling and optimization must be extended to account for the second-order, game-theoretic effects that stem from public reception.
Understanding these elements is crucial to achieving cooperative outcomes that mitigate adverse consequences\textemdash we treat this work as an initial step in this direction.

Our results motivate several future avenues of research.
The main algorithmic challenges are efficient methods to solve and/or verify optimality for the membership test \eqref{eq: membership}, as well as scaling up Algorithm~\ref{alg: algorithm}.
Given the numerical challenges associated with pure cutting plane methods \citep{zanette2011lexicography}, fundamentally different algorithms such as interior point or polyhedral walk-based methods could be viable alternatives.
For any of these approaches, it would be desirable to have simple mechanisms to produce a core allocation that can serve as a suboptimal but feasible start to an optimization procedure over the core (e.g., a successful analogue to Example~\ref{example: synergy}).
Here we draw a connection to the computation of Lindahl equilibria in the context of participatory budgeting problems: for certain instances, these lie in the core \citep{foley1970lindahl} and can be computed efficiently \citep{fain2016core,kroer2025computing}.
Participatory budgeting problems can be cast as a special class of NTU LP games with a single resource, an all-ones production matrix, non-negative resource endowments, and concave utilities.
Therefore, it is natural to ask for core-stable, efficiently-computable generalizations of Lindahl equilibria in the full generality of NTU LP.

It would be similarly important to extend the framework of NTU production games to more expressive classes of mathematical programs.
For example, much of the work on the utility derived from transportation accessibility emphasizes the notion of ``sufficient accessibility'' \citep{martens2012justice}, which suggests the presence of diminishing returns that cannot be captured by linear utilities, as discussed in Section~\ref{sec: operational and distributive implications}.
Moreover, many design problems in public transit systems involve non-convex structures or discrete design variables captured via mixed-integer programming, such as line planning~\citep{borndorfer2007column}.
Therefore, NTU production games with mixed-integer variables and/or with non-linear valuations are natural extensions.
This kind of game-theoretic framework could also be instrumental to deriving domain-specific policy insights, such as the interplay between the cooperative and equity implications of flat versus differentiated fare schemes in public transit systems~\citep{cervero1981flat,borndorfer2015fair}.

Lastly, we recall Algorithm~\ref{alg: algorithm} may conversely certify the emptiness of the core by producing an infeasible system of linear inequalities (ref. Example~\ref{ex: mis}).
From a conceptual point of view, and barring inherent complexity challenges \citep{amaldi2003maximum}, the resulting irreducible infeasible subsystems may direct attention to the fundamental sources of conflict and therefore carve a path forward to their creative resolution.
Specifically, the proposed theoretical framework assumes a fixed set of players, goods, and resources.
However, in practice these elements may be flexible as part of an integrative negotiation process: new players may be brought in, alternative goods may be proposed, and latent resources may be pointed out.
In this sense, Algorithm~\ref{alg: algorithm} may help diagnose any sets of conflicting elements and in turn help derive interventions to resolve them.

\section*{Code and Data Disclosure}\label{sec:Code and Data Disclosure}
Our source code can be found online at \texttt{github.com/jcmartinezmori/ntulp\_v2}.

\section*{Acknowledgments}
We would like to thank the editorial team and the anonymous referees, whose suggestions helped us significantly improve our manuscript.
J.C.\ Mart\'inez Mori is supported by Schmidt Science Fellows, in partnership with the Rhodes Trust.
Part of this research was performed while J.C.\ Mart\'inez Mori was affiliated with the Georgia Institute of Technology as part of the President's Postdoctoral Fellowship Program.
Part of this research was performed while J.C.\ Mart\'inez Mori was visiting the Mathematical Sciences Research Institute (MSRI), now the Simons Laufer Mathematical Sciences Institute (SLMath), which is supported by NSF Grant DMS-192893.

\bibliographystyle{plainnat}
\bibliography{bib.bib}

\newpage
\section*{Appendix: Additional Proofs and Figures}

\subsection*{Proof of Lemma~\ref{lemma: difference}}
\begin{proof}{Proof}
Consider any $x \in P$.
First, suppose $\interior{Q} \neq \emptyset$.
Then, there are two possibilities.
If $x \in \interior{Q}$, then $a_i^T x < b_i$ for all $i \in [m]$.
Therefore, there is no $i \in [m]$ for which $x \in P \cap \{x' \in \mathbb{R}^n : a_i^Tx' \geq b_i\}$, so that $x$ does not appear in the set union.
On the other hand, if $x \notin \interior{Q}$, then $a_i^T x \geq b_i$ for some $i \in [m]$.
Therefore, there is some $i \in [m]$ for which $x \in P \cap \{x' \in \mathbb{R}^n : a_i^Tx' \geq b_i\}$, so that $x$ appears in the set union.
Conversely, suppose $\interior{Q} = \emptyset$.
Then, $Q$ is not full-dimensional, so that $\bigcup_{i = 1}^m P \cap \{x \in \mathbb{R}^n : a_i^Tx \geq b_i\} = P \cap \left(\bigcup_{i = 1}^m \{x \in \mathbb{R}^n : a_i^Tx \geq b_i\}  \right) = P \cap \mathbb{R}^n = P = P \setminus \emptyset = P \setminus \interior{Q}$.
\end{proof}

\subsection*{Proof of Lemma~\ref{lemma: hardness structural lemma 1}}
\begin{proof}{Proof}
The diagonal structure of $v_j^t$ for $t, j \in T$ together with the fact that $m \cdot v_{m+1}^t =\frac{m}{3n+m} \leq 1$ for all $t \in T$ imply that $S \cap H \neq \emptyset$.
Therefore, consider any $h \in S \cap H$.
Note that
\begin{equation*}
    \underbrace{\frac{1}{4n-1}}_{v_t^h \text{ if } h \in t} > \underbrace{\frac{1}{4n} \left(1 - \frac{1}{2(n-1)(4n-1)} \right)}_{v_t^h \text{ if } h \notin t} > \underbrace{\frac{1}{3n+m}}_{v_{m+1}^h}.
\end{equation*}
Therefore, for $h$ to form part of a blocking coalition against $u^*$, it must be that $b(S) > 4n - 1$.
Since $b(S) = |S|$, and since $|S|$ and $4n - 1$ are integers, this implies that $|S| \geq 4n$.
Together with the facts that $|S| = |S \cap H| + |S \cap T|$ and $|S \cap H| \leq 3n$, this further implies that 
\begin{equation}
\label{eq: t' lb}
   |S \cap T| \geq n.
\end{equation}
Now, consider the collective utility of $S \cap T$.
We distinguish two mutually exclusive possibilities:
\begin{itemize}
    \item
    Collectively, the members of $S \cap T$ gain strictly more utility if $b(S)$ is allocated to good $m + 1$ than they do if $b(S)$ is allocated to goods $S \cap T$.
    That is, $b(S) \cdot \frac{1}{3n + m} \cdot |S \cap T| > b(S) \cdot \frac{n}{4n-1}$.
    Then, for the members of $S \cap T$ to form part of a blocking coalition against $u^*$, it must be that
    $b(S) \cdot \frac{1}{3n + m} \cdot |S \cap T| > |S \cap T|$.
    This implies $b(S) > 3n + m$, so that $|S| > 3n + m$.
    This contradicts the fact that $|S| \leq |N| = 3n + m$, so this possibility cannot be.
    \item
    Collectively, the members of $S \cap T$ gain no more utility if $b(S)$ is allocated to good $m + 1$ than they do if $b(S)$ is allocated to goods $S \cap T$.
    That is, $b(S) \cdot \frac{1}{3n + m} \cdot |S \cap T| \leq b(S) \cdot \frac{n}{4n-1}$.
    Then, for the members of $S \cap T$ to form part of a blocking coalition against $u^*$, it must be that $b(S) \cdot \frac{n}{4n-1}  > |S \cap T|$, which implies $|S| \cdot \frac{n}{4n-1}  > |S \cap T|$.
    Together with the facts that $|S| = |S \cap H| + |S \cap T|$ and $|S \cap H| \leq 3n$, this further implies that $\left(3n + |S \cap T|\right) \cdot \frac{n}{4n-1}  > |S \cap T|$, which upon rearranging becomes $n + \frac{n}{3n - 1} > |S \cap T|$.
    Since $n$ and $|S \cap T|$ are integers and $\frac{n}{3n - 1} < 1$, this in turn implies that $n \geq |S \cap T|$.
\end{itemize}
Together with \eqref{eq: t' lb}, this shows that $|S \cap T| = n$.
Since $|S \cap H| \leq 3n$ and we require $|S| \geq 4n$, this further implies that $|S \cap H| = 3n$.
\end{proof}

\subsection*{Proof of Lemma~\ref{lemma: hardness structural lemma 2}}
\begin{proof}{Proof}
Lemma~\ref{lemma: hardness structural lemma 1} implies that $|S| = 4n$ and $|S \cap T| = n$.
In particular, $b(S) = |S| = 4n$.
By way of contradiction, suppose there exists some member $h \in S \cap H$ such that $h \notin t$ for all $t \in S \cap T$.
If all of $b(S)$ were allocated among goods $S \cap T$, then the utility of $h$ would be given by
\begin{align}
\label{eq: nodes baseline}
    \underbrace{4n}_{b(S)} \cdot & \underbrace{\left(\frac{1}{4n} \left(1 - \frac{1}{2(n-1)(4n-1)} \right)\right)}_{v_t^h \text{ if } h \notin t} \nonumber \\
    &= 1 - \frac{1}{2(n-1)(4n-1)}.
\end{align}
However, since $h \in S \cap H$ forms part of a blocking coalition against $u^*$, it must be that the utility of $h$ exceeds \eqref{eq: nodes baseline} by strictly more than
\begin{equation}
\label{eq: nodes threshold}
    \frac{1}{2(n-1)(4n-1)}
\end{equation}
units of utility.
Next, note that
\begin{equation*}
    \underbrace{\frac{1}{4n-1}}_{v_t^h \text{ if } h \in t}  > \underbrace{\left(\frac{1}{4n} \left(1 - \frac{1}{2(n-1)(4n-1)} \right)\right)}_{v_t^h \text{ if } h \notin t} > \underbrace{\frac{1}{3n + m}}_{v_{m+1}^h}.
\end{equation*}
Therefore, given the assumption that $h \notin t$ for all $t \in S \cap T$, it must be that part of $b(S)$ is allocated to goods $t \in T \setminus S$ with $h \in t$ rather than goods $S \cap T$.
For each such unit of budget, $h$ gains
\begin{align}
\label{eq: nodes rate}
    \underbrace{\frac{1}{4n-1}}_{v_t^h \text{ if } h \in t} - & \underbrace{\left(\frac{1}{4n} \left(1 - \frac{1}{2(n-1)(4n-1)} \right)\right)}_{v_t^h \text{ if } h \notin t} \nonumber \\
    &= \frac{2n - 1}{8(n-1)n(4n-1)}
\end{align}
units of utility with respect to \eqref{eq: nodes baseline}.
Therefore, given \eqref{eq: nodes threshold} and \eqref{eq: nodes rate}, it must be that strictly more than
\begin{equation}
\label{eq: nodes contradiction}
    \frac{\frac{1}{2(n-1)(4n-1)}}{\frac{2n - 1}{8(n-1)n(4n-1)}} 
    = \frac{4n}{2n-1}
\end{equation}
units of $b(S)$ are allocated to goods $t \in T \setminus S$ with $h \in t$.

Now, consider the members of $S \cap T$.
We similarly establish a baseline for their collective utility.
If all of $b(S)$ were allocated to goods $S \cap T$, then their collective utility would be given by
\begin{equation}
\label{eq: edges baseline}
    \underbrace{4n}_{b(S)} \cdot \underbrace{\frac{n}{4n - 1}}_{v_t^t \text{ if } t \in S \cap T} = \frac{4n^2}{4n - 1}.
\end{equation}
On the other hand, since the members of $S \cap T$ form part of a blocking coalition against $u^*$, it must be that their collective utility strictly exceeds $|S \cap T| = n$.
Therefore, it must be that the members of $S \cap T$ collectively lose strictly less than
\begin{equation}
\label{eq: edges threshold}
    \frac{4n^2}{4n-1} - n = \frac{n}{4n-1}
\end{equation}
units of utility with respect \eqref{eq: edges baseline}.
Next, note that for each unit of budget allocated to goods $t \in T \setminus S$ with $h \in t$ rather than goods $S \cap T$, the members of $S \cap T$ collectively lose
\begin{equation}
\label{eq: edges rate}
    \underbrace{\frac{n}{4n-1}}_{v_t^t \text{ if } t \in S \cap T}  - \underbrace{0}_{v_{t'}^t \text{ if } t \in S \cap T, t' \in T \setminus S} = \frac{n}{4n-1}
\end{equation}
units of utility with respect to \eqref{eq: edges baseline}.
Therefore, given \eqref{eq: edges threshold} and \eqref{eq: edges rate}, it must be that strictly less than
\begin{equation}
\label{eq: edges contradiction}
    \frac{\frac{n}{4n-1}}{\frac{n}{4n - 1}} = 1
\end{equation}
unit from $b(S)$ is allocated to goods $T \setminus S$ with $h \in t$.
Finally, the budget allocation requirements in \eqref{eq: nodes contradiction} and \eqref{eq: edges contradiction} are in contradiction since $1 < \frac{4n}{2n - 1}$ for all $n \in \mathbb{N}$.
\end{proof}

\subsection*{Proof of Lemma~\ref{lemma: equivalence}}
\begin{proof}{Proof}
In this proof we consider \eqref{eq: C(N)}, as well as \eqref{eq: C'(N)} in its set difference form.

First, let $u^* \in C(N)$.
Then, $u^* \in U(N) = \projection{u}{Z(N)}$, so that there exists $x^* \in \R^J$ such that $(x^*, u^*) \in Z(N)$.
Moreover, since $u^* \in C(N)$, we have that $u^* \notin \interior{U(S)}$ for all $\emptyset \neq S \subseteq N$.
By Lemma~\ref{lemma: U - U'}, this implies that $(x^*, u^*) \notin \interior{U'(S)}$ for all $\emptyset \neq S \subseteq N$.
This shows that $(x^*, u^*) \in C'(N)$, so that $u^* \in \projection{u}{C'(N)}$

Conversely, let $u^* \in \projection{u}{C'(N)}$.
Then, there exists $x^* \in \R^J$ such that $(x^*, u^*) \in C'(N)$.
This implies that $(x^*, u^*) \in Z(N)$, so that $u^* \in \projection{u}{Z(N)} = U(N)$.
Moreover, since $(x^*, u^*) \in C'(N)$, we have that $(x^*, u^*) \notin \interior{U'(S)}$ for all $\emptyset \neq S \subseteq N$.
By Lemma~\ref{lemma: U - U'}, this implies that $u^* \notin \interior{U(S)}$ for all $\emptyset \neq S \subseteq N$.
This shows that $u^* \in C(N)$.
\end{proof}

\subsection*{Proof of Lemma~\ref{lemma: conv(P'-int(U'(S))}}
\begin{proof}{Proof}
Consider any $\emptyset \neq S \subseteq N$ and any extreme point $(x^*, u^*) \in P'$ satisfying $(x^*, u^*) \in \interior{U'(S)}$.

By way of contradiction, suppose $(x^*, u^*) \in \conv{P' \setminus \interior{U'(S)}}$.
Then, there exist $1 \leq \ell \leq J + N$ distinct points $(x^1, u^1), (x^2, u^2), \ldots, (x^\ell, u^\ell) \in P' \setminus \interior{U'(S)}$ and positive weights $\lambda^1, \lambda^2, \ldots, \lambda^\ell > 0$ such that $\sum_{w = 1}^\ell \lambda^w = 1$ and $(x^*, u^*) = \sum_{w = 1}^\ell \lambda^w (x^w, u^w)$.
If $\ell = 1$, then $(x^*, u^*) = (x^1, u^1) \in P' \setminus \interior{U'(S)}$.
This contradicts the assumption that $(x^*, u^*) \in \interior{U'(S)}$.
Therefore, suppose $\ell > 1$.
Note that $(x^1, u^1), (x^2, u^2), \ldots, (x^\ell, u^\ell) \in P' \setminus \interior{U'(S)}$ implies $(x^1, u^1), (x^2, u^2), \ldots, (x^\ell, u^\ell) \in P'$.
In particular, there exist $1 < \ell \leq J + N$ distinct points $(x^1, u^1), (x^2, u^2), \ldots, (x^\ell, u^\ell) \in P'$ and positive weights $\lambda^1, \lambda^2, \ldots, \lambda^\ell > 0$ such that $\sum_{w = 1}^\ell \lambda^w = 1$ and $(x^*, u^*) = \sum_{w = 1}^\ell \lambda^w (x^w, u^w)$.
This contradicts the assumption that $(x^*, u^*) \in P'$ is an extreme point.
\end{proof}

\subsection*{Proof of Lemma~\ref{lemma: valid}}
\begin{proof}{Proof}
Note that
\begin{align*}
    \conv{C'(N)} 
    &= \conv{\bigcap_{\emptyset \neq S \subseteq N}  P' \setminus\interior{U'(S)}} \\
    &\subseteq \bigcap_{\emptyset \neq S \subseteq N} \conv{P' \setminus\interior{U'(S)}}.
\end{align*}
Therefore, for any $\emptyset \neq S \subseteq N$, a valid inequality for $\conv{P' \setminus \interior{U'(S)}}$ is itself a valid inequality for $\conv{C'(N)}$.
\end{proof}

\subsection*{Additional Figures for Section~\ref{sec: frequency setting in the chicago bus system}}
\begin{figure}[ht]
    \centering
    \caption{
        Basis condition numbers observed throughout the execution of Algorithm~\ref{alg: algorithm}.
    }
    \label{fig: kappa}
    \includegraphics[trim={0cm 0cm 2cm 2.5cm},clip,width=0.825\linewidth]{./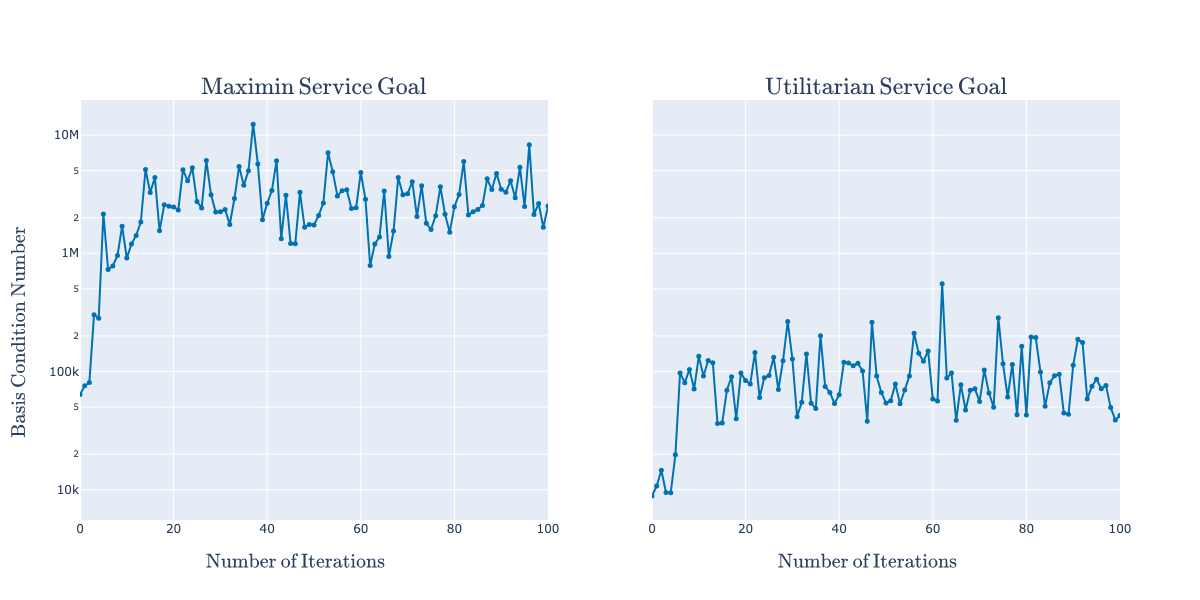}
\end{figure}

\begin{figure}[ht]
    \centering
    \caption{
        Optimality gap at 5 minute timeout of multiplicative least objections observed throughout the execution of Algorithm~\ref{alg: algorithm}.
        We ran our experiments on a \texttt{MacBook Air} with an \texttt{Apple M2} chip and 8 GB of memory.
    }
    \label{fig: eps mip gap}
    \includegraphics[trim={0cm 0cm 2cm 2.5cm},clip,width=0.825\linewidth]{./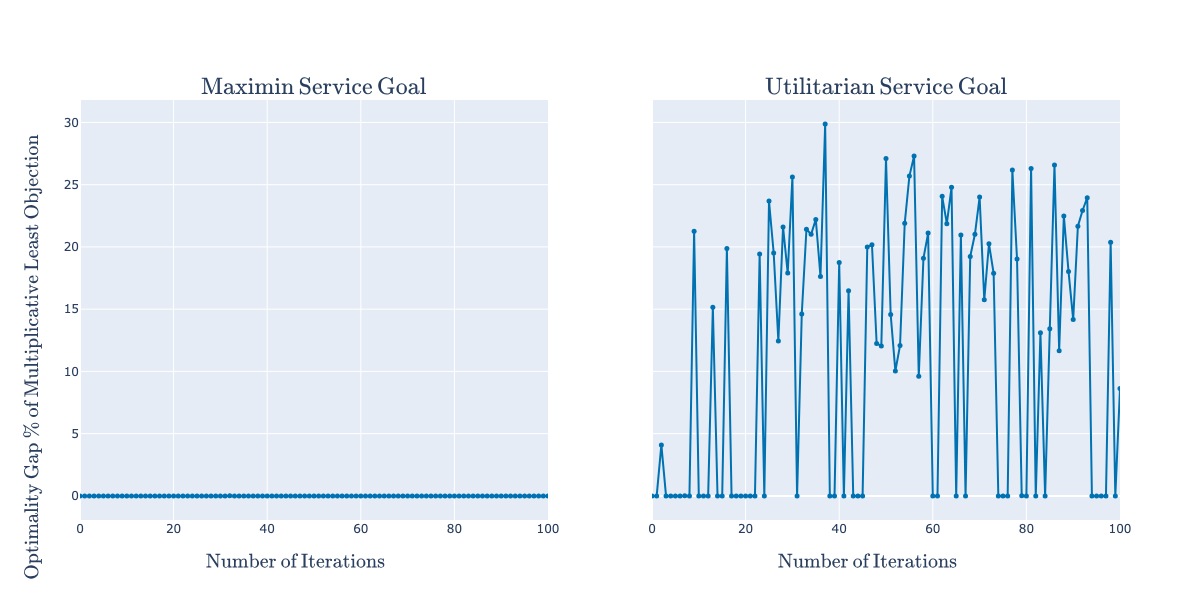}
\end{figure}
\begin{figure}[ht]
    \centering
    \caption{
    Elapsed time of the execution of Algorithm~\ref{alg: algorithm} under a 5 minute timeout for the multiplicative least objection formulation.
    We ran our experiments on a \texttt{MacBook Air} with an \texttt{Apple M2} chip and 8 GB of memory.
    As evidenced in Figure~\ref{fig: eps mip gap}, under the maximin service goal, the multiplicative least objection formulation is always solved to optimality within the timeout.
    This is in turn reflected as a much lower elapsed time under the maximin service goal than under the utilitarian service goal for the same number of iterations.
    }
    \label{fig: tot time}
    \includegraphics[trim={0cm 0cm 2cm 2.5cm},clip,width=0.825\linewidth]{./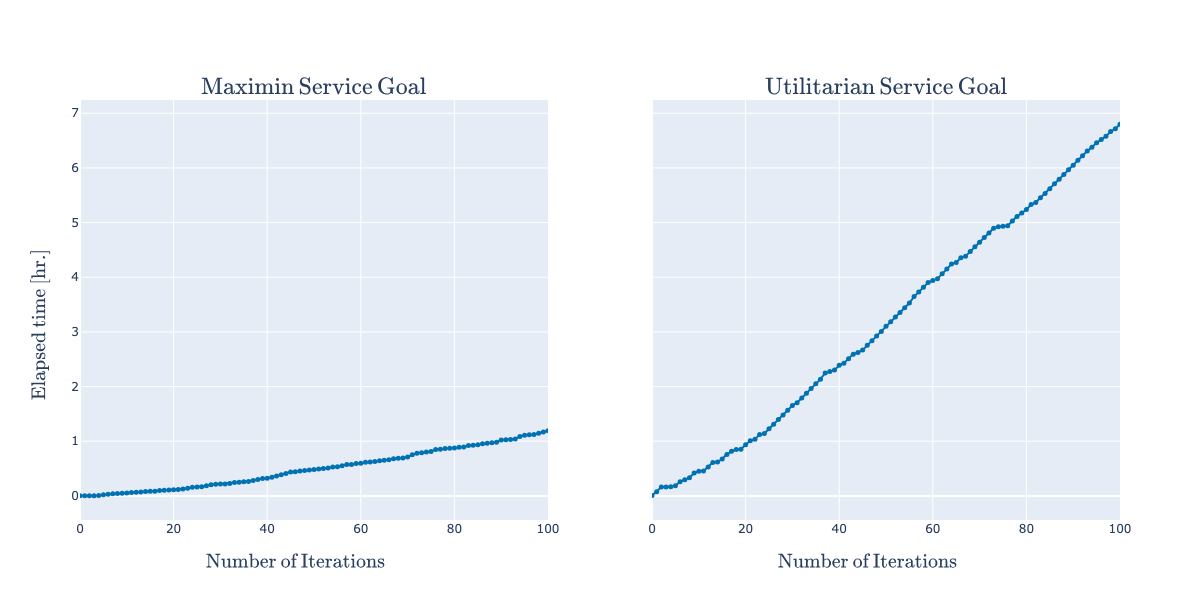}
\end{figure}

\end{document}